\documentclass{article}

\usepackage{microtype}
\usepackage{graphicx}
\usepackage{subcaption}
\usepackage{booktabs} % for professional tables
\usepackage{multirow}
\usepackage{multicol}
\usepackage[table]{xcolor}
\usepackage{wrapfig} 

\usepackage{hyperref}

\usepackage[accepted]{icml2026}

\usepackage{amsmath}
\usepackage{amssymb}
\usepackage{mathtools}
\usepackage{amsthm}

\usepackage[capitalize,noabbrev]{cleveref}

\theoremstyle{plain}
\newtheorem{theorem}{Theorem}[section]
\newtheorem{proposition}[theorem]{Proposition}
\newtheorem{lemma}[theorem]{Lemma}
\newtheorem{corollary}[theorem]{Corollary}
\theoremstyle{definition}

\newtheorem{assumption}[theorem]{Assumption}
\theoremstyle{remark}
\newtheorem{remark}[theorem]{Remark}

\definecolor{Lbest}{HTML}{F4D48C}
\definecolor{Lsecond}{HTML}{F9E7BF}
\definecolor{Dbest}{HTML}{8CC0DC}
\definecolor{Dsecond}{HTML}{BFDCEC}

\usepackage[textsize=tiny]{todonotes}

\icmltitlerunning{Cross-Chirality Generalization by Axial Vectors for Hetero-Chiral Protein-Peptide Interaction Design}

\begin{document}

\twocolumn[
  \icmltitle{Cross-Chirality Generalization by Axial Vectors \\ for Hetero-Chiral Protein-Peptide Interaction Design}

  % It is OKAY to include author information, even for blind submissions: the
  % style file will automatically remove it for you unless you've provided
  % the [accepted] option to the icml2026 package.

  % List of affiliations: The first argument should be a (short) identifier you
  % will use later to specify author affiliations Academic affiliations
  % should list Department, University, City, Region, Country Industry
  % affiliations should list Company, City, Region, Country

  % You can specify symbols, otherwise they are numbered in order. Ideally, you
  % should not use this facility. Affiliations will be numbered in order of
  % appearance and this is the preferred way.
  \icmlsetsymbol{equal}{*}

  \begin{icmlauthorlist}
    \icmlauthor{Ziyi Yang}{equal,chem,anew}
    \icmlauthor{Zitong Tian}{equal,math1}
    \icmlauthor{Yinjun Jia}{equal,air}
    \icmlauthor{Tianyi Zhang}{life,wuxi}
    \icmlauthor{Jiqing Zheng}{chem}
    \icmlauthor{Hao Wang}{chem}
    \icmlauthor{Yubu Su}{chem}
    \icmlauthor{Juncai He}{math1,math2}
    \icmlauthor{Lei Liu}{chem,liu1,liu2,liu3}
    \icmlauthor{Yanyan Lan}{air,lan1}
  \end{icmlauthorlist}

  \icmlaffiliation{chem}{Department of Chemistry, Tsinghua University, Beijing, China}
  \icmlaffiliation{life}{School of Life Sciences, Tsinghua University, Beijing, China}
  \icmlaffiliation{anew}{Anew Labs, Shanghai, China}
  \icmlaffiliation{liu1}{Tsinghua-Peking Center for Life Sciences, Beijing, China}
  \icmlaffiliation{liu2}{Ministry of Education Key Laboratory of Bioorganic Phosphorus Chemistry and Chemical Biology, Tsinghua University, Beijing, China}
  \icmlaffiliation{liu3}{Center for Synthetic and Systems Biology, Tsinghua University, Beijing, China}
  \icmlaffiliation{lan1}{Beijing Frontier Research Center for Biological Structure, Tsinghua University, Beijing, China}
  \icmlaffiliation{math1}{Qiuzhen College, Tsinghua University, Beijing, China}
  \icmlaffiliation{math2}{Yau Mathematical Sciences Center, Tsinghua University, Beijing, China}
  \icmlaffiliation{air}{Institute for AI Industry Research (AIR), Tsinghua University, Beijing, China}
  \icmlaffiliation{wuxi}{AI Industry Research Innovation Center,Wuxi Research Institute for Applied Technologies, Tsinghua University}

  \icmlcorrespondingauthor{Juncai He}{jche@tsinghua.edu.cn}
  \icmlcorrespondingauthor{Lei Liu}{lliu@mail.tsinghua.edu.cn}
  \icmlcorrespondingauthor{Yanyan Lan}{lanyanyan@air.tsinghua.edu.cn}

  % You may provide any keywords that you find helpful for describing your
  % paper; these are used to populate the "keywords" metadata in the PDF but
  % will not be shown in the document
  \icmlkeywords{Equivariance, GNN, Peptide design, Mirror-image binders, Chirality}

  \vskip 0.3in
]

% this must go after the closing bracket ] following \twocolumn[ ...

% This command actually creates the footnote in the first column listing the
% affiliations and the copyright notice. The command takes one argument, which
% is text to display at the start of the footnote. The \icmlEqualContribution
% command is standard text for equal contribution. Remove it (just {}) if you
% do not need this facility.

% Use ONE of the following lines. DO NOT remove the command.
% If you have no special notice, KEEP empty braces:
%\printAffiliationsAndNotice{}  % no special notice (required even if empty)
% Or, if applicable, use the standard equal contribution text:
\printAffiliationsAndNotice{\icmlEqualContribution}

\begin{abstract}
  D-peptide binders targeting L-proteins have promising therapeutic potential. Despite rapid advances in machine learning-based target-conditioned peptide design, generating D-peptide binders remains largely unexplored. In this work, we show that by injecting axial features to $E(3)$-equivariant (polar) vector features,it is feasible to achieve cross-chirality generalization from homo-chiral (L--L) training data to hetero-chiral (D--L) design tasks. By implementing this method within a latent diffusion model, we achieved D-peptide binder design that not only outperforms existing tools in \textit{in silico} benchmarks, but also demonstrates efficacy in wet-lab validation. To our knowledge, our approach represents the first wet-lab validated generative AI for the \textit{de novo} design of D-peptide binders, offering new perspectives on handling chirality in protein design. Codes are available at \url{https://github.com/YZY010418/PepMirror}.
\end{abstract}

%%%%%%%%%%%%%%%%%%%%%%%%%%
%%%%%% Introduction %%%%%%
%%%%%%%%%%%%%%%%%%%%%%%%%%

\section{Introduction}

A hallmark of proteins is their homo-chirality: nature predominantly uses L-amino acids to construct proteins within living organisms \cite{pasteur1848memoires,blackmond2010homochiralorigin}.
In this context, D-peptides become advantaged molecules for therapeutics due to their intrinsic orthogonality in an all-L system. D-peptides cannot be recognized as substrates by proteases, thereby exhibiting prolonged \textit{in vivo} half-life~\cite{kremsmayr2022utility}.
Besides, the bioorthogonality of D-peptides reduces their immunogenecity and risks of drug-drug interactions (DDI) \cite{lander2023d,qi2024mirror}.
Conventionally, D-peptide binders are identified via mirror-image display, where D-protein targets are synthesized for L-peptide binder screening. Once an L-peptide is identified, its mirror-image counterpart will inherently bind to the natural protein, thus yielding a D-peptide binder \cite{qi2024mirror,chang2015blocking,zhou2020novel}. However, the difficulties of D-protein target synthesis limited its application to most real-world drug discovery.

Recently, machine learning-based peptide binder design has been proven practical, and has shown advantages including the specificity to certain epitopes, higher success rate, lower screening cost, and more diverse initial scaffolds for affinity maturation and property optimization \cite{kong2025pepglad,notin2024MachineLF}. However, current works mainly focus on designing L--L homo-chiral interfaces, while generating D-L hetero-chiral interactions remains largely unexplored. Best to our knowledge, D-Flow \cite{wu2024dflow} is the only AI exploration of this topic, which lacks further experimental validation.

\begin{figure*}[htbp]
    \centering
    \includegraphics[width=\linewidth]{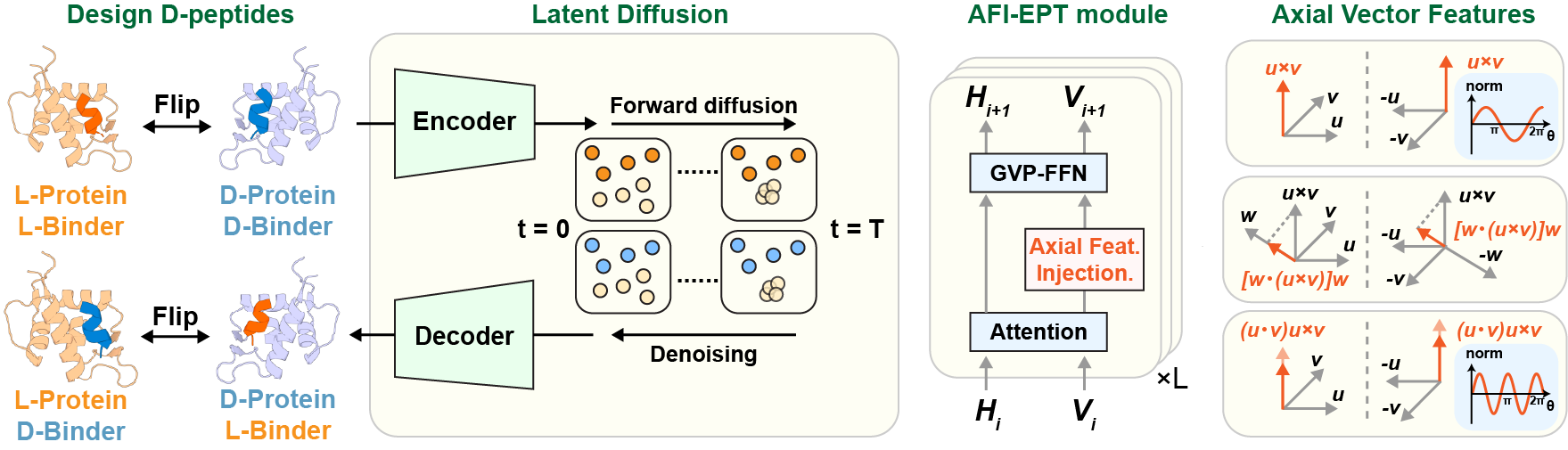}
    \caption{We use chiral-sensitive model PepMirror to design D-peptides by flipping. PepMirror is a latent diffusion model using AFI-EPT, which injects axial vector features to learn the chirality. Axial vectors are invariant under the spatial inversion, and we give three direct constructions. The commutator feature (third) captures higher frequency information of the angle between $u,v$. }
    \label{fig:main-figure}
\end{figure*}

In this work, we theoretically analyze the zero-shot cross-chirality generation task for scalarization-based equivariant model ~\cite{han2025survey}. We show that by injecting axial features to polar vector features in geometric neural networks, residues with different chirality will have different but similar latent representations, allowing chirality awareness and cross-chirality generalization. By implementing this in a latent diffusion model, we build PepMirror, the state of the art (SOTA) D-peptide binder \textit{de novo} design model as shown by both \textit{in-silico} and \textit{in-vitro} experiments in wetlabs. In short, our main contributions include:

(1) We propose AFI (Axial Feature Injection), a plug-and-play method to make an $E(3)$-equivariant model to chiral-sensitive (only $SE(3)$-equivariant).

(2) To understand AFI, we provide theoretical analysis together with numerical validation on the AFI-using encoder model by latent space analysis.

(3) Based on AFI, we present PepMirror, an experimentally validated \textit{de novo} framework for mirror-image peptide binder design, supported by comprehensive dry-lab evaluation and wet-lab validation.

%%%%%%%%%%%%%%%%%%%%%%%%%%
%%%%%% Related Work %%%%%%
%%%%%%%%%%%%%%%%%%%%%%%%%%

\section{Related Works}

\paragraph{Chirality in protein models} 
\label{related-work:chirality-in-protein-models}
A popular paradigm in protein models is to represent each residue with a local rigid-body frame, parameterized by a translation vector and a rotation matrix from the global frame to the local frame. This method was adopted by many protein structure prediction models such as AlphaFold2 \cite{jumper2021af}, and also many peptide generative models including RFDiffusion \cite{watson2023rfd}, DiffPepBuilder \cite{wang2024diffpepbuilder}, PepFlow \cite{li2024pepflow}, PPFlow \cite{lin2024ppflow}, PepBridge \cite{li2025pepbridge}, D-Flow \cite{wu2024dflow}, etc. In these models, the chirality of residues is fixed by the definition of the rigid-body and local frames are designed to be right-handed. As a result, a residue and its mirror-image structure will have different rotation matrix but are not in a reflection relationship. In other words, chirality is implicitly encoded as a prior, and the parameterization is reflection-agnostic.

Alternatively, some models parameterize proteins without explicit chirality priors. For instance, PepGLAD \cite{kong2025pepglad} and UniMoMo \cite{kong2025unimomo} utilize residue-level latent features, FuncBind \cite{kirchmeyer2025funcbind} employs neural fields to represent atoms as density peaks, PocketXMol \cite{peng2025pocketxmol} employs full-atom diffusion, in which chirality is represented only implicitly through atomic coordinates. As these models typically rely on $E(3)$-equivariant architectures, they are inherently restricted to generating homo-chiral complexes (i.e., L–L or D–D protein–peptide pairs as observed in the training data) unless additional chirality-specific features are introduced. Furthermore, because chirality is not modeled explicitly, the generated peptide ligands frequently exhibit varying degrees of residue-level chirality inversion (Table~\ref{table:chiral}).

\paragraph{Chirality aware models in geometric learning}
\label{related-work:chirality-in-ML}
There have been multiple approaches to encode chirality of a structure in geometric learning. Representative examples include injecting torsions in message passing (e.g., SphereNet \cite{coors2018spherenet}) or using coupled torsions and aggregates them with a learnable phase shift to disentangle conformation changes and chirality shifts (e.g., ChIRo \cite{adams2021chiro}), designing shift-equivariant yet order-sensitive aggregations (e.g., ChiENN \cite{gainski2023chienn} and Tetra-DMPNN \cite{pattanaik2020dmpnn}) , and introducing parity-even channels such as pseudovectors and mixing them with parity-odd features (e.g., GCPNet \cite{morehead2024gcpnet}, REM3DI \cite{wedig2025rem3di}). Although these architectures have been proved effective for tasks like chirality classification and molecular property prediction, they have not been applied in designing hetero-chiral peptide-protein interfaces.

\paragraph{Representation theory and chiral features}
\label{related-work:other-se3-models}
Equivariant networks grounded in representation theory, such as Tensor Field Networks~\cite{thomas_tensorfieldnetworks_2018}, $SE(3)$-Transformers~\cite{fuchs_se3transformers3drototranslation_2020}, and \texttt{e3nn}~\cite{geiger_e3nneuclideanneural_2022}, leverage spherical harmonics and tensors to encode rich geometric interactions. While highly expressive~\cite{smidt_findingsymmetrybreaking_2021}, these models incur significantly higher computational costs~\cite{li_e2formerefficientequivariant_2025} compared to standard $E(3)$ Graph Neural Networks~\cite{satorras_equivariantgraphneural_2022}. In contrast, we introduce a lightweight axial feature injection that incorporates chirality into efficient $E(3)$ backbones with minimal architectural changes.

\paragraph{Design D-peptides as L-protein binders}
\label{related-work:d-pep-design}
Early efforts on hetero-chiral binder design are mainly based on molecular mechanics, employing methods including the RifGen-RifDock-Rosetta pipeline \cite{cao2022rifgen,sun2024drifgen} and fragments assembly approaches \cite{garton2018dpdb,engel2021findr}, which suffer from low success rates, and always need large-scale library screening to identify active molecules. Recently, designing protein-binding proteins based on machine learning has been extensively explored, but few models involve mirror-image peptides. To our knowledge, the only attempt to design D-peptide binders is reported in D-Flow \cite{wu2024dflow}, where a similar workflow to mirror-image display is utilized.

%%%%%%%%%%%%%%%%%%%%%
%%%%%% Methods %%%%%%
%%%%%%%%%%%%%%%%%%%%%

\section{Method}

\subsection{Preliminary}
\label{subsection:task-definition}

A protein can be regarded as a point cloud graph $\mathcal{G}$ of the 3D atom coordinates and atom-types. Inspired by mirror-image display and D-Flow~\cite{wu2024dflow}, a D-peptide binder can be generated by two-step inversion. The L-protein target $\mathcal{G}_t$ can be first inverted to its mirror-image counterpart $\mathcal{G}_t' = P(\mathcal{G}_t)$, where $P$ is the spatial inversion \footnote{In $\mathbb{R}^3$, mirror reflection and spatial inversion are distinct operations, but both have a determinant of $-1$ and differ only by a proper rotation, so they are equivalent for chirality comparison. See Appendix~\ref{subsection:inversion-and-reflection} and~\ref{app:discussion-on-inversion} for discussion. } on atoms coordinates: $P = -I_3\in O(3)\setminus SO(3)$. Then, if the generative model designs an L-peptide binder $\mathcal{G}_b$ against this D-target $\mathcal{G}_t'$, the inverted version of $\mathcal{G}_b$ would be the D-peptide binder for the L-target, and these two binder-target pairs should have the same affinity under $P$. Formally, the desired D-peptide binder is $\mathcal{G}_b = P(f_{\theta}(P(\mathcal{G}_t))) $, where $f_{\theta}(x)$ denotes the model parameterized by $\theta$ that generates a peptide binder graph conditioned on $x$. 

A popular instantiation of $f_\theta$ is a scalarization-based
$E(3)$-equivariant model~\cite{han2025survey,satorras_equivariantgraphneural_2022}, where a 3D object $X$ is embedded as $\bigl(H(X), V(X)\bigr)$, where $H(X) \in \mathbb{R}^{N \times K}$ denotes the $E(3)$-invariant scalar features and $V(X) \in \mathbb{R}^{N \times 3 \times K}$ the $E(3)$-equivariant vector features, $K$ is the number of channels. The neural network updates $(H,V)$ jointly.

\subsection{Hetero-chiral design as a zero-shot generalization}
\label{subsection:task-description}

Due to the homo-chiral feature of native proteins, experimentally resolved structures for hetero-chiral protein–protein interactions are scarce. As a result, peptide generation conditioned on targets of different chirality becomes a zero-shot generalization problem: the model is trained on homo-chiral complexes, yet the tasks are under unseen hetero-chiral conditions.

This problem calls for two properties. First, the model must be \emph{chirality-aware}. Second, the representation should remain stable under spatial inversion: for any amino acid $X$, the code of its mirror image $-X$ should stay within the same amino-acid type (differing only by chirality), rather than collapsing or drifting toward other types.

\subsection{Chirality awareness by introducing axial vectors}
\label{subsection:discrepancy}
To make the equivariant model chirality awareness, we need to break the inversion equivariance of vector features and keep the $SE(3)$-equivariant at the same time. 

We introduce \emph{axial vector}, satisfying
\begin{align}
    a(Rx)=\det(R)\,R\,a(x), \quad \forall\, R\in O(3).
\end{align}
Accordingly, \emph{polar vector} is $E(3)$-equivariant. Vector features in vanilla scalarization-based $E(3)$ models are polar vector features. We follow standard terminology in classical physics~\cite{jackson2021classical}. Common examples include position and velocity (polar vectors), as opposed to angular momentum and magnetic fields (axial vectors). We propose \emph{Axial Feature Injection} (AFI) by adding axial features to the original polar features by channel wise linear mixing. For $i=1,\dots, N$, $k = 1,\dots, K$, define the new mixed feature
\begin{equation}\label{eq:vtildef}
    \widetilde v_{i,k}(X) := A_k^\top \,v_{i,:}(X)+B_k^\top \,a_{i,:}(X),
\end{equation}
where $A_k, B_k \in \mathbb{R}^{K}$ are channel-wise mixing coefficients.

We study the mechanism of AFI by applying an MLP $\varphi$ to the invariant features and the vector norm channel-wise
\begin{equation}
    c(X) = \varphi\big([\,H(X),\, \|\widetilde V(X)\|\,]\big),
\end{equation}
where the norm is computed over 3d spacial dimension, and
\begin{equation}
    \widetilde V(X) = A^\top v(X) + B^\top a(X) \in \mathbb{R}^{N \times 3 \times K}
\end{equation}
is the mixed vector features.

Under some boundedness assumptions on vector features and probability assumptions on mixing parameters, we can prove~\ref{thm:crossfeat-sep-informal} (see the formal version and proof at  Theorem~\ref{thm:crossfeat-sep} and Corollary~\ref{cor:phi-sep}).
\begin{theorem}[Chirality awareness, informal]\label{thm:crossfeat-sep-informal}
    For a sample of amino acid $X$, under some mild assumptions, we have for any $\varepsilon\in(0,1)$, with probability at least $1 - \delta_W(\varepsilon)$,
    \begin{equation}
        \|c(X) - c(-X)\|
        \ \ge\
        c_W \varepsilon,
    \end{equation}
    where $c_W$ is a constant.
\end{theorem}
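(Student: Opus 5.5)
The plan is to first separate the mixed vector norms of $X$ and $-X$, and then push that separation through $\varphi$.

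Under inversion, $H(-X)=H(X)$ and $v(-X)=-v(X)$. For an axial vector, taking $R=-I_3$ gives $a(-X)=\det(-I_3)(-I_3)a(X)=a(X)$. Hence for each residue $i$ and channel $k$,
\begin{equation*}
\widetilde v_{i,k}(X)=p+q,\qquad \widetilde v_{i,k}(-X)=-p+q,
\end{equation*}
where $p:=A_k^\top v_{i,:}(X)\in\mathbb{R}^3$ and $q:=B_k^\top a_{i,:}(X)\in\mathbb{R}^3$.

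By the parallelogram identity,
\begin{equation*}
\|p+q\|^2-\|{-p}+q\|^2=4\langle p,q\rangle .
\end{equation*}
Using the boundedness assumption $\|\widetilde v\|\le M$, we get
\begin{equation*}
\big|\,\|\widetilde v_{i,k}(X)\|-\|\widetilde v_{i,k}(-X)\|\,\big|\ \ge\ \frac{4|\langle p,q\rangle|}{2M}.
\end{equation*}
So the invariant-input difference $\big[H,\|\widetilde V\|\big](X)-\big[H,\|\widetilde V\|\big](-X)$ is zero in the $H$ block. In the norm block, at least one coordinate has size of order $|\langle p,q\rangle|/M$.

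The probabilistic step is to lower bound $|\langle p,q\rangle|=|A_k^\top G B_k|$, where $G:=v_{i,:}(X)\,a_{i,:}(X)^\top$ is the $K\times K$ Gram-type matrix with entries $G_{jl}=\langle v_{j},a_{l}\rangle$. I would assume $A_k,B_k$ are independent random vectors, say Gaussian. Conditional on $B_k$, the quantity $A_k^\top G B_k$ is then a centered Gaussian with variance $\|GB_k\|^2$. Its anti-concentration gives
\begin{equation*}
\Pr\big(|A_k^\top G B_k|<\varepsilon\|GB_k\|\big)\le \sqrt{2/\pi}\,\varepsilon .
\end{equation*}
A second small-ball bound controls $\|GB_k\|\ge \varepsilon\|G\|_F$. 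This uses a nondegeneracy assumption $\|G\|_F\ge g_0>0$, which says the polar and axial features of an amino acid are not orthogonal. This is the "mild assumption" I would state explicitly. A union bound, or independence across the $K$ channels, yields a failure probability $\delta_W(\varepsilon)$ that goes to $0$ as $\varepsilon\to0$. Taking the maximum over channels shrinks this further.

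Finally, to transfer the bound to $c=\varphi(\cdot)$, I need a lower Lipschitz (co-Lipschitz) property of $\varphi$ on the relevant bounded set. For $\varphi$ a random MLP with weights $W$, this can be obtained in two ways. One is injectivity, for instance an expanding first layer with Gaussian weights, which gives $\|Wz\|\ge \sigma_{\min}(W)\|z\|$. The other is a monotone, leaky activation whose slope is bounded below by $\alpha>0$. Either way, $\|c(X)-c(-X)\|\ge \alpha^{L}\prod_\ell\sigma_{\min}(W_\ell)\cdot\frac{2 g_0\varepsilon^2}{M}$ with high probability. Absorbing constants, and reparametrizing $\varepsilon^2\mapsto\varepsilon$ if needed, gives the bound $c_W\varepsilon$ with $\delta_W$ collecting all the failure probabilities. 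I expect this co-Lipschitz step to be the main obstacle: a generic MLP can collapse differences, so I would either assume it outright or build it into the formal corollary via a smallest-singular-value bound for rectangular Gaussian matrices.
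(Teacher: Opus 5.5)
Your outline follows the paper's: split $\widetilde v_{i,k}(\pm X)=\pm p+q$, use $\|p+q\|^2-\|q-p\|^2=4\langle p,q\rangle$, and rewrite $\langle p,q\rangle=A_k^\top C B_k$ with $C_{k\ell}=\langle v_{i,k},a_{i,\ell}\rangle$ (your $G$). Then apply conditional Gaussian anti-concentration, bound the denominator, and pass through $\varphi$.

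The genuine difference is how you lower-bound the conditional standard deviation.
\begin{itemize}
\item The paper conditions on $A_k$ and uses Hanson--Wright to get $\|C^\top A_k\|\ge\sigma_A\|C\|_F/\sqrt2$. This keeps the bound linear in $\varepsilon$. But it needs the effective-rank assumption $r_{\mathrm{eff}}(C)\ge r_0$, and it leaves a failure term $2e^{-cr_0}$ that does not shrink with $\varepsilon$.
\item You use a second small-ball estimate instead, which holds for every $C$. For example, $\Pr(\|CB_k\|\le\varepsilon\sigma_B\|C\|_F)\le e\varepsilon/\sqrt2$, by Markov's inequality applied to $\exp(-t\|CB_k\|^2/(\sigma_B\|C\|_F)^2)$ with $t=\varepsilon^{-2}$. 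So your $\delta_W(\varepsilon)\to 0$ as $\varepsilon\to0$. The price is an $\varepsilon^2$ lower bound, i.e.\ $\delta_W=O(\sqrt{\varepsilon})$ after reparametrizing.
\end{itemize}
That trade favors you here. $C=\mathbf V^\top\mathbf A$ with $\mathbf V,\mathbf A\in\mathbb{R}^{3\times K}$ has rank at most $3$. Hence $r_0\le 3$, and the paper's term $2e^{-cr_0}$ is bounded below by the absolute constant $2e^{-3c}$.

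There are two smaller issues.
\begin{itemize}
\item Your denominator step assumes $\|\widetilde v\|\le M$ deterministically. That cannot hold under random Gaussian mixing coefficients, since $\widetilde v=p+q$ is then Gaussian given the features. Replace it with the paper's Step~4: bound $\|p\|+\|q\|\le\|A_k\|S_v+\|B_k\|S_a$ using bounded raw features, then control $\|A_k\|,\|B_k\|$ by Gaussian norm concentration. This adds $2e^{-K/2}$ to $\delta_W$ and a factor $\sqrt K$ to the constant.
\item For $\varphi$, the paper simply \emph{assumes} $\mu$-coercivity in the norm coordinate (Corollary~\ref{cor:phi-sep}), which is your fallback option. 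Your $\alpha^L\prod_\ell\sigma_{\min}(W_\ell)$ route would fail as stated if $\varphi$ maps to a lower-dimensional latent, as the $8$-dimensional codes in the experiments suggest. Then the last layer has $\sigma_{\min}=0$, and only a direction-specific bound for the particular difference vector could replace it.
\end{itemize}
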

We remark that
the existence of a chirality-induced discrepancy is not automatic, even with axial feature injection (AFI).
First, if $B=0$, the model reduces to an $E(3)$-equivariant architecture, hence no discrepancy (cf.\ Proposition~\ref{prop:no-discrepancy-without-axial}).
Second, even when axial channels are present, certain (non-generic) mixing configurations can eliminate the discrepancy.
Take $A=B=I$, and for simplicity we omit the indices $i,k$, then $\widetilde v(X)=v(X)+a(X)$ and $\widetilde v(-X)=-v(X)+a(X)$. Hence
\begin{align} 
    \| \widetilde v (-X) \| &= \| -v(X) + a(X) \| \\
    & \overset{*}{=} \| v(X) + a(X) \| = \| \widetilde v (X) \|,
\end{align}
where * holds whenever $v(X)\cdot a(X)=0$.
This orthogonality can indeed occur in structured settings (e.g., when $a$ contains cross-product terms as two cases in our implementation, see Section~\ref{subsection:implementation}).
Therefore, one should not expect a uniform deterministic lower bound over all parameter choices.
Instead, our theorems establish a \emph{generic} guarantee: the learned parameters fall into the discrepancy-inducing regime with high probability.

Without AFI, the model is equivariant under spatial inversion.
By definition, we can prove that absence of AFI implies no discrepancy even for multi-layer models.
\begin{proposition}[No discrepancy without AFI]\label{prop:no-discrepancy-without-axial}
     In the absence of AFI i.e., when using only $E(3)$-equivariant Neural Networks, then we have $c(X) = c(-X)$.
\end{proposition}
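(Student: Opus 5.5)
The plan is to prove, by induction over the layers, that every scalar feature is invariant under the inversion $P=-I_3$ and every vector feature flips sign under it. The readout is then invariant.

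Fix an $E(3)$-equivariant scalarization-based network with layers $\ell=0,\dots,L$ acting on $(H^{(\ell)},V^{(\ell)})$. The inductive claim is that for every layer,
\begin{equation*}
H^{(\ell)}(-X)=H^{(\ell)}(X),\qquad V^{(\ell)}(-X)=-V^{(\ell)}(X).
\end{equation*}
This is the $O(3)$ case of equivariance specialized to $R=-I_3$. Since $-I_3$ is orthogonal, it is enough to note that $E(3)$-equivariance by definition covers all of $O(3)$. Concretely:

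\begin{itemize}
\item Invariant scalars satisfy $h(RX)=h(X)$.
\item Polar vectors satisfy $v(RX)=Rv(X)$.
\item With $R=-I_3$ these give exactly $H(-X)=H(X)$ and $V(-X)=-V(X)$.
\end{itemize}

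I would still check the induction explicitly for the generic scalarization update, so that the multi-layer statement is transparent.

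\textbf{Base case.} The input scalars (atom types, pairwise distances $\|x_i-x_j\|$) are unchanged by $x\mapsto -x$. The input vectors (relative positions $x_i-x_j$) change sign.

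\textbf{Inductive step.} Messages are built from inner products $\langle v_{i,k},v_{j,k'}\rangle$, norms, and scalars; each of these is invariant because $\langle -u,-w\rangle=\langle u,w\rangle$. Vector updates have the form $\sum \phi(\text{scalars})\,v$ with $v$ polar, that is, linear combinations of polar vectors with invariant coefficients, so they change sign. Channel-wise linear mixing $A^\top V$ commutes with the sign flip. Aggregation and residual connections are linear, so they preserve both properties.

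\textbf{Conclusion.} With $B=0$, the mixed feature is $\widetilde V=A^\top V$. Hence
\begin{equation*}
\|\widetilde V(-X)\|=\|-A^\top V(X)\|=\|\widetilde V(X)\|,
\end{equation*}
and therefore
\begin{equation*}
c(-X)=\varphi\big([H(-X),\|\widetilde V(-X)\|]\big)=\varphi\big([H(X),\|\widetilde V(X)\|]\big)=c(X).
\end{equation*}

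The only real obstacle is being precise about which operations "$E(3)$-equivariant network" permits. In particular, one must exclude hidden cross products or any other parity-odd construction, since a cross product of two polar vectors is axial and would be exactly AFI. I would state this as the defining assumption: every layer is equivariant under all of $O(3)$, including improper elements, and no pseudo-scalar such as $\det[u,v,w]$ is ever formed. Under that assumption the proof is immediate from equivariance applied to $R=-I_3$.
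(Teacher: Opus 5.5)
Your proposal is correct and follows essentially the same argument as the paper: apply $E(3)$-equivariance with $R=-I_3$ at each layer to get $H_j(-X)=H_j(X)$ and $V_j(-X)=-V_j(X)$, then note that $\varphi$ sees the vector features only through their norms, so $c(-X)=c(X)$. Your explicit layer-wise induction, and your caveat that no parity-odd constructions (cross products, pseudo-scalars) may appear, only make precise what the paper takes as the definition of an $E(3)$-equivariant network.
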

\vspace{-10pt}
{\renewcommand{\qedsymbol}{}
\begin{proof}
    For $j = 1, \dots, L$, We denote the features at layer $j$ as $(H_j(X), V_j(X))$ . Under the spatial inversion, by the equivariant property, $(H_j(-X), V_j(-X)) = (H_j(X), -V_j(X))$.
    The scalar output is
    \begin{align}
        c(-X)  &= \varphi([H_{L}(X), \|-V_{L}(X) \|]) \\ 
        &= \varphi([H_{L}(X),\|V_{L}(X) \|])
        = c(X).
    \end{align}
\end{proof}
}
\vspace{-25pt}
Though the chirality awareness is theoretically analyzed only for AFI, we observe a similar nontrivial effect in the full multi-layer model. See Fig.~\ref{fig:ld_distance} and Section~\ref{subsec:latent-space-analysis}.

%%%%%%%%%%%%%%%%%%%%%%%%%%%%%%%%%%%%%%%%%%%%

\subsection{Stable representations for enantiomer structures}
\label{subsection:stability}
\paragraph{Encoding stability}
Although a nontrivial discrepancy is observed under spatial inversion, $c(X)$ and $c(-X)$ represent the same amino-acid type and differ only by chirality.
We abstract the multilayer equivariant model as a map $\phi$ and define the latent code by
\begin{equation}\label{eq:cdef}
    c(X)=\phi\big(H(X),\widetilde V(X)\big),\ 
    \widetilde V(X):=\{\widetilde v_{i,k}(X)\}_{i,k},
\end{equation}
where $H(X)$ and $\widetilde V(X)$ denote the scalar and vector graph embeddings (see Appendix~\ref{app:graph-embedding-of-vec-feat}).
To compare molecules with different sizes, we apply zero-padding to align dimensions and define the embedding-space discrepancy
\begin{align}
    d(X_1,X_2):=\|H(X_1)-H(X_2)\|+\|\widetilde V(X_1)-\widetilde V(X_2)\|.
\end{align}
The embedding satisfies $H(-X)=H(X)$, and AFI yields $a(-X)=a(X)$ while $v(-X)=-v(X)$.
Hence the change from $X$ to its mirror image $-X$ is confined to the polar contribution, and in particular
\begin{align}\label{eq:dinv}
    &d(X,-X)=\|\widetilde V(X)-\widetilde V(-X)\| \\
    &= \|2A\,v(X)\|,
\end{align}
where $A, B$ denote the channel-mixing coefficients in Eq.~\eqref{eq:vtildef}.

For an amino acid $X'$ of a different type, both the scalar embedding $H$ and the vector embedding $\widetilde V$ are expected to differ substantially, with no symmetry-induced cancellation. We therefore assume the embedding-level separation
\begin{align}
    d(X,X') > d(X,-X).
\end{align}
Such assumption can be supported by computing the Tanimoto shape similarity between amino acid pairs, see Fig.~\ref{fig:aa-similarity}. While $X$ does not admit a direct correspondence to molecular shape, both are geometry-derived representations. Therefore, the result provides an indirect, geometric corroboration of this assumption.

If $\phi$ preserves this ordering in the sense that larger embedding discrepancies lead to larger code discrepancies, then it follows that
\begin{align}
    \|c(X)-c(-X)\| < \|c(X)-c(X')\|.
\end{align}
This provides a simple geometric explanation for the observed \emph{$20$-clusters} phenomenon, see Fig.~\ref{fig:20-clusters} and Section~\ref{subsec:latent-space-analysis}.

\begin{figure}[!htbp]
    \centering
    \includegraphics[width=0.9\linewidth]{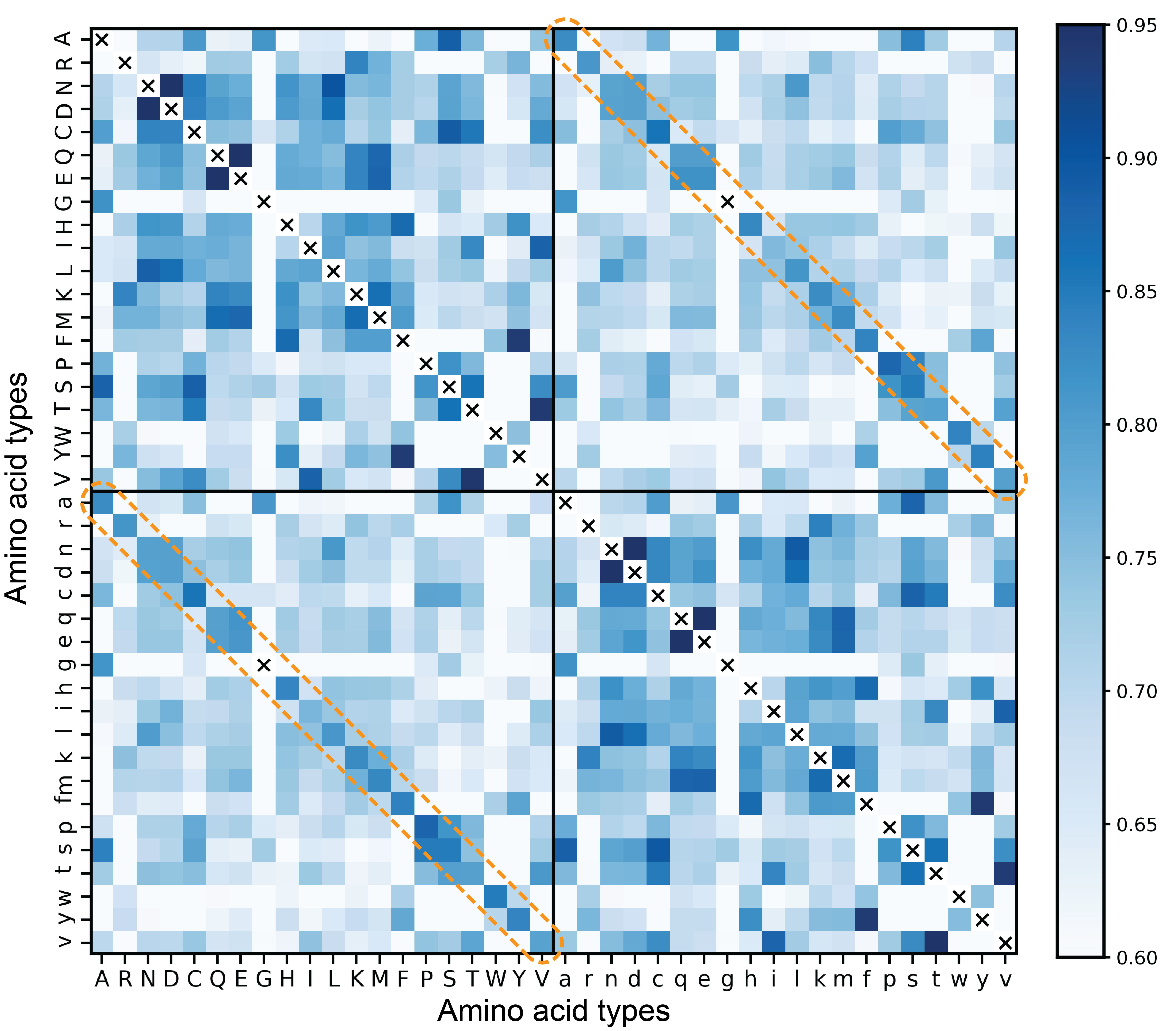}
    \caption{The max-pooled pair-wise Tanimoto shape similarity between L/D amino acids. The similarities between an amino acid and its enantiomer are among the highest compared with similaries between different amino acids. Because and similarities between the same amino acid or between "D-Gly" and "L-Gly" are 1.0 by definition, we excluded these entries in the heatmap.}
    \label{fig:aa-similarity}
\end{figure}

\paragraph{Diffusion stability} We establish a continuity theorem on conditional diffusion model. The conditional generation in latent space is an (reversed) SDE sampler:
\begin{equation}
    dZ_t = b_\theta(Z_t,t,c)\, dt + \sigma(t)\,dW_t,\qquad t\in[0,T],
    \label{eq:sde}
\end{equation}
where $c$ is the encoded code as the condition of diffusion, $b_\theta$ is the learned drift, and $W_t$ is Brownian motion.
Let $\mu_c:=\mathcal{L}(Z_0\mid c)$ denote the output distribution at time $t=0$. We give a Lipschitz assumption on neural networks,
\begin{assumption}[Lipschitz drift in state and condition]
    \label{assump:Lip_drift}
    There exist $L_z,L_c> 0$ such that for all $z,z',c,c',t$,
    \begin{align}
        \|b_\theta(z,t,c)-b_\theta(z',t,c)\|\le L_z\|z-z'\|,\\
        \|b_\theta(z,t,c)-b_\theta(z,t,c')\|\le L_c\|c-c'\|.
    \end{align}
\end{assumption}
Under assumption~\ref{assump:Lip_drift}, using standard coupling method ~\cite{levin2017markov}, we can prove (in Appendix~\ref{app:proof_diffusion})
\begin{theorem}[Conditional diffusion is stable in $W_2$]
    \label{thm:diffusion_stability}
    Under Assumption~\ref{assump:Lip_drift}, the solutions of diffusion are close in Wasserstein metric,
    \begin{equation}\label{eq:W2_bound}
        W_2(\mu_c,\mu_{c'})
        \le
        K_{\mathrm{diff}}\,\|c-c'\|.
    \end{equation}
    where $K_{\mathrm{diff}}$ is a positive constant.
\end{theorem}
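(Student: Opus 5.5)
We will use a synchronous coupling: the two reverse-time SDEs share one noise path, and Grönwall's inequality controls how far apart they drift. Fix $c,c'$ and take $Z_t$ and $Z'_t$ to solve \eqref{eq:sde} with conditions $c$ and $c'$. Both are driven by the \emph{same} Brownian motion $W_t$ and start from the \emph{same} initial sample at the starting time of the sampler, drawn from the common prior, e.g.\ the Gaussian at $t=T$. This requires that the initial law does not depend on the condition, which is the standard setup for latent diffusion. The joint law of $(Z_0,Z'_0)$ is then a coupling of $\mu_c$ and $\mu_{c'}$. Hence
\[
W_2(\mu_c,\mu_{c'})^2\le \mathbb{E}\|Z_0-Z'_0\|^2,
\]
and it suffices to bound the right-hand side pathwise.

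The key observation is that the diffusion coefficient $\sigma(t)$ does not depend on the state. So in the difference $D_t:=Z_t-Z'_t$ the noise terms cancel exactly, and $D_t$ satisfies the random ODE
\[
\frac{d}{dt}D_t=b_\theta(Z_t,t,c)-b_\theta(Z'_t,t,c'),
\]
in the integration direction of the sampler, with $D=0$ at the initial time. We split the right-hand side by adding and subtracting $b_\theta(Z'_t,t,c)$ and apply Assumption~\ref{assump:Lip_drift} to each piece:
\[
\|b_\theta(Z_t,t,c)-b_\theta(Z'_t,t,c')\|\le L_z\|D_t\|+L_c\|c-c'\|.
\]

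Writing $s$ for elapsed sampler time, $s\in[0,T]$, set $u(s):=\|D\|$. Then $u(0)=0$ and
\[
u(s)\le \int_0^s \big(L_z u(r)+L_c\|c-c'\|\big)\,dr .
\]
Grönwall's inequality gives
\[
u(s)\le \frac{L_c}{L_z}\big(e^{L_z s}-1\big)\|c-c'\|.
\]
At $s=T$, i.e.\ at the output time $t=0$, this bound holds almost surely, so it also holds in $L^2$. We therefore obtain the claim with
\[
K_{\mathrm{diff}}=\frac{L_c}{L_z}\big(e^{L_zT}-1\big),
\]
or the cruder constant $L_cTe^{L_zT}$.

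The main technical points are well-posedness and measurability. Global Lipschitz continuity in $z$ gives strong existence and uniqueness of both SDEs, so the synchronous coupling is well defined. The pathwise ODE for $D_t$ is justified because $D_t$ is absolutely continuous once the stochastic integrals cancel; for this we need continuity of $b_\theta$ in $t$, or at least measurability with integrable bounds, which we would add tacitly. If the sampler instead started from condition-dependent initial laws, we would add a term $W_2$ between the initial laws multiplied by $e^{L_zT}$. Under the stated setup that term vanishes.
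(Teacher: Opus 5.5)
Your proposal is correct and follows the paper's proof essentially step for step. Both use a synchronous coupling with a shared Brownian path and a shared initial point $Z_T=Z'_T$, cancel the noise in the difference, split the drift difference by adding and subtracting via the Lipschitz assumption, and apply Gr\"onwall, arriving at the same constant $K_{\mathrm{diff}}=\frac{L_c}{L_z}(e^{L_zT}-1)$. Your integral form of Gr\"onwall in elapsed sampler time is slightly more careful than the paper's differential inequality for $\|\Delta_t\|$, which glosses over the reversed time direction and the non-differentiability of the norm at $\Delta_t=0$.
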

It follows that under spatial inversion, the target will give similar latent codes of binders to the decoder. We can expect that this leads to the generalization that the axial-sensitive model could almost always generate L binder given D target even no such pairs in the training data. 

\subsection{Implementation of axial feature injection for D-peptide binder design}
\label{subsection:implementation}

Encouraged by the above analysis, we set up to implement AFI within the UniMoMo framework~\cite{kong2025unimomo}, a latent diffusion~\cite{rombach2022latentdiffusion} model that involves a VAE (variational auto-encoder, \cite{kingma2013vae})  module and a diffusion module. Briefly, the encoder first maps the input protein structure into a latent code, which is then used to condition the diffusion model. The diffusion generates the latent code of binder, which is decoded into the desired binder using the decoder.

These modules use EPT (Equivariant Pretrained Transformer~\cite{jiao2024ept}) as the backbone, which is designed to be $E(3)$-equivariant. 
EPT then stacks $L$ layers of self-attention and GVP-FFNs \cite{jing2020gvp} that preserve the $E(3)$-invariance of $H$ and $E(3)$-equivariance of $V$.
To achieve $SE(3)$-equivariance, we modify EPT by adding axial vectors to the vector feature $V_j'$ before every FFN layer. We form \emph{axial vector feature} channels based on polar vector features.

Inspired by the representation theory of $SO(3)$, we leverage irreducible decompositions of tensor products to extract geometric features that encode chirality~\cite{thomas_tensorfieldnetworks_2018, geiger_e3nneuclideanneural_2022}. While higher-order tensor representations offer finer geometric resolution, they suffer from high computationaly cost and overfitting to high-frequency noise irrelevant to chirality. We posit that the fundamental parity asymmetry is sufficiently captured by low-order interactions. Therefore, to balance computational efficiency against geometric expressivity, we restrict our framework to decompositions involving up to second-order tensors (see Appendix~\ref{app:so3_tensor_decomp}). Specifically, we construct three axial vector features:

\hspace{12pt}(1) the cross product: $u \times v$

\hspace{12pt}(2) the projection of scalar triple product: $\big(w \cdot (u \times v)\big) \cdot w$

\hspace{12pt}(3) the commutator: $ (u \cdot v)(u \times v)$

\noindent where $u$, $v$, $w \in \mathbb{R}^3$ are adjacent channels of $V' \in \mathbb{R}^{N\times 3 \times K}$. 
These axial features are then injected via linear mixing of polar and axial channels as shown in Eq.~\eqref{eq:vtildef}. In implementation, we use some normalization on vectors, see Algorithm~\ref{alg:afi_tensor}. Finally, we replace $V'_j(X)$ with $\widetilde{V}_j(X)$ in every EPT layer to obtain AFI-EPT, as shown in Figure~\ref{fig:main-figure}. Direct verification shows that (proved in Appendix~\ref{app:equivariance-proof})
\begin{proposition}\label{prop:equivariance-of-a-v}
    The constructions of axial vector features above are indeed axial, and the mixed vector features $\widetilde v_{i,k}(X)$ are $SE(3)$-equivariant.
\end{proposition}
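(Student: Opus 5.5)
The argument rests on one algebraic identity for the cross product. For every $R\in O(3)$ and all $u,v\in\mathbb{R}^3$,
\begin{equation*}
(Ru)\times(Rv)=\det(R)\,R\,(u\times v).
\end{equation*}
I would prove it in two steps. First take $R\in SO(3)$: for any test vector $w$,
\begin{equation*}
w\cdot\big((Ru)\times(Rv)\big)=\det[w,Ru,Rv]=\det(R)\,\det[R^\top w,u,v]=\det(R)\,\big(R^\top w\big)\cdot(u\times v)=\det(R)\,w\cdot R(u\times v).
\end{equation*}
This computation uses only the multiplicativity of the determinant and orthogonality, so it is in fact valid for every $R\in O(3)$. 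For the inversion $R=-I_3$ one can also check directly that $(-u)\times(-v)=u\times v=\det(-I_3)(-I_3)(u\times v)$.

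Throughout, $u,v,w$ are channels of the polar feature $V'$. By the $E(3)$-equivariance of EPT established earlier, they transform as $u\mapsto Ru$ under $x\mapsto Rx+t$, and they are invariant under translations, since EPT builds vector features from relative positions.

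\textbf{Axiality of the three constructions.}
\begin{itemize}
\item[(1)] The cross product $u\times v$ is axial by the identity above.
\item[(2)] For the triple-product projection, the scalar $w\cdot(u\times v)$ becomes $(Rw)\cdot\det(R)R(u\times v)=\det(R)\,w\cdot(u\times v)$, so it is a pseudoscalar. Multiplying it by the polar vector $Rw$ gives $\det(R)R\big[(w\cdot(u\times v))\,w\big]$, which is axial.
\item[(3)] For the commutator, $u\cdot v$ is $O(3)$-invariant, so $(u\cdot v)(u\times v)$ inherits axiality from the cross product.
\end{itemize}
The normalizations in Algorithm~\ref{alg:afi_tensor} must also preserve this. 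I would check that they only divide by $O(3)$-invariant quantities such as norms, possibly with an $\epsilon$ added, and hence do not affect the transformation law.

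\textbf{$SE(3)$-equivariance of the mixed feature.} Take $g=(R,t)$ with $\det R=1$. Then
\begin{equation*}
\widetilde v_{i,k}(gX)=A_k^\top R\,v_{i,:}(X)+B_k^\top \det(R)\,R\,a_{i,:}(X)=R\,\widetilde v_{i,k}(X),
\end{equation*}
where the first equality uses the polar and axial laws, and the second uses $\det R=1$ together with the fact that the channel mixing commutes with the spatial action $R$. This is the $SE(3)$-equivariance claimed.

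For $\det R=-1$ the same computation gives $-R\,(A_k^\top v)+\dots$, so the polar and axial parts pick up opposite signs. This explains why only $SE(3)$-equivariance, not $E(3)$-equivariance, holds.

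\textbf{Main obstacle.} I expect the only delicate point to be confirming that the normalizations and the choice of adjacent channels in Algorithm~\ref{alg:afi_tensor} act channel-wise and $O(3)$-invariantly, so that they commute with $R$. Once that is checked, the rest is routine.
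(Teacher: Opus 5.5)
Your proposal is correct and follows essentially the same route as the paper. Both arguments use the determinant form of the scalar triple product to get the transformation law of $(Ru)\times(Rv)$, then push that law through the triple-product and commutator features and the channel-wise linear mixing. The differences are minor: you combine parity and rotation into the single identity $(Ru)\times(Rv)=\det(R)\,R\,(u\times v)$ for all $R\in O(3)$, which checks the paper's $O(3)$ definition of axial directly, whereas the paper checks $-I_3$ and $SO(3)$ separately; you take the commutator feature as $(u\cdot v)(u\times v)$ instead of deriving it from $\mathrm{ax}([vv^\top,uu^\top])$ as the paper does; and you address the normalizations and channel shifts in Algorithm~\ref{alg:afi_tensor}, which the paper's proof leaves implicit.
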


%%%%%%%%%%%%%%%%%%%%%%%%%
%%%%%% Experiments %%%%%%
%%%%%%%%%%%%%%%%%%%%%%%%%

\section{Experiment}
\label{sec:experiment}

As discussed above, we implement AFI-EPT based on cross products, triple products, and commutators in the framework of UniMoMo ~\cite{kong2025unimomo}. Based on the type of axial vector features, we name these models as PepMirror (cross), PepMirror (triple), and PepMirror (commu.). For ablation, we also train the original UniMoMo with linear peptide data for baseline comparison, which is named UniMoMo (pep.). The published version that trained with three modalities (small molecules, peptides and antibodies) is named UniMoMo (all).

\subsection{Latent space analysis}\label{subsec:latent-space-analysis}
To assess the effect of AFI and provide numerical evidence for our theoretical discussion, we train an autoencoder with AFI-EPT and analyzed the invariant part of the resulting latent codes for amino acids. Specifically, we apply mean pooling over the atom dimension to obtain a fixed-size representation $c(X)\in\mathbb{R}^8$.
\begin{figure}[htbp]
    \centering
    \includegraphics[width=0.9\linewidth]{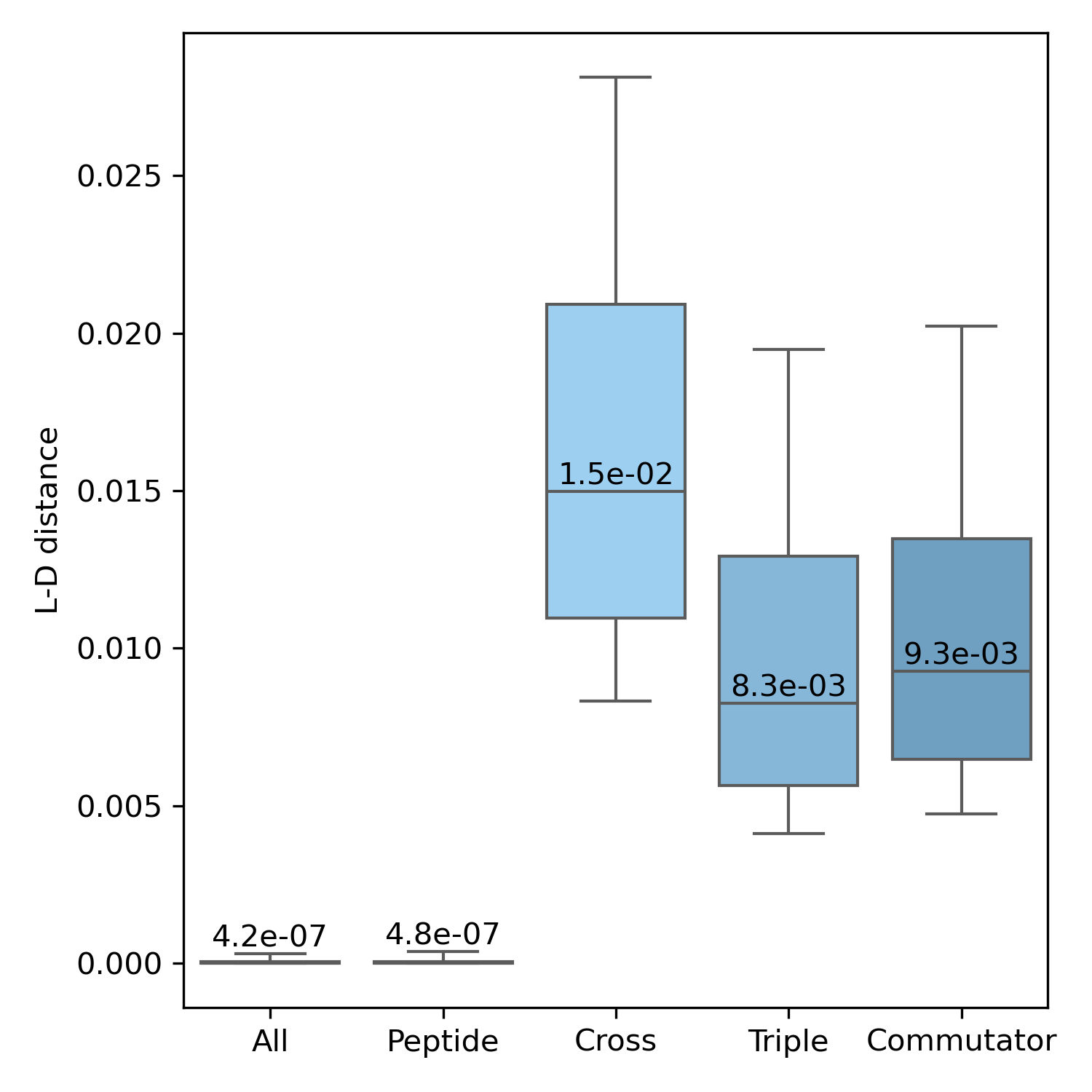}
    \caption{Latent-code distances between each amino acid and its inverted counterpart ($X$ vs.\ $-X$) for L- and D-forms across different encoder variants. All and Peptide refer to UniMoMo without AFI trained on different datasets (see section~\ref{sec:experiment}), and the other three are equipped with AFI based on different axial features. Distances are summarized as box plots over all amino acids. Encoders equipped with AFI exhibit a non-negligible inversion-induced discrepancy. The number on every block is the median distance.
    }
    \label{fig:ld_distance}
\end{figure}
\paragraph{Discrepancy}
To verify Theorem~\ref{thm:crossfeat-sep-informal} and Proposition~\ref{prop:no-discrepancy-without-axial},
we analyze the latent codes of each pocket structure $X$ in our LNR test set, obtaining $c(X)\in\mathbb{R}^8$. For each sample, we compute the inversion-induced discrepancy $\|c(X)-c(-X)\|_2$ and summarize these distances with box plots (Fig.~\ref{fig:ld_distance}) across samples for several model variants. Introducing AFI consistently increases the L-D discrepancy: the median distance reaches the $10^{-2}$ scale across different axial-feature constructions, more than four orders of magnitude larger than that without AFI. 

\paragraph{Stability and clustering}
To compare within-type and between-type separations, we estimate mean latent distances for every pair of amino-acid types. Let $\{X_i\}_{i=1}^{20}$ denote the distributions of the 20 canonical L amino acids, and $\{X_i\}_{i=21}^{40} = \{-X_i\}_{i=1}^{20}$ denote $20$ D amino acids. For each $1\le i,j \le 40$, we sample 100 pairs $(x_i,x_j)$ with $x_i \sim X_i$ and $x_j \sim X_j$, and compute
\begin{align}
    \mathbb{E}\bigl[\|c(x_i)-c(x_j)\|_2\bigr].
\end{align}
We visualize these means as a heatmap (the right panel of Fig.~\ref{fig:20-clusters}). The three diagonals are about two orders of magnitude smaller than the off-diagonal entries, and the t-SNE plot (the left panel of Fig.~\ref{fig:20-clusters}) can further support the tight within-type clustering. For more visualizations, see Appendix~\ref{subsection:SI:latent-analysis}.

\begin{figure*}[htbp]
    \centering
    \includegraphics[width=\linewidth]{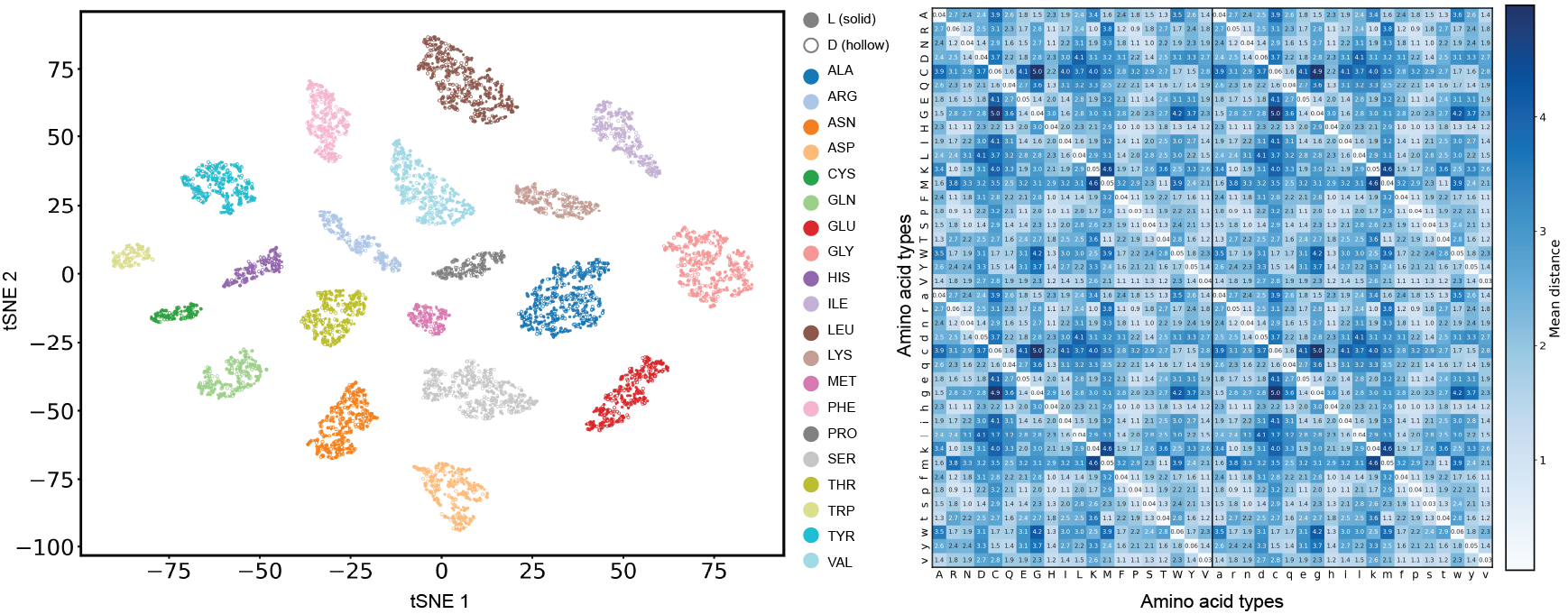}
    \caption{Left: t-SNE of 20 types of amino acids including both L and D chirality. As t-SNE cannot keep distance, we plot the heatmap (right) of mean pairwise latent-code distances among 40 amino-acid classes including 20 L amino acids and 20 D amino acids. The three diagonals are two orders of magnitude smaller than the off-diagonal entries ($10^{-2}$ vs $1$), indicating tight within-class clustering and clear inter-class separation. The model we use here is PepMirror (cross), for other models, see Fig.~\ref{fig:SI-heatmap-cross}, Fig.~\ref{fig:SI-heatmap-triple}, and Fig.~\ref{fig:SI-heatmap-commutator}}
    \label{fig:20-clusters}
\end{figure*}

\subsection{In-silico evaluation}
\label{subsection:in-silico}

\subsubsection{Baselines.}
\label{paragraph:baselines}
We adopt the following peptide design models as baselines, which can be broadly divided into two categories.
The first category assumes L chirality as a built-in prior, including RFDiffusion~\cite{watson2023rfd}, DiffPepBuilder~\cite{wang2024diffpepbuilder}, PepFlow~\cite{li2024pepflow}, D-Flow~\cite{wu2024dflow}, PPFlow~\cite{lin2024ppflow}, and PepBridge~\cite{li2025pepbridge}. These models treat chirality as a preset constraint rather than a learnable or controllable variable.

The second category does not assume a fixed chirality. PocketXMol~\cite{peng2025pocketxmol} operates directly on atom coordinates, PepGLAD~\cite{kong2025pepglad} and UniMoMo~\cite{kong2025unimomo} employs coordinate-derived representations without enforcing chiral constraints. However, this flexibility comes at the cost of limited control over chirality consistency, often resulting in mixed-chirality outputs. Notably, PepGLAD utilizes the idealization method after generation, where generated residues are aligned with ideal templates, ensuring L chirality. PepGLAD with idealization is also tested and reported as PepGLAD(ideal).

\subsubsection{Setup and metrics}
\label{paragraph:setup-and-metrics}
To assess the ability to design D-peptide binders for L-protein targets, we employ large non-redundant complex dataset (LNR)~\cite{tsaban2022lnr} as our testset. For each model, we input the same binding pocket of either native receptors or the central inverted receptors for L-peptide and D-peptide design, respectively. Generated complexes are then minimized under the Amber14 forcefield~\cite{maier2015ff14sb} before the following metrics are calculated:

\textbf{Chirality}. We first assess whether generated peptides exhibit the desired chirality. Specifically, we compute the fraction of residues with the correct chirality, where achiral glycines are excluded. We note that some generated structures suffer with severe clashes, leading to chiraliity flip during minimization. We therefore calculate the ratio of desired chirality in both raw outputs and minimized structures. For RFDiffusion and PPFlow that only generate backbone atoms, chirality cannot be directly computed. We therefore set the initial chirality and reconstructed C$\beta$ atoms accordingly based on Ramachandran plots (Appendix~\ref{subsection:rama-analysis}).

\textbf{Interface affinity}. Because Rosetta energies are statistically derived and exhibit discrepancies between enantiomeric systems (Appendix~\ref{subsection:rosetta-analysis}), we use the score from AutoDock Vina as an alternative metric to assess interface affinity. To evaluate the upper-bound performance of each model, we select the top-1 candidate for each target and report the average binding score. We also compute the proportion of targets for which at least one designed binder achieves a lower interface energy than the native complex, which we report as the interface energy improvement (IMP). To characterize overall performance, we additionally report the mean binding energy across all candidates with negative binding energies. Besides, the ratio of complexes with negative binding energies is reported as the success rate (Suc.).

\subsubsection{Results}
\label{subsec:dry-lab-evaluation}

\textbf{Chirality}.
Table~\ref{table:chiral} reveals two types of behaviours. Frame-based models implicitly enforce L chirality, whereas models without an explicit chirality prior generate mixed-chiral peptides. Notably, PocketXMol, PepGLAD and UniMoMo show $E(3)$-equivariance: inverting the input structure leads to an accordingly inverted output, which is counterproductive for hetero-chiral binder design. This equivariance is reflected by the observation that the L- and D-correct chirality fractions approximately sum to 100\%.

After minimization, most structures preserve their initial chirality. For PepBridge, severe intra-ligand clashes lead to loss of chiral consistancy during relaxation. Among baselines, PepMirror achieves the highest chirality consistency, and also designs reasonable backbone Ramachadran torsions (Figure~\ref{fig:SI-Rama}), suggesting that AFI effectively yields an $SE(3)$-equivariant mapping that supports hetero-chiral binder design.

\begin{table}[htbp]
  \caption{Chirality and minimization backbone-RMSD of models on L/D-peptide design tasks. The best and second best data is labeled orange/light orange for L tasks, and blue/light blue for D tasks.}
  \label{table:chiral}
  \begin{center}
    \begin{small}
        \begin{tabular}{cccc}
          \toprule
          \multirow{2}{*}{Models} & \multirow{2}{*}{Task}   & \multicolumn{2}{c}{Right Chirality\%}    \\
                                        &   & Raw & Minimized    \\
          \midrule
          \multirow{2}{*}{RFDiffusion}  & L & 100.0 & 99.57  \\
                                        & D & 100.0 & 99.15   \\
        \multirow{2}{*}{PPFlow}         & L  & 100.0 & 95.90   \\
                                        & D  & 100.0 & 96.31  \\
          \multirow{2}{*}{DiffPepBuilder}& L   & 100.0 & 98.04   \\
                                        & D  & 100.0 & 97.91      \\
          \multirow{2}{*}{PepFlow}      & L  & 100.0 & 99.06 \\
                                        & D  & 100.0 &98.99     \\
        \multirow{2}{*}{D-Flow}         & L  & 100.0  & 98.43        \\
                                        & D  & 100.0 & 98.54 \\
        \multirow{2}{*}{Pepbridge}      & L  &  100.0  & 60.04  \\
                                        & D  &  100.0  & 60.39  \\
        \multirow{2}{*}{PepGLAD(ideal)}      & L  &  100.0  & 99.01     \\
                                        & D  &  100.0  & 99.04   \\
        \midrule
        \multirow{2}{*}{PocketXMol}       & L  & 57.83  & 57.83       \\
                                        & D  & 43.12  & 43.12   \\
        \multirow{2}{*}{PepGLAD}      & L  & 50.10 & 50.28   \\
                                        & D  &  49.94  & 49.85  \\
        \multirow{2}{*}{UniMoMo(pep.)}  & L  &  77.03 & 76.98    \\
                                        & D  & 23.90 &  23.95       \\
        \multirow{2}{*}{UniMoMo(all)}  & L  & 84.78 & 84.70       \\
                                        & D  & 15.70 & 15.76       \\
        \midrule
        \multirow{2}{*}{PepMirror(cross)}  & L  & 99.93  & \cellcolor{Lsecond!55}99.83  \\
                                        & D  & 99.91  & \cellcolor{Dsecond!55}99.81  \\
        \multirow{2}{*}{PepMirror(triple)}  & L  & 99.86  & 99.75   \\
                                        & D  & 99.84 & 99.75  \\
        \multirow{2}{*}{PepMirror(commu.)}  & L  & 99.95  &  \cellcolor{Lbest!65}99.88 \\
                                        & D  & 99.94  & \cellcolor{Dbest!65}99.88 \\
        
          \bottomrule
        \end{tabular}
    \end{small}
  \end{center}
  \vskip -0.1in
\end{table}

\textbf{Interface affinity}.
To align with downstream applications where D-peptides are valued for their consistent D-chirality and thus guaranteed stability, we exclude methods that cannot reliably generate peptides with consistent residue-level chirality from the following evaluations.

Among the remaining baselines, RFDiffusion shows a pronounced performance cliff from L- to D-peptide design: the average affinity decreases from -3.30 to -1.77, and IMP drops from 44.09\% to 16.13\%. Other methods exhibit a similar L–D gap, most clearly in success rate (Table~\ref{table:interface}). Together with the substantially higher structural diversity observed on D-peptide tasks (Table~\ref{table:bsr}), these results suggest that many models explore a broader yet less target-aligned conformational space for D-peptide design. 

PepMirror achieves the strongest overall performance, which is insensitive to the choice of axial vector. It also exhibits the smallest L-to-D degradation, yielding a larger relative advantage on D-peptide tasks. Besides Vina score, we also assess interface quality using Rosetta ddG, while we note its inconsistency on protein enantiomers. Under this metric, PepMirror remains the top performer and shows a larger advantage among baselines, further supporting its ability to design high-quality hetero-chiral interfaces (Appendix~\ref{subsection:rosetta-analysis} and Table~\ref{table:SI:interface-rosetta}).

\begin{figure*}[t]
    \centering
    \includegraphics[width=\linewidth]{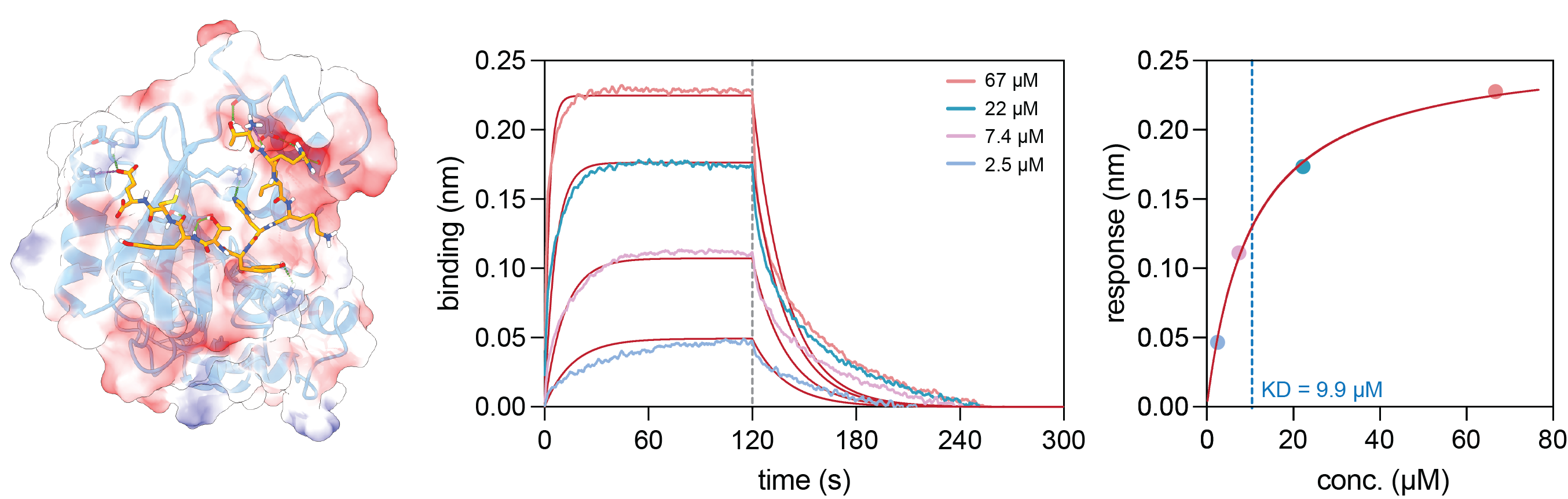}
    \caption{The identified D-peptide binder against CD38. Left: Complex structure of D-1412 and CD38 generated by PepMirror (cross), where multiple interactions can be identified. Middle: Stacked curves of association and dissociation under different concentrations with kinetic fitting. Right: Steady state fitting of the max response for each concentration, the blue line is the observed KD.}
    \label{fig:bli}
\end{figure*}

\subsection{Wet-lab validation}
Motivated by PepMirror’s clear advantages in \emph{in-silico} evaluations, we next assess its practical utility for \textit{de novo} D-peptide binder discovery. Using PepMirror (cross), we generate 5,000 D-peptide candidates against CD38 (Cluster of Differentiation 38), a validated therapeutic target in multiple myeloma and an emerging target in NAD$^+$-linked immunometabolic disorders. After physics-based and geometry-based filtering (Appendix~\ref{subsection:SI:design-cd38}), we prioritize 12 candidates for chemical synthesis and binding assays. Among them, a 10-mer peptide (D-1412; sequence ``trikhytyce``) achieve a dissociation constant $K_D \approx 10~\mu$M, with kinetic and steady-state fittings yielding consistent estimates. Structural inspection of D-1412 suggests multiple sidechain-mediated interactions that depends on correct stereochemistry, supporting PepMirror’s capability to design plausible hetero-chiral interactions (Figure~\ref{fig:bli}). However, we observed an unexpected phenomenon: the enantiomer of D-1412 also showed binding activity toward CD38 with a comparable affinity (Appendix~\ref{subsection:reduced-stereo-selectivity}). While peptide--protein interactions are generally considered stereoselective, recent studies suggest that this selectivity is not necessarily absolute and may be attenuated by conformational disorder or alternative binding modes~\cite{newcombe2024stereochemistry,li2026transcending}. To validate this observation, we repeated the BLI assays and confirmed peptide chirality by circular dichroism (CD)  (Figure~\ref{fig:SI-1412-CD}). These controls support the reliability of the affinity measurements for both peptides. Overall, these results imply a competitive experimental hit rate and further demonstrate the practical utility of PepMirror for mirror-image drug discovery.

\begin{table}[!hb]
  \caption{Interface quality of of generated L/D-peptides by different models. The best and second best data is labeled orange/light orange for L tasks, and blue/light blue for D tasks.}
  \label{table:interface}
  \begin{center}
    \begin{small}
        \begin{tabular}{cccccc}
          \toprule
          Models & Task   & Suc.\% & Avg. & Top & IMP\%  \\
          \midrule
        \multirow{2}{*}{RFDiffusion}     & L  & 99.52  &  -3.30  & -5.14  & 44.09 \\
                                         & D  & 98.58  &  -1.77  & -3.78  & 16.13  \\
        \multirow{2}{*}{DiffPepBuilder}  & L  & 86.56  &  -3.58  & -5.44 & 56.99  \\
                                         & D  & 80.14  &  -3.38  & -5.23  & 50.54   \\
        \multirow{2}{*}{PepFlow}         & L  & 99.41  &  -3.31  & -4.36 & 13.98  \\
                                         & D  & 96.78  &  -2.75  & -4.15 &  16.13 \\
        \multirow{2}{*}{D-Flow}          & L  & 99.31  &  -3.33  & -4.41 & 10.75  \\
                                         & D  & 97.52  &  -3.11  & -4.54 & 22.58 \\
        \multirow{2}{*}{PPFlow}          & L  & 65.94  &  -2.58  & -5.09 &  40.86 \\
                                         & D  & 64.46  &  -2.64  & -5.21 & 48.39  \\
        \multirow{2}{*}{PepGLAD(ideal)}  & L  & 94.56  &  -3.27  & -5.10 &  40.86 \\
                                         & D  & 95.08  &  -3.26  & -5.11 & 43.01  \\
        \midrule
        \multirow{2}{*}{PepMirror(cross)}  & L  & 99.67 &  -4.27  & -5.81 &  \cellcolor{Lsecond!55}69.89       \\
                                        & D  & \cellcolor{Dbest!65}99.76 &  -4.15   & -5.69 &   63.44     \\
        \multirow{2}{*}{PepMirror(triple)}  & L  & \cellcolor{Lbest!65}99.73 &  \cellcolor{Lsecond!55}-4.31  & \cellcolor{Lsecond!55}-5.88 &  \cellcolor{Lsecond!55}69.89       \\
                                        & D  & 99.72 &  \cellcolor{Dsecond!55}-4.20   & \cellcolor{Dsecond!55}-5.75 &  \cellcolor{Dsecond!55}67.74      \\
        \multirow{2}{*}{PepMirror(commu.)}  & L  & \cellcolor{Lbest!65}99.73 &  \cellcolor{Lbest!65}-4.34  & \cellcolor{Lbest!65}-5.89 &   \cellcolor{Lbest!65}76.34      \\
                                        & D  & \cellcolor{Dsecond!55}99.75 &  \cellcolor{Dbest!65}-4.25   & \cellcolor{Dbest!65}-5.87 &  \cellcolor{Dbest!65}72.04     \\
          \bottomrule
        \end{tabular}
    \end{small}
  \end{center}
  \vskip -0.1in
\end{table}

%%%%%%%%%%%%%%%%%%%%%%%%
%%%%%% Conclusion %%%%%%
%%%%%%%%%%%%%%%%%%%%%%%%

\section{Conclusion and Discussion}

In this work, we propose AFI-EPT that injects axial vector features into polar vector features in EPT~\cite{jiao2024ept}. By implementing this module in latent diffusion framework, we build PepMirror that design mirror-image peptide binders for native protein targets. Through theoretical analysis and experiments, we show that the latent codes of L and D amino acids will have close but different representations, so that the model acquires the ability to distinguish different chirality while maintains the ability to generate reasonable structures given unseen D-targets as input. The evaluation results show that PepMirror have advanced performance compared with existing peptide binder design models. On top of this, we tested PepMirror in a real-world D-peptide binder design campaign, and successfully identified a D-binder against CD38 with a KD of 10 $\mu$M out of 12 designs.

Although PepMirror has achieved best-in-class performance and, for the first time, demonstrate utility in wet-lab experiments, we recognize that opportunities remain for further exploration.
First, our theoretical analysis of AFI focuses on the feature-mixing mechanism under simplifying assumptions, rather than providing an end-to-end theory of the trained network (including optimization and generalization). A more complete account of how axial information propagates through subsequent equivariant blocks remains an open direction.

Besides, although the main experiments focus on three simple axial-vector constructions, AFI should be viewed more broadly as a lightweight design principle that is plug-and-play for many model architectures. As preliminary evidence, we evaluated a mixed cross–triple–commutator variant and a pseudo-scalar injection variant of PepMirror (Appendix~\ref{subsection:extension-afi}). These extensions show similar behavior with the AFI variants in our main context, suggesting that AFI is transferable across reasonable implementation choices and may serve as a general strategy for introducing chirality sensitivity.

Overall, our work shows the feasibility of designing wet-lab validated mirror-image peptide binders with generative AI for the first time, which not only provides a useful tool, but also inspires new insights in handling chirality in protein design.

% Acknowledgements should only appear in the accepted version.
\section*{Acknowledgements}

We sincerely thank our reviewers for their valuable discussions and comments, as well as Xiangzhe Kong, Mingyu Li, Ziting Zhang, and other colleagues in Anew Labs for their inspiring advice and help. We would also like to thank Innovative Drug Research and Development--National Science and Technology Major Project (No.2025ZD1802501); Beijing Frontier Research Center for Biological Structure Fundings; the National Key R\&D Program of China (2022YFC3401500); the National Natural Science Foundation of China (T2488301, 22227810, and 22137005); Fundamental and Interdisciplinary Disciplines Breakthrough Plan of the Ministry of Education of China (JYB2025XDXM501); the National Facility for Translational Medicine(Shanghai) Fundings; the Fundamental Research Funds for the Central Universities; Tsinghua-Peking Center for Life Sciences; Ministry of Education Key Laboratory of Bioorganic Phosphorus Chemistry and  Chemical-Biology; Center for Synthetic and Systems Biology, Tsinghua University; the XPLORER prize; the New Cornerstone Science Foundation; and AI Industry Research Innovation Center, Wuxi Research Institute for Applied Technologies, Tsinghua University for their support.
% 

% \textbf{Do not} include acknowledgements in the initial version of the paper
% submitted for blind review.
% 
% If a paper is accepted, the final camera-ready version can (and usually should)
% include acknowledgements.  Such acknowledgements should be placed at the end of
% the section, in an unnumbered section that does not count towards the paper
% page limit. Typically, this will include thanks to reviewers who gave useful
% comments, to colleagues who contributed to the ideas, and to funding agencies
% and corporate sponsors that provided financial support.

\section*{Impact Statement}

This paper presents work whose goal is to advance the field of machine learning based functional protein design. There are many potential societal consequences of our work, none of which we feel must be specifically highlighted here.

% In the unusual situation where you want a paper to appear in the
% references without citing it in the main text, use \nocite
%\nocite{langley00}

\bibliography{ref}
\bibliographystyle{icml2026}

%%%%%%%%%%%%%%%%%%%%%%%%%%%%%%%%%%%%%%%%%%%%%%%%%%%%%%%%%%%%%%%%%%%%%%%%%%%%%%%
%%%%%%%%%%%%%%%%%%%%%%%%%%%%%%%%%%%%%%%%%%%%%%%%%%%%%%%%%%%%%%%%%%%%%%%%%%%%%%%
% APPENDIX
%%%%%%%%%%%%%%%%%%%%%%%%%%%%%%%%%%%%%%%%%%%%%%%%%%%%%%%%%%%%%%%%%%%%%%%%%%%%%%%
%%%%%%%%%%%%%%%%%%%%%%%%%%%%%%%%%%%%%%%%%%%%%%%%%%%%%%%%%%%%%%%%%%%%%%%%%%%%%%%
\newpage
\appendix
\onecolumn
%%% Set the name of figures as S1 S2... %%%
\setcounter{figure}{0}
\renewcommand{\thefigure}{S\arabic{figure}}
\setcounter{table}{0}
\renewcommand{\thetable}{S\arabic{table}}

\section{Additional Results and Discussion}

\subsection{On the chirality conversion and spacial inversion}
\label{subsection:inversion-and-reflection}
The commonly used definition of chirality is the property that a structure cannot be superimposed with its mirror-image (i.e., the reflected structure). In this paper, we used central inversion as the operation that converses chirality, because mirror reflection and central inversion both have a determinant of -1 and differ only by a proper rotation, so they are equivalent for chirality comparison. We use central inversion because it flips all three coordinates simultaneously and thus treats them uniformly, whereas a mirror reflection requires choosing a specific mirror plane. Using central inversion for chirality conversion has also been reported in D-Flow~\cite{wu2024dflow}.

To experimentally validate the equivalency, we reflected the structures in the LNR test set across the xy, yz, and xz planes. In addition, we randomly sampled three other planes passing through the structural center and reflected the structures across them (denoted as random\_1/2/3). We evaluated PepMirror(cross) in terms of chirality and interface energy with these six variants of LNR as targets. The result shows highly similar performance across different reflection planes, supporting that reflections across different planes do not affect the generation quality (Table~\ref{table:reflection_compatible}).

\begin{table}[htbp]
  \caption{Comparing model performances on test sets after inversion and reflection based on different planes}
  \label{table:reflection_compatible}
  \begin{center}
    \begin{small}
        \begin{tabular}{cccccccc}
          \toprule
          \multirow{2}{*}{Operation} & \multirow{2}{*}{Chirality}   & \multicolumn{2}{c}{Right Chirality\%}  & \multicolumn{4}{c}{Vina Score}   \\
          \cmidrule(lr){3-4}
          \cmidrule(lr){5-8} 
                                        &   & Raw & Minimized & Suc.\% & Avg. & Top & IMP\%   \\
          \midrule
           original       & L & 99.93 & 99.83 & 99.67 & -4.27 & -5.81 & 69.89 \\
           inversion      & D & 99.91 & 99.81 & 99.76 & -4.15 & -5.69 & 63.44 \\
           reflection\_xy & D & 99.90 & 99.82 & 99.70 & -4.18 & -5.72 & 65.59 \\
           reflection\_yz & D & 99.92 & 99.83 & 99.62 & -4.18 & -5.73 & 69.89 \\
           reflection\_xz & D & 99.90 & 99.82 & 99.70 & -4.19 & -5.72 & 65.59 \\
           random\_1      & D & 99.90 & 99.83 & 99.71 & -4.19 & -5.72 & 66.67 \\
           random\_2      & D & 99.91 & 99.81 & 99.59 & -4.19 & -5.74 & 65.59 \\
           random\_3      & D & 99.92 & 99.83 & 99.70 & -4.19 & -5.70 & 65.59 \\
          \bottomrule
        \end{tabular}
    \end{small}
  \end{center}
  \vskip -0.1in
\end{table}

Moreover, we evaluated equivariance at the representation level by comparing residue-level latents of the original, inverted, and reflected LNR pockets. The results show that reflections across different planes still separate L/D amino-acids effectively, while preserving consistent residue latents under rotation (Table~\ref{table:reflection_latent}). 

\begin{table}[htbp]
  \caption{Median distances of pocket latents under different chirality operations relative to those under original and inversion}
  \label{table:reflection_latent}
  \begin{center}
    \begin{small}
        \begin{tabular}{cccc}
          \toprule
          Operation & Chirality   & Med. Dist. to original  & Med. Dist. to inversion   \\
          \midrule
           original       & L & 0 & 1.5e-2 \\
           inversion      & D & 1.5e-2 & 0 \\
           reflection\_xy & D & 1.5e-2 & 5.3e-7 \\
           reflection\_yz & D & 1.5e-2 & 5.6e-7 \\
           reflection\_xz & D & 1.5e-2 & 3.9e-7 \\
           random\_1      & D & 1.5e-2 & 3.0e-5 \\
           random\_2      & D & 1.5e-2 & 3.1e-5 \\
           random\_3      & D & 1.5e-2 & 3.1e-5 \\
          \bottomrule
        \end{tabular}
    \end{small}
  \end{center}
  \vskip -0.1in
\end{table}

We noted that randomly sampled planes show larger deviations from inversion than axis-aligned reflections. We believe this mainly comes from the limited precision of the PDB format, where coordinates are typically stored with three decimal places. Arbitrary-plane reflections are therefore accumulate larger rounding errors, whereas axis-aligned reflections are essentially sign flips and are numerically more stable. Since PDB is commonly used for protein structures, we view this as a realistic setting. Under this precision, AFI preserves rotation equivariance and maintains chirality separation under arbitrary reflections.

\subsection{Latent space analysis in detail}
\label{subsection:SI:latent-analysis}
We summarize the PCA, t-SNE, and UMAP visualizations, together with the pairwise distance heatmaps, for the three axial features used in AFI-EPT. Figures~\ref{fig:SI-20-clusters},~\ref{fig:SI-heatmap-cross},~\ref{fig:SI-heatmap-triple},and~\ref{fig:SI-heatmap-commutator} show that all three features organize amino-acid embeddings into 20 well-separated clusters in latent space, corresponding to the 20 canonical residue types, with L/D isomers of the same residue co-clustering. The distance heatmaps across residue types further corroborate this 20-cluster structure. The parameters and explained variance ratios are listed in Table~\ref{table:SI:cluster-param} and Table~\ref{table:SI:variance-ratios}.

We also show the zoom-in version of the tSNE latent clusters in Figure~\ref{fig:SI-cross-zoom-in} and~\ref{fig:SI-original-zoom-in}, where the latent codes of L/D chirality for each amino acid type are plotted separately.

\begin{figure}[htbp]
    \centering
    \includegraphics[width=0.8\linewidth]{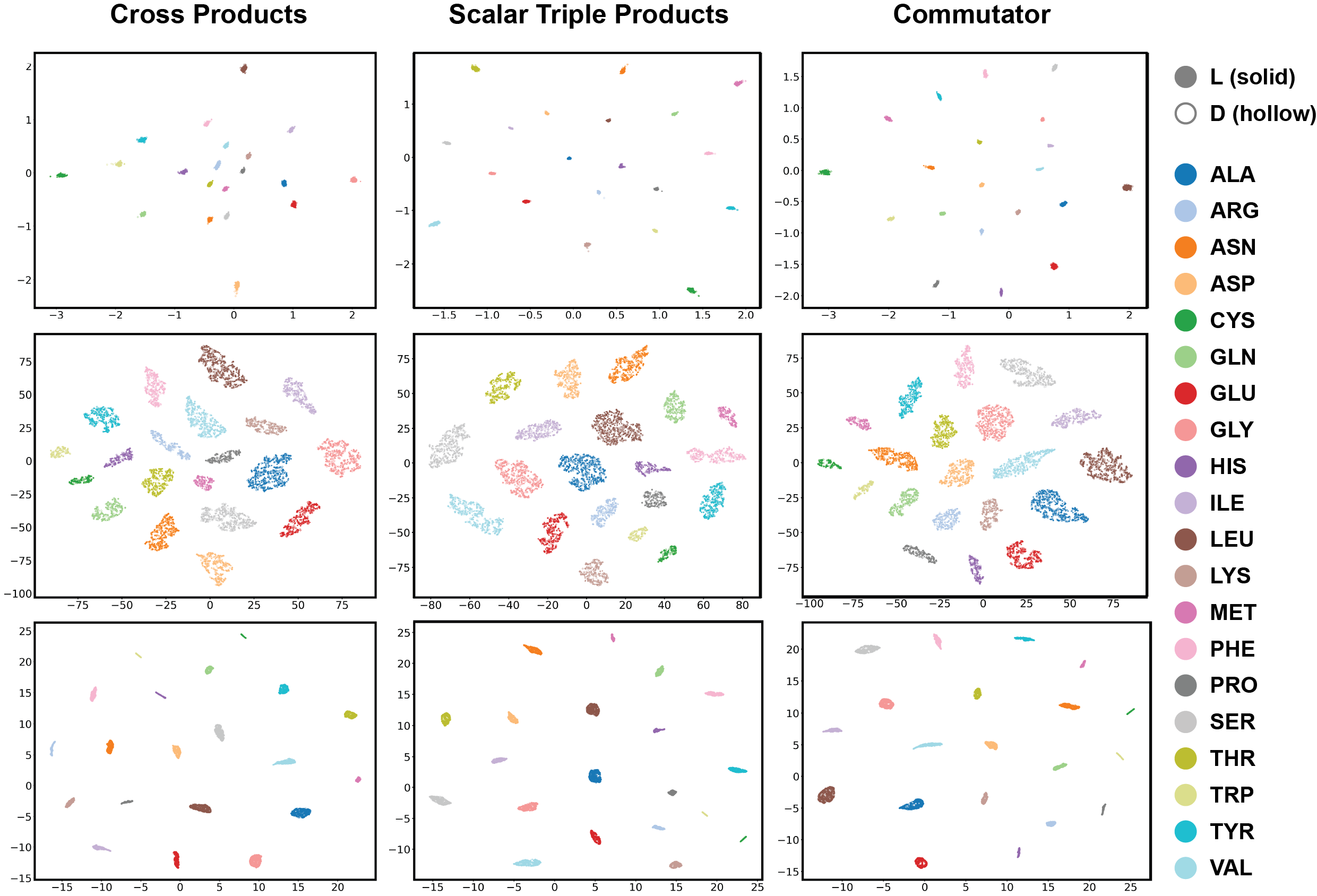}
    \caption{visualization of the clustering results of the LNR latent codes generated from PepMirror. From top to bottom, the rows display the clustering results using PCA, t-SNE, and UMAP, respectively.}
    \label{fig:SI-20-clusters}
\end{figure}

\begin{figure}[htbp]
    \centering
    \includegraphics[width=0.8\linewidth]{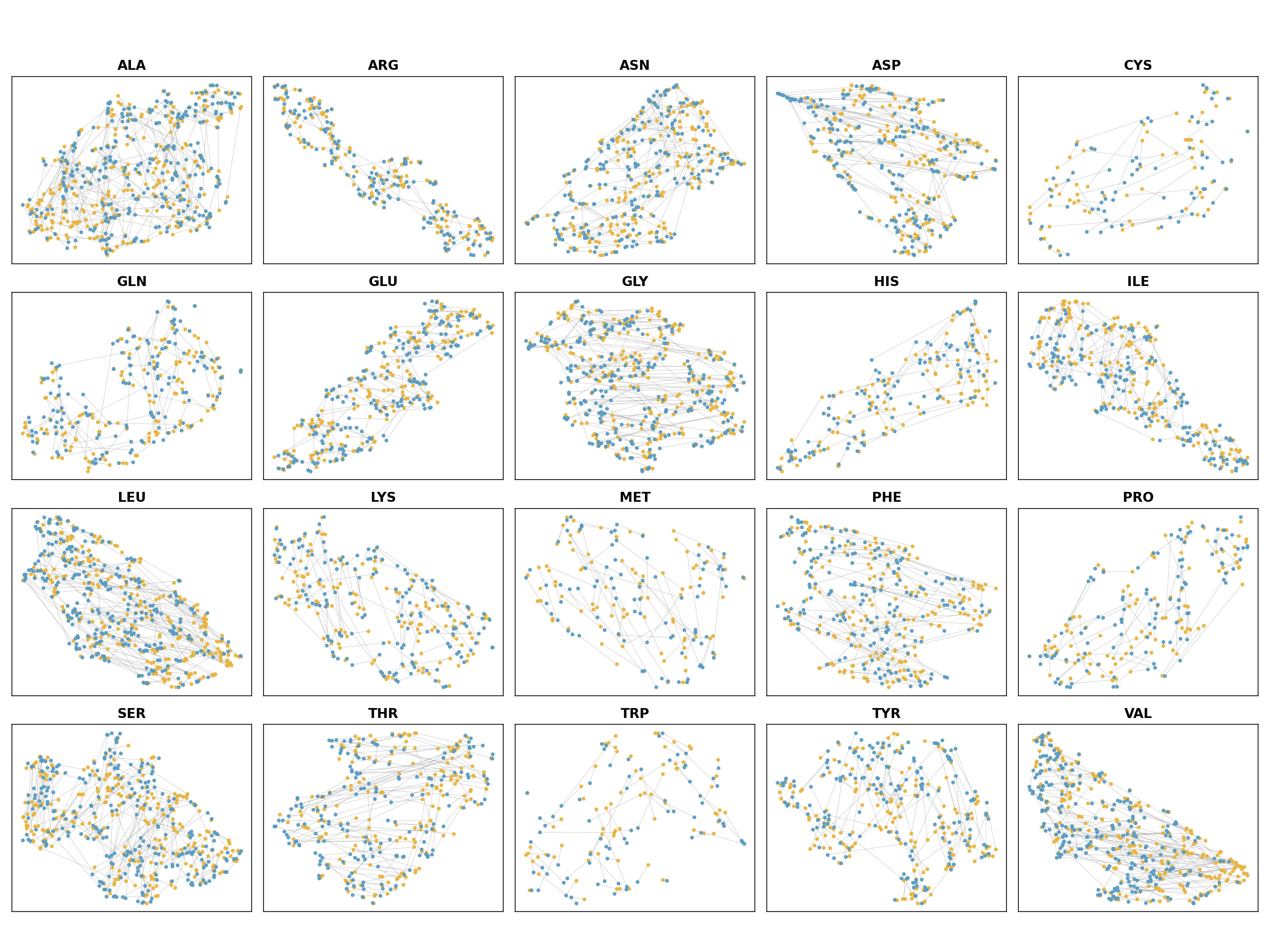}
    \caption{Visualization of the tSNE clustering results of the LNR latent codes generated from PepMirror(cross). Latent codes for L amino acids are orange, latent codes for D amino acids are blue, and a residue and its inverted structure are connected with a gray line.}
    \label{fig:SI-cross-zoom-in}
\end{figure}

\begin{figure}[htbp]
    \centering
    \includegraphics[width=0.8\linewidth]{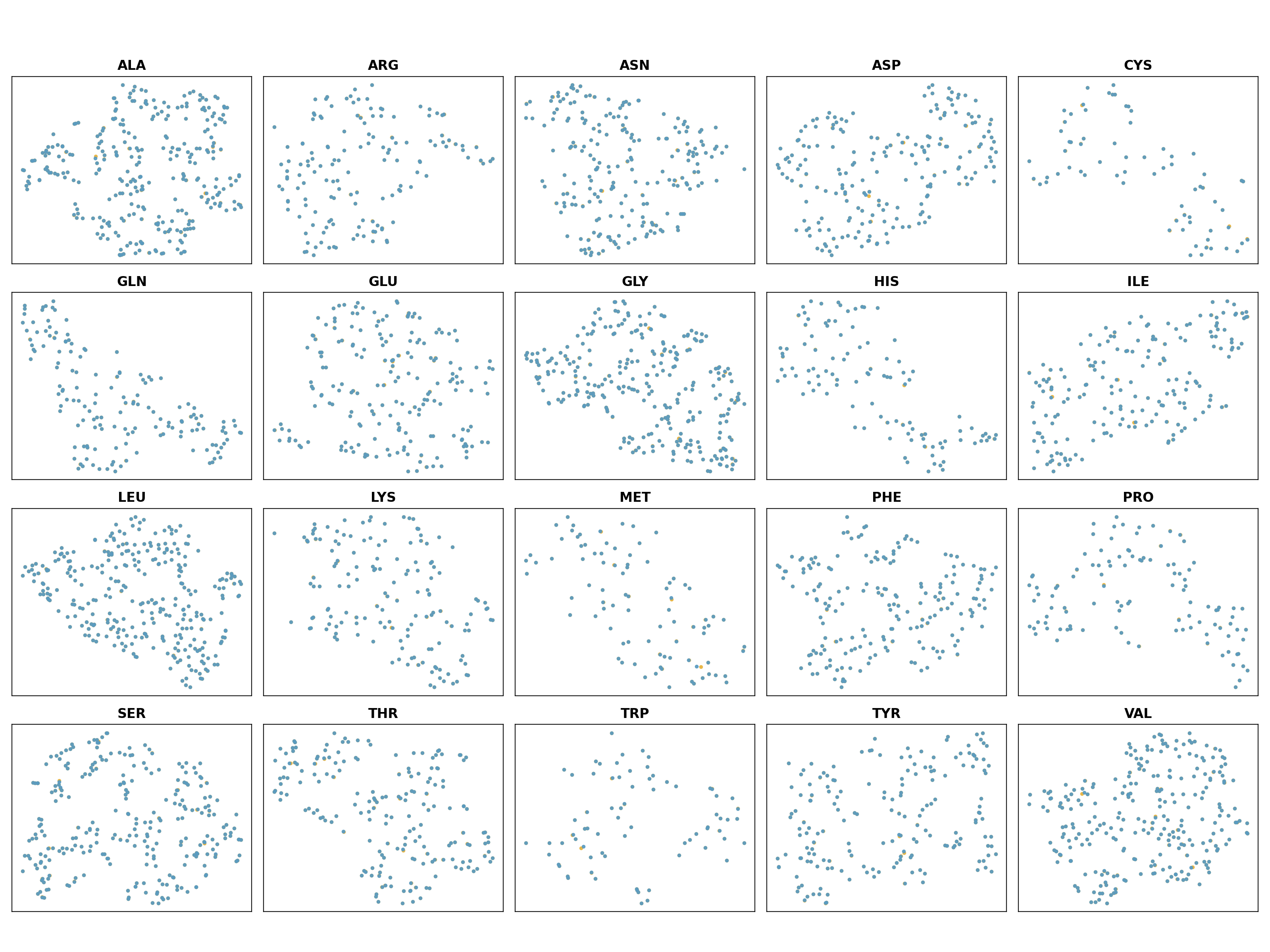}
    \caption{Visualization of the tSNE clustering results of the LNR latent codes generated from UniMoMo(pep.). Latent codes for L amino acids are orange, latent codes for D amino acids are blue, and a residue and its inverted structure are connected with a gray line.}
    \label{fig:SI-original-zoom-in}
\end{figure}

\begin{figure*}[htbp]
    \centering
    \includegraphics[width=\linewidth]{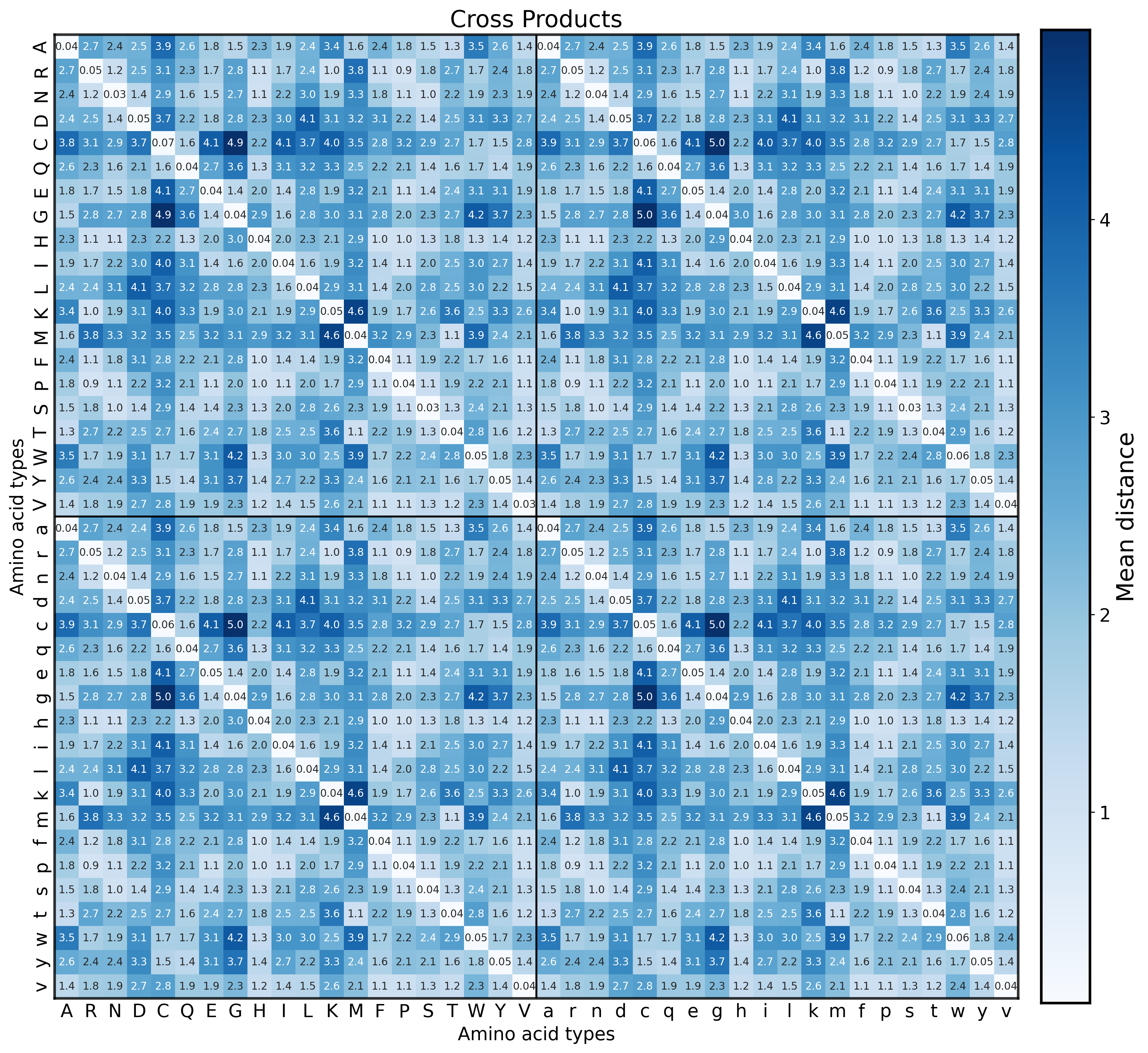}
    \caption{Heatmaps of mean-pooled euclidean distances between each amino acid type pair. The encoder of PepMirror(cross) is employed for encoding.}
    \label{fig:SI-heatmap-cross}
\end{figure*}

\begin{figure*}[htbp]
    \centering
    \includegraphics[width=\linewidth]{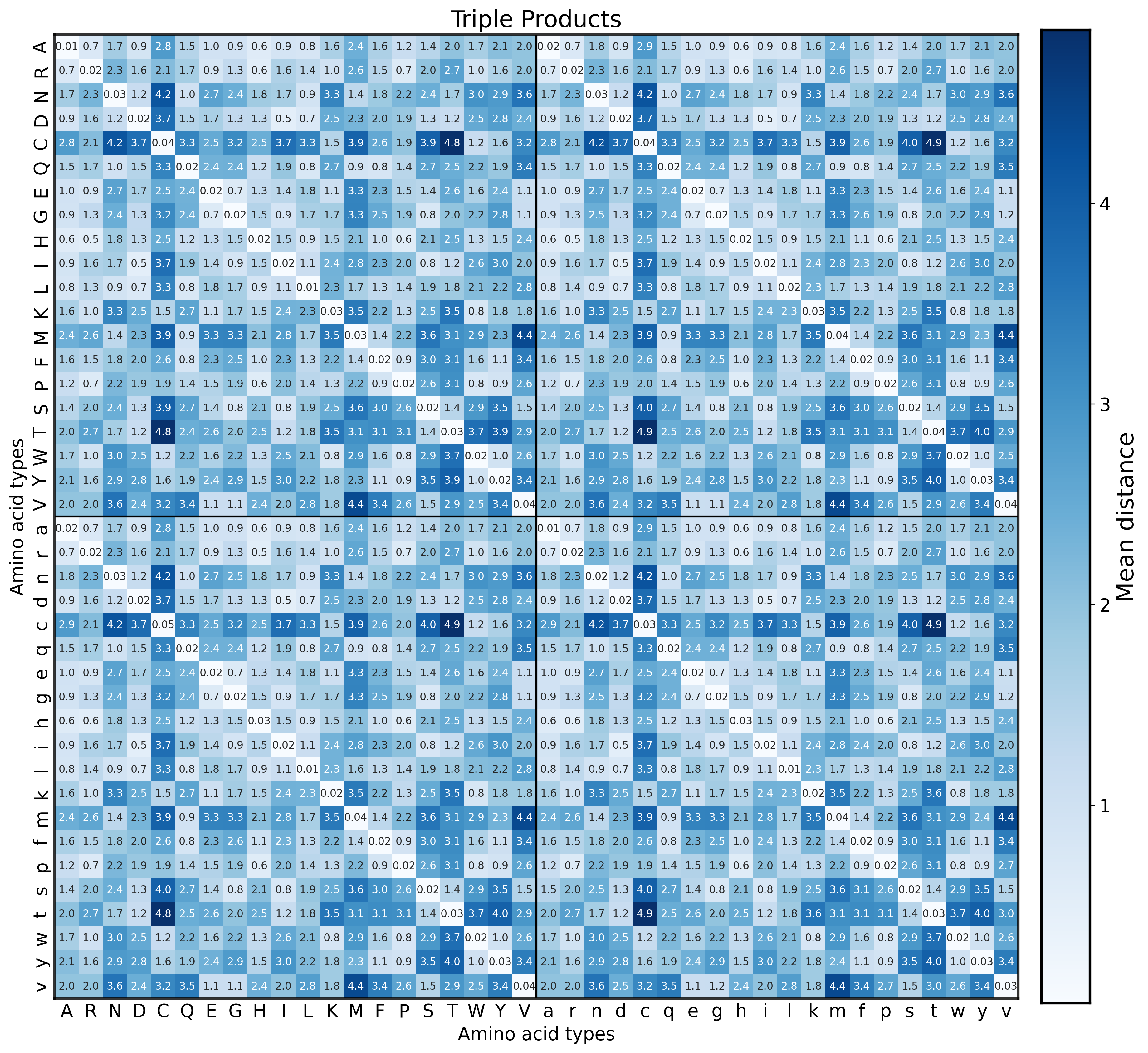}
    \caption{Heatmaps of mean-pooled euclidean distances between each amino acid type pair. The encoder of PepMirror(triple) is employed for encoding.}
    \label{fig:SI-heatmap-triple}
\end{figure*}

\begin{figure*}[htbp]
    \centering
    \includegraphics[width=\linewidth]{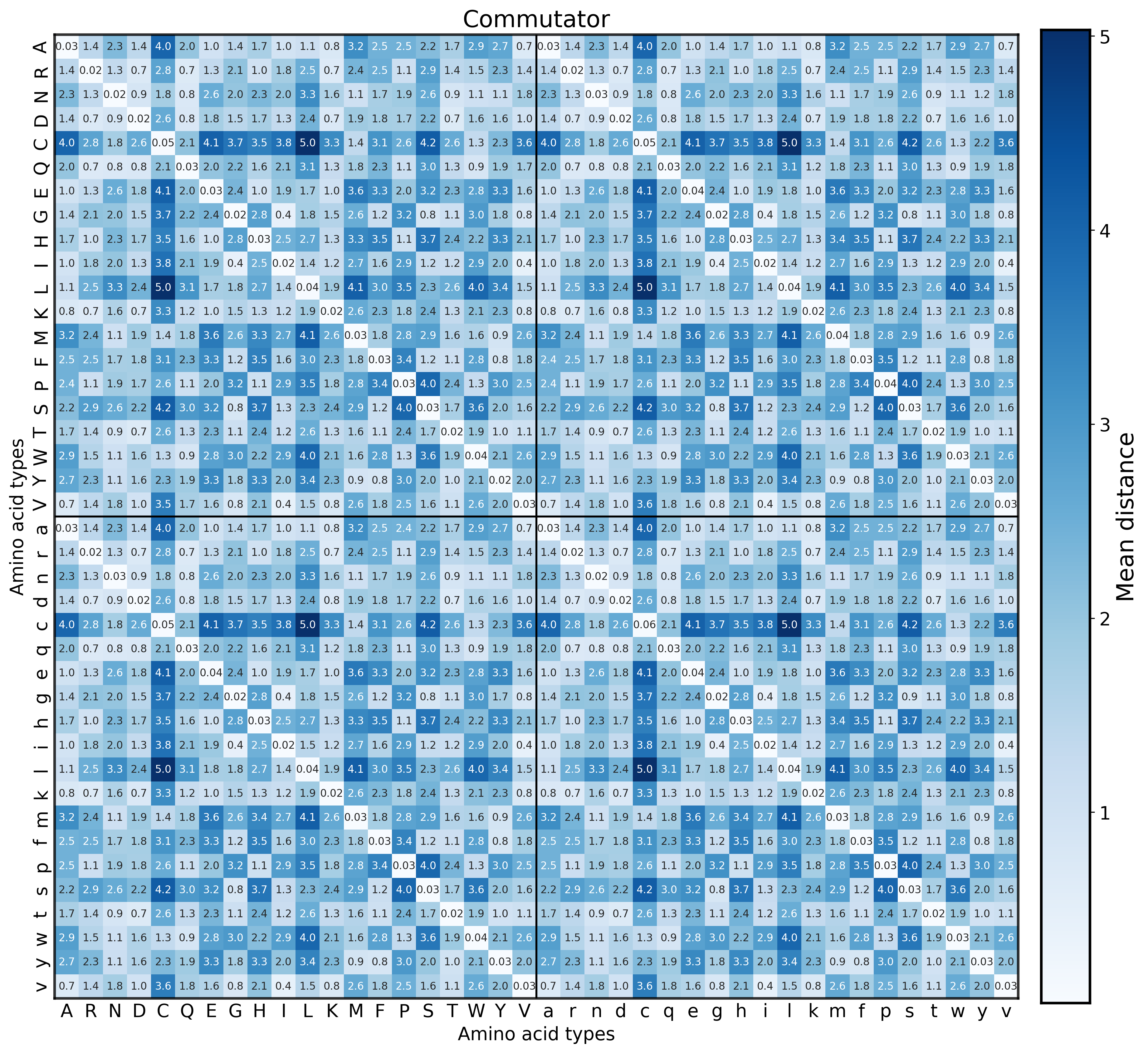}
    \caption{Heatmaps of mean-pooled euclidean distances between each amino acid type pair. The encoder of PepMirror(commutator) is employed for encoding.}
    \label{fig:SI-heatmap-commutator}
\end{figure*}

\begin{table}[H]
  \centering

  \begin{minipage}[t]{0.48\linewidth}
    \centering
    \caption{Parameters used in latent clustering}
    \label{table:SI:cluster-param}
    \begin{small}
      \begin{tabular}{ccc}
        \toprule
        Method & Parameter & Value \\
        \midrule
        PCA   & random state & 12 \\
        \midrule
        \multirow{3}{*}{tSNE} & perplexity & 50.0 \\
              & random state & 12 \\
              & metric & euclidean \\
        \midrule
        \multirow{3}{*}{UMAP} & n neighbors & 75 \\
              & random state & 12 \\
              & min dist & 0 \\
        \bottomrule
      \end{tabular}
    \end{small}
    \vskip -0.1in
  \end{minipage}\hfill
  \begin{minipage}[t]{0.48\linewidth}
    \centering
    \caption{Explained variance ratios in PCA}
    \label{table:SI:variance-ratios}
    \begin{small}
      \begin{tabular}{cccccc}
        \toprule
        \multicolumn{2}{c}{Cross} & \multicolumn{2}{c}{Triple} & \multicolumn{2}{c}{Commutator} \\
        \midrule
        PC & ratio & PC & ratio & PC & ratio \\
        \midrule
        1 & 0.346 & 1 & 0.523 & 1 & 0.547 \\
        2 & 0.342 & 2 & 0.477 & 2 & 0.453 \\
        3 & 0.312 & 3 & 3.4e-5 & 3 & 2.5e-5 \\
        4 & 4.9e-5 & 4 & 2.2e-5 & 4 & 2.1e-5 \\
        5 & 1.7e-5 & 5 & 9.9e-6 & 5 & 1.2e-5 \\
        6 & 1.2e-5 & 6 & 8.0e-6 & 6 & 6.9e-6 \\
        7 & 6.8e-6 & 7 & 7.3e-6 & 7 & 4.3e-6 \\
        8 & 4.1e-6 & 8 & 5.3e-6 & 8 & 2.7e-6 \\
        \bottomrule
      \end{tabular}
    \end{small}
    \vskip -0.1in
  \end{minipage}
\end{table}

\subsection{Evaluations on native interface coverage and diversity}
\label{subsection:bsr-and-diversity}

\paragraph{Settings}

\textbf{For native interface recovery}. Although D-peptide binders are not expected to match native binder sequences, native complexes provide a useful reference for plausible binding poses. We computed binding site recovery (BSR) to quantify how well the designed interface overlaps with the native binding site, capturing both the contact coverage and the epitope precision.
\textbf{For diversity}. Generating diverse candidates provides more starting points for downstream optimization. We quantified diversity at both the sequence and structure levels by clustering generated peptides using sequence overlap and C$\alpha$ RMSD thresholds, respectively. Diversity is defined as $N_{\text{clusters}}/N_{\text{samples}}$, reported as $Div_{\text{Seq}}$ and $Div_{\text{Struct}}$.

\paragraph{Results and analysis}
Interface recovery and diversity pull the design in different directions, reflecting the trade-off between fidelity to the native binding mode and exploration for alternative conformations that could be better solutions. Therefore, although these metrics are often treated as “the higher, the better”, excessively large values in either can be undesirable in practice. Overly high recovery may indicate limited exploration, whereas overly high diversity can signal weak adherence to the specified binding mode.

For most models, generated peptides cover more than 50\% of the native epitope, with DiffPepBuilder showing the highest coverage at around 90\%. This proximity to native states likely contributes to the competitive average and top interface energy in Table~2, where DiffPepBuilder nearly matches PepMirror. However, such adherence restricts sequence and structural diversity, leading to a markedly lower IMP than PepMirror despite similar affinity.

In contrast, RFDiffusion shows substantially lower coverage, especially on D-peptide design tasks. Consistently, it exhibits high D-peptide structural diversity (0.934), which largely stems from limited site specificity: many designs drift from the designated hotspots rather than diversifying within the intended pocket. PPFlow shows high diversity in both sequence and structure, likely due to weaker pocket-shape conditioning and limited clash control: peptides are less constrained by the intended geometry and often sample a broader, clash-prone conformational space. Unlike RFDiffusion, which frequently drifts away from the target region, PPFlow remains more engaged with the receptor, inducing more complex interfaces and consequently higher sequence diversity.

By comparison, PepMirror maintains BSR and diversity within a reasonable range, with only a small gap between L- and D-peptide tasks. This balance between reliable epitope anchoring and principled exploration may explain its strong IMP (Table~\ref{table:interface} and Table~\ref{table:SI:interface-rosetta}).

\begin{table}[t]
  \caption{Native interface coverage and diversity of generated L/D peptides by different models. The highest and the second highest value of each metric are labeled with orange/light orange for L tasks, and blue/light blue for D tasks.}
  \label{table:bsr}
  \begin{center}
    \begin{small}
        \begin{tabular}{ccccc}
          \toprule
          Models & Task   & BSR & $Div_{\text{seq}}$ & $Div_{\text{struct}}$  \\
          \midrule
          \multirow{2}{*}{RFDiffusion}  & L  & 62.39  & 0.496  &  0.701  \\
                                        & D  & 35.53  & 0.271  &  \cellcolor{Dbest!65}0.934  \\
         \multirow{2}{*}{DiffPepBuilder}& L  & \cellcolor{Lbest!65}90.24  & 0.211  &  0.319  \\
                                        & D  & \cellcolor{Dbest!65}89.63  & 0.237  &  0.383  \\
          \multirow{2}{*}{PepFlow}      & L  & 82.33  & 0.095  &  0.172  \\
                                        & D  & 73.57  & 0.145  &  0.346  \\
        \multirow{2}{*}{D-Flow}         & L  & 81.48  & 0.063  &  0.355  \\
                                        & D  & 79.93  & 0.082  &  0.446  \\
        \multirow{2}{*}{PPFlow}         & L  & 82.01  & \cellcolor{Lsecond!55}0.856  &  \cellcolor{Lbest!65}0.916  \\
                                        & D  & 81.78  & \cellcolor{Dsecond!55}0.859  &  \cellcolor{Dsecond!55}0.920  \\
        \multirow{2}{*}{PepGLAD(ideal)} & L  & 78.05  & \cellcolor{Lbest!65}0.860  &  \cellcolor{Lsecond!55}0.868  \\
                                        & D  & 78.24  & \cellcolor{Dbest!65}0.861  &  0.867  \\
        \midrule
        \multirow{2}{*}{PepMirror(cross)}  & L  & 87.51 & 0.846  &  0.719   \\
                                         & D  & 86.64 & 0.847  &  0.717   \\
        \multirow{2}{*}{PepMirror(triple)} & L  & 87.19 & 0.840 &  0.682   \\
                                         & D  & 86.52 & 0.848  &  0.727   \\
    \multirow{2}{*}{PepMirror(commu.)} & L  & \cellcolor{Lsecond!55}88.12 & 0.839  &  0.648   \\
                                         & D  & \cellcolor{Dsecond!55}87.34 & 0.848  &  0.686  \\
        \bottomrule
        \end{tabular}
    \end{small}
  \end{center}
  \vskip -0.1in
\end{table}

\subsection{Ramachandran plot analysis}
\label{subsection:rama-analysis}
The backbone conformation of a residue can be characterized by two torsions, known as the Ramachandran angles. Specifically, $\phi$ is the dihedral angle around the $\mathrm{N}$--$\mathrm{C}_{\alpha}$ bond defined by the four atoms $(\mathrm{C}_{i-1}, \mathrm{N}_i, \mathrm{C}_{\alpha i}, \mathrm{C}_i)$, and $\psi$ is the dihedral angle around the $\mathrm{C}_{\alpha}$--$\mathrm{C}$ bond defined by $(\mathrm{N}_i, \mathrm{C}_{\alpha i}, \mathrm{C}_i, \mathrm{N}_{i+1})$. 
Statistical analyses on known protein structures have shown that there is a preferred area in the joint distribution of $(\phi,\psi)$, which depends on side-chain structures. In other words, allowed conformations concentrate in certain regions of the $\phi$--$\psi$ plane. By definition, the Ramachandran distribution of a protein is the central inversion of that of its mirror-imgae, i.e., $(\phi,\psi)\mapsto(-\phi,-\psi)$. Therefore, Ramachandran plots provide a diagnostic for: (i) the physical plausibility of peptide backbones, and (ii) whether the sampled torsional preferences are consistent with the intended residue chirality.

Figure~\ref{fig:SI-Rama} reports the Ramachandran plots of peptides generated by some models. For RFDiffusion and PPFlow in line 1, we show that the backbone of generated peptides are align to be L-residues no matter what chirality the receptor is. In line 2, we show that although idealization could ensure homo L-chirality for PepGLAD, the mainchain torsions are still not ideal for L-peptides. In line 3, we show that UniMoMo(all) with original EPT has the E(3)-equivariant feature, where inverting targets causes the inversion of Ramachadran plots. In contrary, PepMirror with AFI-EPT not only maintains chirality at residue level, but also keeps backbone dihedrals to be suitable for L-peptides.

\begin{figure}[htbp]
    \centering
    \includegraphics[width=\linewidth]{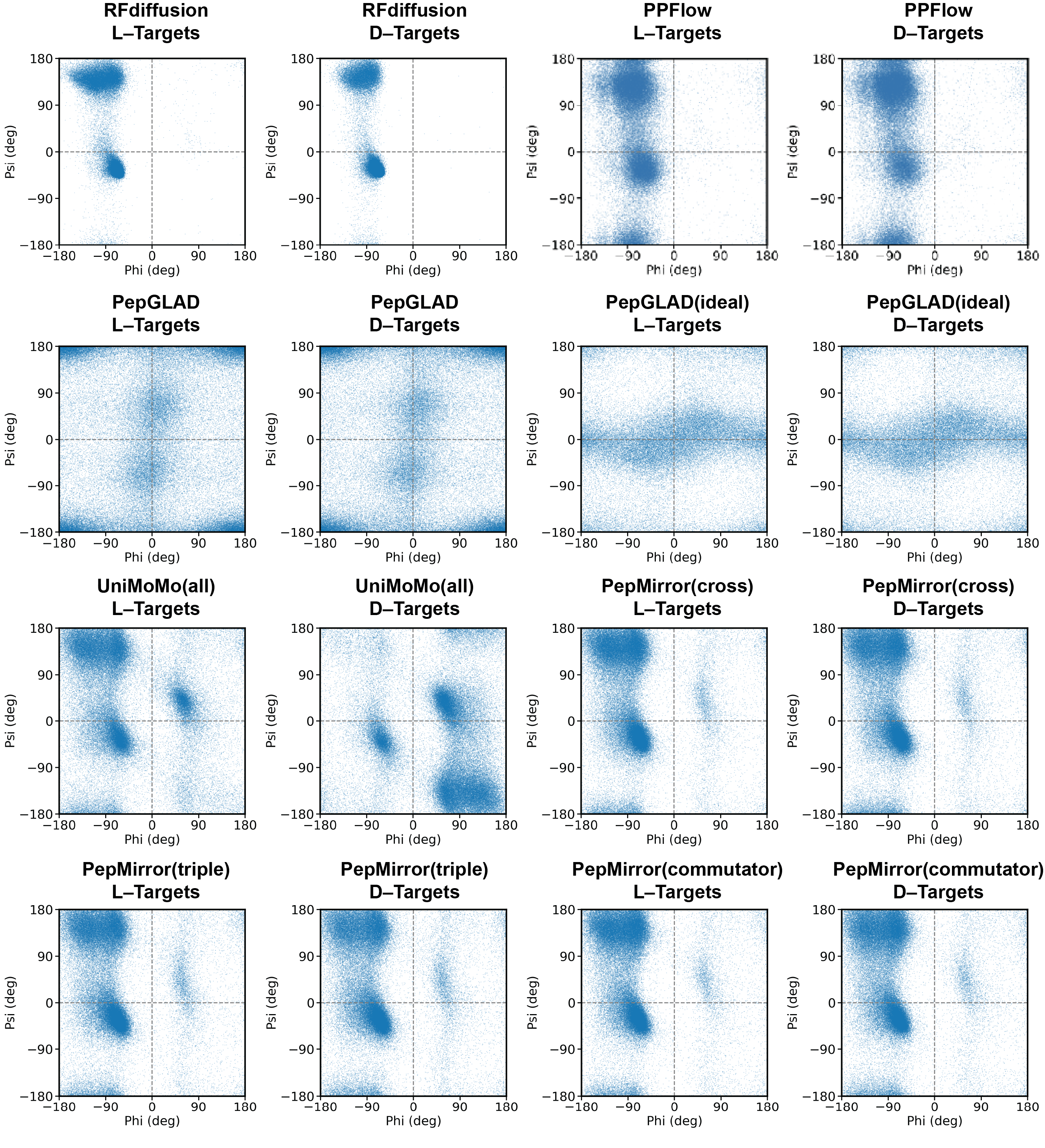}
    \caption{Ramachadran plots of generated peptides from certain models.}
    \label{fig:SI-Rama}
\end{figure}

\subsection{Interface affinity analysis by Rosetta}
\label{subsection:rosetta-analysis}

\textbf{Discrepancy of Rosetta score between protein enantiomers}.
To compare interface energies between L--L and L--D complexes reliably, the scoring function must be \emph{$E(3)$-invariant}: a structure and its enantiomer should receive the same total energy, and a complex should have the same binding (interface) energy as its mirror image.

Although Rosetta technically supports D-amino-acid residues, we empirically observe substantial inconsistencies when scoring enantiomeric inputs. Specifically, when we provide complexes from the LNR dataset and their mirror images to Rosetta and compute both total and interface energies, the resulting scores differ markedly. This discrepancy persists across multiple score functions and remains even after varying relaxation protocols, indicating that the lack of enantiomer consistency is not easily mitigated by choices of score functions or relax settings (Table.~\ref{table:SI:rosetta-discrepancy}). 

\begin{table}[t]
  \caption{The energy discrepancy of protein enantiomers calculated by Rosetta}
  \label{table:SI:rosetta-discrepancy}
  \begin{center}
    \begin{small}
        \begin{tabular}{ccccccccccc}
          \toprule
           \multirow{2}{*}{Entry}&\multicolumn{2}{c}{Backbone} & \multicolumn{2}{c}{Sidechain} &\multirow{2}{*}{Jump}& \multirow{2}{*}{Round}&\multirow{2}{*}{Score Function} & \multirow{2}{*}{$\Delta$Total Energy} & \multirow{2}{*}{$\Delta$ddG} & \multirow{2}{*}{$\Delta$dG}  \\
          & Lig. & Rec. & Lig. & Rec. & & & & & & \\
          \midrule
          1 & True & False & True & True & False & 2 & ref2015 & 5098.10 & 19.95 & 16.61 \\
          2 & True & True & True & True & False & 2 & ref2015 & 2772.00 & 13.25 & 12.03 \\
          3 & True & True & True & True & True & 2 & ref2015 & 2747.82 & 12.27 & 11.49 \\
          4 & True & True & True & True & False & 5 & ref2015 & 2757.52 & 13.09 &  11.81 \\
          5 & True & True & True & True & False & 2 & beta\_nov16 & 2469.61 & 13.34 & 12.05  \\
        \bottomrule
        \end{tabular}
    \end{small}
  \end{center}
  \vskip -0.1in
\end{table}

A detailed breakdown of the score terms reveals that the discrepancy is dominated by an abnormal increase in \texttt{fa\_rep} after spatial inversion. We hypothesize that this behavior stems from Rosetta's discrete, rotamer-library--based sampling. Because statistical coverage for D-protein conformations is limited, conformers that are deemed permissible for L-proteins may become underrepresented for their D counterparts. As a result, the relax trajectory can be biased, leading to higher steric repulsion. In contrast, full-atom forcefield--based tools such as AutoDock Vina yield nearly identical scores for enantiomeric protein complexes (the average score for L and D are both -4.57). We therefore adopt Vina as our interface-affinity evaluation metric. Nonetheless, we still report the results of interface affinity evaluation based on Rosetta for references.

\textbf{Interface affinity evaluation based on Rosetta}.

We follow the evaluation protocol in entry~1 of Table~\ref{table:SI:rosetta-discrepancy} to compute Rosetta ddG. Analogous to the interface affinity metrics in Section~\ref{paragraph:setup-and-metrics}, we report (i) top-1 ddG across multiple samples and the corresponding interface energy improvement (IMP) to capture best-case performance, and (ii) the success rate (fraction of designs with ddG$<0$) and the mean ddG over successful designs to reflect overall quality. For consistency, IMP is referenced to ddG values computed on L--L LNR complexes, while we note that directly comparing D--L to L--L may be biased due to Rosetta’s chirality-dependent discrepancy.

Table~\ref{table:SI:interface-rosetta} exhibits trends consistent with Table~\ref{table:interface}. In particular, the L--D performance cliff becomes more pronounced under Rosetta ddG. Some baselines achieve performance comparable to PepMirror on L-peptide tasks, yet suffer substantial degradations across metrics on D-peptide tasks. Consequently, PepMirror shows a markedly larger advantage on D-peptide design in this evaluation.

\begin{table}[t]
  \caption{Interface quality of of generated L/D peptides evaluated by Rosetta. The best and second best data is labeled orange/light orange for L tasks, and blue/light blue for D tasks.}
  \label{table:SI:interface-rosetta}
  \begin{center}
    \begin{small}
        \begin{tabular}{cccccc}
          \toprule
          Models & Task   & Suc. & Avg. & Top & IMP  \\
          \midrule
        \multirow{2}{*}{RFDiffusion}     & L  & 71.87  &  -22.86  & -38.87 & 50.00  \\
                                         & D  & 49.82  &  -7.30  & -19.09 & 10.00  \\
        \multirow{2}{*}{DiffPepBuilder}  & L  & 51.82  &  -15.37  & -22.58 & 24.44  \\
                                         & D  & 31.30  &  -12.66  & -6.66  & 16.67  \\
        \multirow{2}{*}{PepFlow}         & L  & \cellcolor{Lsecond!55}96.99  &  -20.45  & -33.02 & 35.56 \\
                                         & D  & 65.23  &  -12.43  & -23.18 & 17.78  \\
        \multirow{2}{*}{D-Flow}          & L  & 96.30  & -18.51   & -30.83 & 27.78  \\
                                         & D  & 72.08  &  -12.61  & -24.95 & 17.78  \\
        \multirow{2}{*}{PPFlow}          & L  & 10.15  &  -8.93  &  -13.14 & 8.89  \\
                                         & D  & 9.1  &  -9.05  &  -12.17 & 13.33  \\
        \multirow{2}{*}{PepGLAD(ideal)}  & L  & 83.05  &  -14.77  & -27.65 & 32.22  \\
                                         & D  & 77.92  & -13.97   & -29.25 & 28.89  \\
        \midrule
        \multirow{2}{*}{PepMirror(cross)}  & L  & 95.98 & -23.27 & -40.66 & \cellcolor{Lbest!65}61.11  \\
                                           & D  & 90.88 & -19.90 & -36.07 & \cellcolor{Dbest!65}47.78  \\
        \multirow{2}{*}{PepMirror(triple)} & L  & 96.48 & \cellcolor{Lsecond!55}-23.49 & \cellcolor{Lsecond!55}-40.80 & 58.89  \\
                                           & D  & \cellcolor{Dbest!65}91.42 & \cellcolor{Dbest!65}-20.37 & \cellcolor{Dbest!65}-36.51 & \cellcolor{Dsecond!55}45.56  \\
        \multirow{2}{*}{PepMirror(commu.)} & L  & \cellcolor{Lbest!65}97.23 & \cellcolor{Lbest!65}-24.01 & \cellcolor{Lbest!65}-41.25 & \cellcolor{Lsecond!55}60.00  \\
                                           & D  & \cellcolor{Dsecond!55}91.13 & \cellcolor{Dsecond!55}-20.31 & \cellcolor{Dsecond!55}-36.34 & 42.22  \\
        \multirow{2}{*}{PepMirror(scalar)}  & L  & 98.81  &  -29.95  & -47.86 & 60.00  \\
                                         & D  & 94.95  & -26.20   & -43.63 & 45.56  \\
          \bottomrule
        \end{tabular}
    \end{small}
  \end{center}
  \vskip -0.1in
\end{table}

\subsection{Stereo-selectivity of the designed peptide binder}
\label{subsection:reduced-stereo-selectivity}
After identifying D-1412 as a binder towards CD38, we synthesized and tested its enantiomer (named L-1412) in terms of the CD38 binding affinity. The BLI result shows a comparable affinity of L-1412 as well, indicating the lack of stereo-selectivity of D-1412. This result aligns with recent evidence that shows enantiomers can both retain binding, and the affinity difference decreases as the structure gets more disordered~\cite{newcombe2024stereochemistry}. And some enantiomer pairs that both have binding affinity may not have the same binding area~\cite{li2026transcending}. Considering D-1412 does not have a rigid folding structure, the reduced stereo-selectivity is understandable. We also confirmed the chirality of the enantiomers by cCircular dichroism (Figure~\ref{fig:SI-1412-CD}).

\begin{figure}[htbp]
    \centering
    \includegraphics[width=0.4\linewidth]{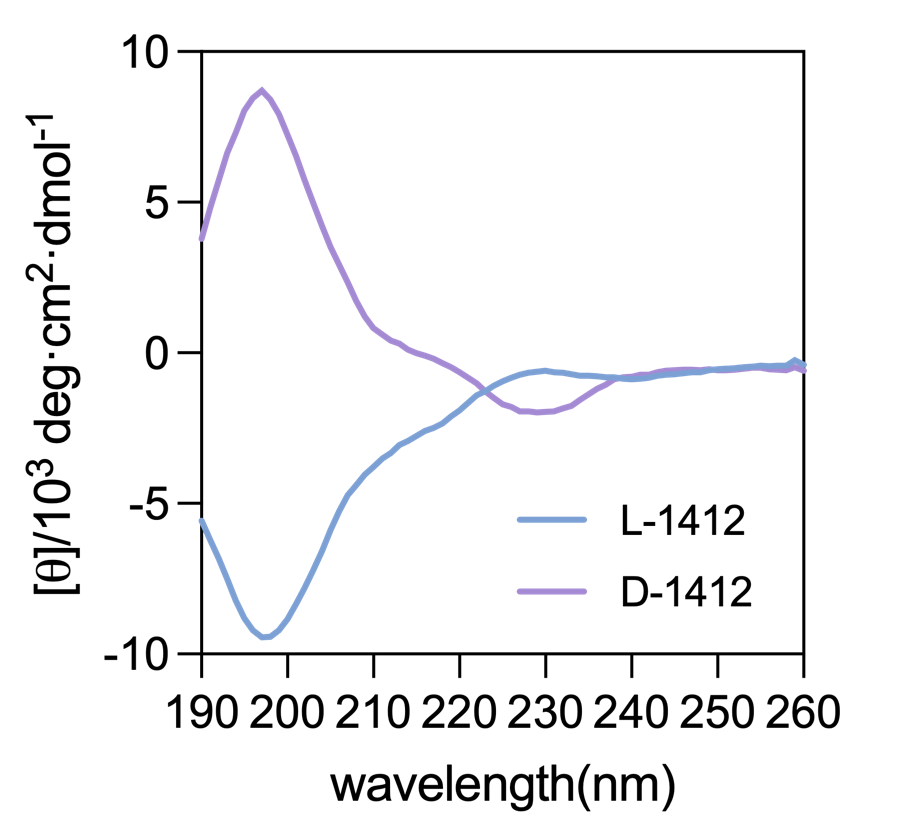}
    \caption{Circular dichroism (CD) of D-1412 and its enantiomer L-1412.}
    \label{fig:SI-1412-CD}
\end{figure}

\subsection{Extension of AFI}
\label{subsection:extension-afi}
Although we only showcased the application of AFI within the framework of UniMoMo, many variants can be easily designed. For example, the methods to construct axial vectors besides the three listed in this paper, the combination of these axial vectors that may provide complementary informations, the place to inject axial vectors (in FFN, after GNN, or both), and to use pesudo-scalar instead of axial vectors for chirality awareness. Here, we would like to share some preliminary results on testing these variants of AFI, and we believe these indicate the broader potential application of our method.

First, we tested the combination of all three axial vector types (cross, triple product projection, and commutator), where these vectors are all constructed and concatenated with the original polar vector features. The result show that this mixed version does not show much improvement (Table~\ref{table:triple-mix-performance}). However, it remains an interesting direction to try different combinations of axial features.

\begin{table}[htbp]
  \caption{The performance of the model that mixed all three types of axial feature as described in the main text}
  \label{table:triple-mix-performance}
  \begin{center}
    \begin{small}
        \begin{tabular}{ccccccc}
          \toprule
           \multirow{2}{*}{Task}   & \multicolumn{2}{c}{Right Chirality\%}  & \multicolumn{4}{c}{Vina Score}   \\
          \cmidrule(lr){2-3}
          \cmidrule(lr){4-7} 
                                       & Raw & Minimized & Suc.\% & Avg. & Top & IMP\%   \\
          \midrule
           L & 99.95 & 99.86 & 99.61 & -4.27 & -5.89 & 72.83 \\
           D & 99.97 & 99.87 & 99.70 & -4.18 & -5.75 & 75.00 \\
          \bottomrule
        \end{tabular}
    \end{small}
  \end{center}
  \vskip -0.1in
\end{table}

Moreover, instead of projecting triple scalar products into vector channels, we tested injecting them directly into node scalar features (denoted as triple\_scalar). We also tested a lightweight variant that applies such mixing only once after GNN-based feature initialization (denoted as triple\_scalar\_once). These variants still achieve similarly performance, while one-time injection leads to a slight drop in chirality consistency (Table~\ref{table:scalar-performance}). These results suggest that AFI does not have to rely on vector channels or EPT. Pseudo-scalar features can be easily constructed from a GNN, which can be injected into architectures without vector channels.

\begin{table}[htbp]
  \caption{Performances of the model variants that use pseudo-scalar features}
  \label{table:scalar-performance}
  \begin{center}
    \begin{small}
        \begin{tabular}{cccccccc}
          \toprule
          \multirow{2}{*}{Variant} & \multirow{2}{*}{Task}  & \multicolumn{2}{c}{Right Chirality\%}  & \multicolumn{4}{c}{Vina Score}   \\
          \cmidrule(lr){3-4}
          \cmidrule(lr){5-8} 
                                      &  & Raw & Minimized & Suc.\% & Avg. & Top & IMP\%   \\
          \midrule
           \multirow{2}{*}{triple\_scalar}    & L & 99.50 & 99.37 & 99.57 & -4.31 & -5.91 & 76.34 \\
                                              & D & 99.33 & 99.19 & 99.61 & -4.22 & -5.88 & 68.82 \\
        \multirow{2}{*}{triple\_scalar\_once} & L & 98.04 & 97.99 & 99.72 & -4.25 & -5.88 & 72.04 \\
                                              & D & 97.05 & 96.98 & 99.70 & -4.17 & -5.76 & 65.59 \\
          \bottomrule
        \end{tabular}
    \end{small}
  \end{center}
  \vskip -0.1in
\end{table}

%%%%%%%%%%%%%%%%%%%%%%%%%%%%%
%%%%%% Theory Appendix %%%%%%
%%%%%%%%%%%%%%%%%%%%%%%%%%%%%

\section{Theory Details}

\subsection{Finding axial vector features via decomposition of the tensor product of SO(3) representations }
\label{app:so3_tensor_decomp}

This subsection recaps the minimal $SO(3)$ representation theory we use and uses the concrete Cartesian formulas that yield the dot product, cross product, and the symmetric-traceless tensor $M(u)$. We then explain how our three axial features are obtained by composing these low-order geometric components.

\paragraph{Basic notation for $SO(3)$ representations.}
Let $SO(3)$ act on $V:=\mathbb R^3$ by the standard (geometric) action $v \mapsto R v$. This $3$-dimensional representation is the irreducible representation of angular momentum index $l=1$, commonly denoted $V^{(1)}$ ($\operatorname{dim} V^{(\ell)} = 2\ell +1, \ell \ge 0$).
For any two representations $U,W$ of $SO(3)$, the tensor-product representation $U\otimes W$ is defined by
\begin{equation}
    R \cdot (u\otimes w) := (Ru)\otimes(Rw), \qquad \forall R\in SO(3), u\in U, w \in W,
\end{equation}
extended linearly.

A fundamental problem in representation theory is to decompose the tensor products into irreducible representations. The Clebsch-Gordan decomposition (see ~\cite{fulton_representationtheory_2004, hall_liegroupslie_2015}, or ~\cite{thomas_tensorfieldnetworks_2018} for a machine learning perspective) is such a result: for irreducible representations $V^{(\ell_1)}$ and $V^{(\ell_2)}$,
\begin{equation}
    V^{(\ell_1)}\otimes V^{(\ell_2)}
    \;\cong\;
    \bigoplus_{J=|\ell_1-\ell_2|}^{\ell_1+\ell_2} V^{(J)}.
\end{equation}
In particular,
\begin{equation}
    V^{(1)}\otimes V^{(1)} \;\cong\; V^{(0)} \oplus V^{(1)} \oplus V^{(2)}.
    \label{eq:1x1_decomp}
\end{equation}

\paragraph{Cartesian realization via rank-2 tensors.}
Identify $V\otimes V$ with the space of rank-2 Cartesian tensors (matrices) by
\begin{equation}
  \Phi:\ V\otimes V \to \mathbb R^{3\times 3},\qquad
  \Phi( u\otimes v) = u\, v^\top.
\end{equation}
Under this identification, the $SO(3)$ action becomes conjugation $A\mapsto RAR^\top$.
A convenient explicit realization of the three irreducible summands in~\eqref{eq:1x1_decomp} is given by the classical decomposition of a rank-2 tensor $T_{ij}=u_iv_j$ into (i) scalar representation, (ii) vector representation, and (iii) traceless symmetric tensor representation; see, e.g., the formulas summarized in~\cite{rice_so3_notes}:
\begin{align}
  \text{(scalar, $l=0$)}\quad
  &\mathcal P_0(T) := T_{kk} = u\cdot v,
  \label{eq:P0}\\
  \text{(vector, $l=1$)}\quad
  &(\mathcal P_1(T))_i := \epsilon_{ijk}T_{jk} = (u\times v)_i,
  \label{eq:P1}\\
  \text{(traceless sym., $l=2$)}\quad
  &(\mathcal P_2(T))_{ij}
  := \frac12(T_{ij}+T_{ji})-\frac13\delta_{ij}T_{kk},
  \label{eq:P2}
\end{align}
where we employ the Einstein summation convention throughout, and $\epsilon_{ijk}$ is the Levi-Civita symbol, which is antisymmetric in all indices with $\epsilon_{123} = 1$, $\mathcal{P}_i$ is the component-extraction map.
Equations~\eqref{eq:P0}--\eqref{eq:P2} realize the decomposition~\eqref{eq:1x1_decomp}
in a basis-free Cartesian form:
$\mathcal P_0$ extracts the $l=0$ component, $\mathcal P_1$ extracts the $l=1$ component, and $\mathcal P_2$ extracts the $l=2$ component.

\paragraph{The symmetric--traceless tensor $M(u)$.}
As another geometric component, specializing~\eqref{eq:P2} to $ u= v$ yields the standard traceless symmetric tensor
\begin{equation}
  M(u) :=
  \mathcal P_2(u \otimes u) = u u^\top - \frac13\| u\|^2 I.
  \label{eq:Mu_def}
\end{equation}
This tensor carries the $l=2$ irreducible representation $V^{(2)}$.

\paragraph{From low-order components to our three axial features.}
In our model, the input provides multiple $l=1$ vector channels; denote three such polar vector channels by
$u, v, w\in\mathbb R^3$.
The decomposition above provides two key geometric building blocks:
(i) the $l=1$ coupling $u\times v$ from~\eqref{eq:P1}, and
(ii) the $l=2$ object $M(u)$ from~\eqref{eq:Mu_def}.
We then form three (axial) vector features by simple compositions of these primitives:
\begin{align}
  \mathbf f_1 &:= u\times v,
  \label{eq:f1}\\
  \mathbf f_2 &:= \bigl(( u\times v)\cdot w\bigr)\,w,
  \label{eq:f2}\\
  \mathbf f_3 &:= \mathrm{ax}\bigl( [M(v),M(u)] \bigr) = - (u\cdot v)(u\times v),
  \label{eq:f3}
\end{align}
where $[A,B]=AB-BA$ is the matrix commutator and $\mathrm{ax}(\cdot)$ maps an antisymmetric matrix $A$ to its axial vector $a\in\mathbb R^3$ defined by $A_{ij}=\epsilon_{ijk}a_k$.

Feature $\mathbf f_1$ captures the oriented normal of the $(u,v)$-plane. $\mathbf f_2$ injects signed-volume information via the scalar triple product. $\mathbf f_3$ vanishes when $u \perp v$ or $u \parallel v$, encoding a distinct coupling between $u$ and $v$. In practice, for numerical stability in the deep Neural Network, we use normalization in some channels, see Alg.~\ref{alg:afi_tensor} for AFI implementation of the above three axial vector features.

\begin{remark}
    In principle, one can construct infinitely many axial vector features from polar vector channels by composing higher-order tensor products and contractions.
    In practice, we use only the three low-order features above for their computational efficiency and ease of implementation.
    Our construction is not intended to enumerate all possible axial features, nor to claim optimality.
\end{remark}

\begin{algorithm}[tb]
  \caption{Axial Feature Injection (AFI). \\ \textit{Note:} The \texttt{if/else} structure is for illustrative purposes only. In implementation, each \textsc{Type} should be compiled into a separate function or module to eliminate runtime branching and ensure efficient execution.}
  \label{alg:afi_tensor}
  \begin{algorithmic}
    \STATE {\bfseries Input:} polar ($E(3)$-equivariant) vector channels $V\in\mathbb{R}^{N\times 3\times K}$
    \STATE {\bfseries Output:} polar-axial mixing (only $SE(3)$-equivariant) vector channels $\widetilde V\in\mathbb{R}^{N\times 3\times K}$
    \STATE {\bfseries Parameters:} an unbiased linear layer $\texttt{Linear}:\mathbb{R}^{2K}\to\mathbb{R}^{K}$ applied along the channel dimension
    \STATE {\bfseries Choice:} axial vector feature $\textsc{Type}\in\{\textsc{Cross},\textsc{Triple},\textsc{Commutator}\}$

    \STATE \textbf{Channel shift (wrap-around) and normalization:}
    \STATE $V^{(1)} \leftarrow \texttt{roll}(V,\,-1,\ \mathrm{dim}=-1)$ \hfill{\small (next channel)}
    \STATE $\widehat V^{(1)} \leftarrow \texttt{normalize}(V^{(1)},\ \mathrm{dim}=-2, \, \mathrm{keep dim} = \mathrm{True})$ \hfill{\small (normalize to unit vector in $\mathbb{R}^3$ )}
    \STATE \textbf{Construct axial channels $A\in\mathbb{R}^{N\times 3\times K}$:}
        \IF{$\textsc{Type}=\textsc{Cross}$}
          \STATE $A \leftarrow \texttt{cross} (V,\, \widehat V^{(1)}, \, \mathrm{dim}=-2)$
        \ELSIF{$\textsc{Type}=\textsc{Triple}$}
            \STATE $V^{(2)} \leftarrow \texttt{roll}(V,\,-2,\ \mathrm{dim}=-1)$ \hfill{\small (next-next channel)}
            \STATE $\widehat V^{(2)} \leftarrow \texttt{normalize}(V^{(2)},\ \mathrm{dim}=-2, \, \mathrm{keep dim} = \mathrm{True})$
          \STATE $C \leftarrow \texttt{cross} (V,\, \widehat V^{(1)}, \, \mathrm{dim}=-2)$
          \STATE $s \leftarrow \texttt{dot} (C,\, \widehat V^{(2)}, \, \mathrm{dim}=-2, \, \mathrm{keep dim} = \mathrm{True}) $ 
          \STATE $A \leftarrow s \odot \widehat V^{(2)}$
        \ELSIF{$\textsc{Type}=\textsc{Commutator}$}
          \STATE $V_{\mathrm{norm}} \leftarrow \texttt{normalize}(V,\ \mathrm{dim}=-2, \, \mathrm{keep dim} = \mathrm{True})$
          \STATE $s \leftarrow \texttt{dot} (V_{\mathrm{norm}},\, \widehat V^{(1)}, \, \mathrm{dim}=-2, \, \mathrm{keep dim} = \mathrm{True})$ 
          \STATE $C \leftarrow \texttt{cross} (V,\, \widehat V^{(1)}, \, \mathrm{dim}=-2)$
          \STATE $A \leftarrow s \odot C$
        \ENDIF

    \STATE \textbf{Inject axial information by channel mixing:}
    \STATE $Z \leftarrow \texttt{Concat}(V,\,A;\ \mathrm{dim}=-1)\in\mathbb{R}^{N\times 3\times 2K}$
    \STATE $\widetilde V \leftarrow \texttt{Linear}(Z)$ \hfill{\small (applied to the last dimension, no bias)}
    \STATE {\bfseries return} $\widetilde V$
  \end{algorithmic}
\end{algorithm}

\subsection{Equivariance proof}\label{app:equivariance-proof}
\begin{proof}[proof of Proposition~\ref{prop:equivariance-of-a-v}]
It is direct to check the parity of axial features: if we reflect $u, v, w \mapsto -u, -v, -w$, the features will not change sign. Since the feature mixing is channel-wise and the polar vector features are $E(3)$-equivariant, we only need to check the axial features are $SE(3)$-equivariant.
Let $g=(R,t)\in SE(3)$ with $R\in SO(3)$. The three features are built from \emph{vector} inputs
$u,v$ (and possibly $w$), translations do not act on these vectors since the initialization (embedding) of the input molecules uses the differences of vectors \cite{jiao2024ept, kong2025unimomo}, we only need to check $SO(3)$-equivariance.
We write $g\cdot u := Ru$.

\medskip
\noindent\textbf{(1) Cross product.}
Define $\mathbf f_1(u,v):=u\times v\in\mathbb{R}^3$. For any $R\in SO(3)$ and any $x\in\mathbb{R}^3$,
\[
x\cdot\big((Ru)\times(Rv)\big)
=\det[x,Ru,Rv]
=\det[R^\top x,u,v]
=(R^\top x)\cdot(u\times v)
=x\cdot R(u\times v),
\]
where we used the scalar triple product identity $a\cdot(b\times c)=\det[a,b,c]$ and $\det R=1$.
Since this holds for all $x$, we get
\[
(Ru)\times(Rv)=R(u\times v),
\]
i.e.\ $\mathbf f_1$ is $SO(3)$-equivariant (hence $SE(3)$-equivariant).

\medskip
\noindent\textbf{(2) Scalar triple product times a vector.}
Let $\tau(u,v,w):=w\cdot(u\times v)\in\mathbb{R}$ and define the vector feature
\[
    \mathbf f_2(u,v,w):=\tau(u,v,w)\,w=\big(w\cdot(u\times v)\big)\,w\in\mathbb{R}^3.
\]
Using the same determinant identity and $\det R=1$,
\[
    \tau(Ru,Rv,Rw)
    =(Rw)\cdot\big((Ru)\times(Rv)\big)
    =\det[Rw,Ru,Rv]
    =\det[w,u,v]
    =\tau(u,v,w),
\]
so $\tau$ is $SO(3)$-\emph{invariant}. Therefore
\[
    \mathbf f_2(Ru,Rv,Rw)
    =\tau(Ru,Rv,Rw)\,(Rw)
    =\tau(u,v,w)\,Rw
    =R\,f_2(u,v,w),
\]
so $ \mathbf f_2$ is $SO(3)$-equivariant (hence $SE(3)$-equivariant).

\medskip
\noindent\textbf{(3) Commutator feature via traceless rank-2 tensors.}
Let
\[
    M(u):= uu^\top-\frac{1}{3}\|u\|^2 I\in\mathbb{R}^{3\times 3}.
\]
Define
\[
    \mathbf f_3(u,v) := \mathrm{ax} \bigl( [M(v), M(u)] \bigr) = \mathrm{ax} \bigl( [vv^\top,uu^\top] \bigr) .
\]
For any $x\in\mathbb{R}^3$,
\[
    [vv^\top, uu^\top]x
    = v(v^\top u)(u^\top x)-u(u^\top v)(v^\top x)
    =(u\cdot v)\big(v(u\cdot x)-u(v\cdot x)\big).
\]
Using the vector triple product identity $(a\times b)\times x=b(a\cdot x)-a(b\cdot x)$, we get
\[
    v(u\cdot x)-u(v\cdot x) = (u\times v)\times x,
\]
so
\[
    [vv^\top,uu^\top] x = (u\cdot v)\,(u\times v)\times x.
\]
Therefore $[vv^\top,uu^\top]$ is the cross-product matrix of the axial vector
\[
    (u\cdot v)(u\times v),
\]
which is equivalent up to the convention of the hat map (i.e., $\widehat{\omega}x=\omega\times x$).
We obtain $\mathbf f_3(u,v)=(u\cdot v)(u\times v)$.
Therefore, since $(Ru)\cdot(Rv)=u\cdot v$ and $(Ru)\times(Rv)=R(u\times v)$ (by part~(1)),
\[
    \mathbf f_3(Ru,Rv)=((Ru)\cdot(Rv))\,((Ru)\times(Rv))= (u\cdot v)\,R(u\times v)=R\,\mathbf f_3(u,v),
\]
so $\mathbf f_3$ is $SO(3)$-equivariant.

\medskip
Combining the three parts, all listed features are $SE(3)$-equivariant with respect to $(u,v)$ (and $w$ in (2)).

\end{proof}

\subsection{Chirality awareness with AFI}\label{app:discrepancy}

\subsubsection{Setup and notation.}
Fix a sample $X$ and consider a fixed node index $i\in\{1,\dots,N\}$ and an output channel index $k\in\{1,\dots,K\}$.
Throughout Appendix \ref{app:discrepancy}, $H(X)$ and $V(X)$ refer to the intermediate features before the channel-wise MLP $\varphi$.
In particular, we write
\[
    V_i(X) = \big(v_{i,1}(X),\dots,v_{i,K}(X)\big),\qquad v_{i,k}(X)\in\mathbb{R}^3,
\]
for the polar vector channels at this stage (i.e., \emph{before} the feature mixing at the end of this block).

Based on the polar vector features, we also construct axial vector feature channels like ~\ref{subsection:discrepancy},
\[
    A_i(X)=\big(a_{i,1}(X),\dots,a_{i,K}(X)\big),\qquad a_{i,k}(X)\in\mathbb{R}^3,
\]
computed from the same post-attention, pre-FFN representations.

By definition, polar vectors change sign while axial vectors do not:
\begin{equation}\label{eq:parity-polar-axial}
    v_{i,k}(-X)=-v_{i,k}(X),\qquad a_{i,k}(-X)=a_{i,k}(X),\qquad k=1,\dots,K.
\end{equation}

\subsubsection{Linear feature mixing}
Let $A_k, B_k \in \mathbb{R}^K$ be channel-wise mixing coefficients (corresponding to the channel-mixing linear map applied after concatenating $(v,a)$).

To quantify the typical discrepancy behavior, we introduce a mild distributional assumption on the mixing coefficients $(A_k, B_k)$. 
Our arguments only require that the coordinates are \emph{independent sub-Gaussian } for concentration. For notational convenience and to keep constants explicit, we state the assumption in the Gaussian case. The same proof extends verbatim to any independent sub-Gaussian distributions with comparable parameters (For example, bounded valued distributions).

\begin{assumption}[Mixing coefficients distribution]\label{ass:random}
    For every $k=1, \dots , K$, $A_k$ and $B_k$ are independent Gaussian vectors
    \[
    A_k\sim\mathcal{N}(0,\sigma_A^2 I_K),\qquad
    B_k\sim\mathcal{N}(0,\sigma_B^2 I_K),
    \]
    for some $\sigma_A,\sigma_B>0$.
\end{assumption}

Define the mixed vector features
\begin{equation}\label{eq:def-vtilde}
    \widetilde v_{i,k}(X):= A_k^\top v_{i,:}(X)+B_k^\top a_{i,:}(X)\in\mathbb{R}^3,
\end{equation}
where $v_{i,:}(X):=(v_{i,1}(X),\dots,v_{i,K}(X))$ and $A_k^\top v_{i,:}(X)$ means the linear combination
\[
A_k^\top v_{i,:}(X)=\sum_{\ell=1}^K (A_k)_\ell\,v_{i,\ell}(X)\in\mathbb{R}^3,
\]
and similarly $B_k^\top a_{i,:}(X)=\sum_{\ell=1}^K (B_k)_\ell\,a_{i,\ell}(X)\in\mathbb{R}^3$.

Note that the scalar part of the latent code is $c(X) = \varphi(H(X), \|\widetilde V(X)\|)$. Under the central reflection,
\begin{align}
    \varphi(-X) &= \varphi([H(X), \| A^\top v(X) + B^\top a(X) \| ]) \\
    \varphi(X)  &= \varphi([H(X), \|-A^\top v(X) + B^\top a(X)\| ]).
\end{align}
The key quantity is the parity-induced norm difference
\[
    \Delta_{i,k}(X)\;:=\;\big|\;\|\widetilde v_{i,k}(X)\|-\|\widetilde v_{i,k}(-X)\|\;\big|.
\]

% ------------------------------------------------------------
% Assumptions
% ------------------------------------------------------------
\begin{assumption}[Boundedness of vector features]\label{ass:energy}
There exist constants $S_v,S_a>0$ such that for every fixed sample $X$ and node $i$,
\[
\|v_{i,:}(X)\|_F^2:=\sum_{\ell=1}^K\|v_{i,\ell}(X)\|^2\le S_v^2,
\qquad
\|a_{i,:}(X)\|_F^2:=\sum_{\ell=1}^K\|a_{i,\ell}(X)\|^2\le S_a^2.
\]
\end{assumption}

\begin{assumption}[Non-degenerate polar--axial correlation]\label{ass:correlation}
Define the correlation matrix $C=C_i(X)\in\mathbb{R}^{K\times K}$ by
\begin{equation}\label{eq:def-C}
    C_{k \ell}:=\langle v_{i,k}(X),a_{i,\ell}(X)\rangle\qquad(1\le k,\ell\le K).
\end{equation}
Assume
\[
\|C\|_F\ge \tau>0
\quad\text{and}\quad
r_{\mathrm{eff}}(C):=\frac{\|C\|_F^2}{\|C\|_{\mathrm{op}}^2}\ge r_0,
\]
for some constants $\tau>0$ and $r_0\ge 1$.
Note that $ \|C\|_F^2 $ is the square sum of all singular values of $C$, $\|C\|_{\mathrm{op}}$ is the maximal singular value of $C$. $r_{\mathrm{eff}}(C)$ is the so-called efficient rank, a quantitative value of the rank of matrix $C$.
\end{assumption}

% ------------------------------------------------------------
% Lemma 1: algebraic identity
% ------------------------------------------------------------
\begin{lemma}[Parity flips only the polar part]\label{lem:parity-mix}
Let
\[
p:=A_k^\top v_{i,:}(X)\in\mathbb{R}^3,\qquad q:=B_k^\top a_{i,:}(X)\in\mathbb{R}^3.
\]
Then
\[
\widetilde v_{i,k}(X)=p+q,\qquad \widetilde v_{i,k}(-X)=-p+q.
\]
Consequently,
\begin{equation}\label{eq:normsq-diff}
\|\widetilde v_{i,k}(X)\|^2-\|\widetilde v_{i,k}(-X)\|^2
=\|p+q\|^2-\|-p+q\|^2
=4\langle p,q\rangle,
\end{equation}
and
\begin{equation}\label{eq:norm-diff-lower}
\Delta_{i,k}(X)
=
\frac{4|\langle p,q\rangle|}{\|p+q\|+\|q-p\|}
\ \ge\
\frac{2|\langle p,q\rangle|}{\|p\|+\|q\|}.
\end{equation}
\end{lemma}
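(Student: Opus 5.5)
The plan is to first pin down the two mixed vectors from their definitions and the parity rules, and then reduce the norm difference to an inner product by purely algebraic means. For the first step, recall the definition \eqref{eq:def-vtilde}, $\widetilde v_{i,k}(X)=A_k^\top v_{i,:}(X)+B_k^\top a_{i,:}(X)=p+q$. Apply the parity relations \eqref{eq:parity-polar-axial} channel by channel: each $v_{i,\ell}(-X)=-v_{i,\ell}(X)$, while each $a_{i,\ell}(-X)=a_{i,\ell}(X)$. The map $v_{i,:}\mapsto A_k^\top v_{i,:}$ is linear, so the first term becomes $-p$, and the second term is unchanged. This gives $\widetilde v_{i,k}(-X)=-p+q$.

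Next, for \eqref{eq:normsq-diff}, expand both squared norms with the polarization identity:
\[
\|p+q\|^2=\|p\|^2+2\langle p,q\rangle+\|q\|^2,\qquad \|q-p\|^2=\|p\|^2-2\langle p,q\rangle+\|q\|^2 .
\]
Subtracting, the $\|p\|^2$ and $\|q\|^2$ terms cancel and only $4\langle p,q\rangle$ remains.

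For the equality in \eqref{eq:norm-diff-lower}, use the difference-of-squares factorization $|x-y|=|x^2-y^2|/(x+y)$ with $x=\|p+q\|$ and $y=\|q-p\|$, valid when $x+y>0$. The degenerate case $x+y=0$ forces $p=q=0$. In that case both sides vanish, provided we read the quotient as $0$, so I will treat it separately or just remark on it.

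For the inequality, the triangle inequality gives $\|p+q\|\le\|p\|+\|q\|$ and $\|q-p\|\le\|p\|+\|q\|$. Hence the denominator is at most $2(\|p\|+\|q\|)$, which yields the lower bound $2|\langle p,q\rangle|/(\|p\|+\|q\|)$.

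There is no real obstacle here. The only care needed is the zero-denominator convention and stating clearly that linearity of the channel mixing is what carries the sign flip from $v$ to $p$.
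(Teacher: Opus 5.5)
Your proposal is correct and matches the paper's proof step for step: the parity rule combined with linearity of the channel mixing, expansion of the two squared norms, the identity $|x-y|=|x^2-y^2|/(x+y)$, and the triangle inequality applied to the denominator. Your separate treatment of the degenerate case $p=q=0$, where the quotient has to be read as $0$, is slightly more careful than the paper. The paper invokes that identity for all $a,b\ge 0$, which implicitly needs $a+b>0$.
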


\begin{proof}
The parity rule \eqref{eq:parity-polar-axial} implies
\[
A_k^\top v_{i,:}(-X)=\sum_{\ell=1}^K (A_k)_\ell\,v_{i,\ell}(-X)
=\sum_{\ell=1}^K (A_k)_\ell\,(-v_{i,\ell}(X))=-p,
\]
while
\[
B_k^\top a_{i,:}(-X)=\sum_{\ell=1}^K (B_k)_\ell\,a_{i,\ell}(-X)
=\sum_{\ell=1}^K (B_k)_\ell\,a_{i,\ell}(X)=q.
\]
Thus $\widetilde v_{i,k}(-X)=-p+q$ and $\widetilde v_{i,k}(X)=p+q$.

To prove \eqref{eq:normsq-diff}, expand both squares:
\[
\|p+q\|^2=\|p\|^2+\|q\|^2+2\langle p,q\rangle,
\qquad
\|-p+q\|^2=\|p\|^2+\|q\|^2-2\langle p,q\rangle.
\]
Subtracting gives $\|p+q\|^2-\|-p+q\|^2=4\langle p,q\rangle$.

For \eqref{eq:norm-diff-lower}, use the identity $|a-b|=\frac{|a^2-b^2|}{a+b}$ for $a,b\ge 0$:
\[
\Delta_{i,k}(X)=\big|\|p+q\|-\|q-p\|\big|
=\frac{\big|\|p+q\|^2-\|q-p\|^2\big|}{\|p+q\|+\|q-p\|}
=\frac{4|\langle p,q\rangle|}{\|p+q\|+\|q-p\|}.
\]
Finally, by triangle inequality,
\[
\|p+q\|\le \|p\|+\|q\|,\qquad \|q-p\|\le \|p\|+\|q\|,
\]
hence $\|p+q\|+\|q-p\|\le 2(\|p\|+\|q\|)$, which yields
\[
\Delta_{i,k}(X)\ge \frac{4|\langle p,q\rangle|}{2(\|p\|+\|q\|)}
=\frac{2|\langle p,q\rangle|}{\|p\|+\|q\|}.
\]
\end{proof}

% ------------------------------------------------------------
% Lemma 2: small-ball bound for A^T C B
% ------------------------------------------------------------
\begin{lemma}[Small-ball bound for a Gaussian bilinear form]\label{lem:smallball}
Let $A\sim\mathcal{N}(0,\sigma_A^2 I_K)$ and $B\sim\mathcal{N}(0,\sigma_B^2 I_K)$ be independent, and let $C\in\mathbb{R}^{K\times K}$ be fixed.
Define the bilinear form
\begin{equation}\label{eq:def-S}
    S:=A^\top C B.  
\end{equation}
Then for any $\varepsilon\in(0,1)$,
\begin{equation}\label{eq:smallball-main}
\mathbb{P}\Big(|S|\le \varepsilon\,\sigma_A\sigma_B\,\|C\|_F\Big)
\ \le\ \frac{2}{\sqrt{\pi}}\varepsilon\ +\ 2\exp\big(-c\,r_{\mathrm{eff}}(C)\big),
\end{equation}
where $c>0$ is an absolute constant and $r_{\mathrm{eff}}(C)=\|C\|_F^2/\|C\|_{\mathrm{op}}^2$.
\end{lemma}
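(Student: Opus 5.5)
Conditioning on $B$ is the natural first move. Given $B$, the quantity $S = A^\top (CB)$ is a centered Gaussian with variance $\sigma_A^2\|CB\|^2$. The Gaussian density is bounded by $1/(\sqrt{2\pi}\,\mathrm{sd})$, which gives the anti-concentration bound
\[
\mathbb{P}\big(|S|\le t \,\big|\, B\big)\le \frac{2t}{\sqrt{2\pi}\,\sigma_A\|CB\|}.
\]
This holds for every $t>0$ and every $B$ with $CB\neq 0$.

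The proof then splits on the event $E:=\{\|CB\|\ge \sigma_B\|C\|_F/\sqrt{2}\}$.

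\emph{On $E$.} Taking $t=\varepsilon\sigma_A\sigma_B\|C\|_F$ in the bound above gives a conditional probability of at most
\[
\frac{2\varepsilon\sqrt{2}}{\sqrt{2\pi}}=\frac{2}{\sqrt{\pi}}\,\varepsilon .
\]
This produces exactly the first term in the statement, which is what fixes the threshold $1/\sqrt{2}$.

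\emph{Off $E$.} Here I would simply bound the conditional probability by $1$. It then remains to control $\mathbb{P}(E^c)$.

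The probability $\mathbb{P}(E^c)$ is a lower-tail concentration for the Gaussian quadratic form $\|CB\|^2=B^\top C^\top C B$. Its mean is $\sigma_B^2\operatorname{tr}(C^\top C)=\sigma_B^2\|C\|_F^2$. The event $E^c$ says this form falls below half its mean, i.e.\ a deviation of $\tfrac12\sigma_B^2\|C\|_F^2$.

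I would apply the Hanson--Wright inequality with $M=C^\top C$. It bounds the probability of such a deviation by
\[
2\exp\Big(-c\min\Big(\tfrac{\|C\|_F^4}{\|C^\top C\|_F^2},\ \tfrac{\|C\|_F^2}{\|C^\top C\|_{\mathrm{op}}}\Big)\Big).
\]
To compare both terms with $r_{\mathrm{eff}}$, use $\|C^\top C\|_F^2=\sum_j s_j^4\le \|C\|_{\mathrm{op}}^2\|C\|_F^2$ and $\|C^\top C\|_{\mathrm{op}}=\|C\|_{\mathrm{op}}^2$. Then both terms in the minimum are at least $\|C\|_F^2/\|C\|_{\mathrm{op}}^2=r_{\mathrm{eff}}(C)$, so $\mathbb{P}(E^c)\le 2\exp(-c\,r_{\mathrm{eff}}(C))$.

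A Hanson--Wright-free alternative uses rotation invariance to diagonalize: $\|CB\|^2=\sigma_B^2\sum_j s_j^2 g_j^2$ with i.i.d.\ standard normals $g_j$. A Chernoff bound on the lower tail then gives $\mathbb{E}e^{-\lambda\sum s_j^2 g_j^2}=\prod_j(1+2\lambda s_j^2)^{-1/2}$. Optimizing $\lambda\asymp 1/s_{\max}^2$ and using $\log(1+x)\ge x-x^2/2$ yields an explicit $c$. This route is also cleaner in that a lower tail needs only one side.

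The main obstacle is this concentration step. The constants must be tracked so that the absolute constant $c$ is uniform and the prefactor is $2$. If $C=0$, the statement is vacuous because $r_{\mathrm{eff}}$ is undefined, so one can assume $C\neq 0$.

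Combining the two cases gives
\[
\mathbb{P}(|S|\le\varepsilon\sigma_A\sigma_B\|C\|_F)\le \tfrac{2}{\sqrt\pi}\varepsilon+\mathbb{P}(E^c),
\]
which is \eqref{eq:smallball-main}.
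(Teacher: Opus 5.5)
Your proposal is correct and follows essentially the same argument as the paper. The paper conditions on $A$ rather than $B$, so it applies Hanson--Wright to $CC^\top$ instead of $C^\top C$; otherwise it uses the same event with threshold $1/\sqrt{2}$, the same Gaussian small-ball bound giving $\frac{2}{\sqrt{\pi}}\varepsilon$, and the same comparison $\|M\|_F^2\le\|M\|_{\mathrm{op}}\mathrm{tr}(M)$ to reduce the Hanson--Wright exponent to $r_{\mathrm{eff}}(C)$. Your Chernoff alternative is a valid and more elementary replacement for Hanson--Wright in this one-sided lower tail: choosing $\lambda=1/(4\|C\|_{\mathrm{op}}^2)$ gives $\exp(-r_{\mathrm{eff}}(C)/16)$ with prefactor $1$, so it makes explicit the constant the paper leaves as absolute.
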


\begin{proof}
\textbf{Step 1 (Condition on $A$).}
Since $B$ is Gaussian and independent of $A$, then conditioned on $A$,
\[
S\mid A = A^\top C B \;\sim\;\mathcal{N}\big(0,\ \sigma_B^2\,\|C^\top A\|^2\big).
\]
Let $Z\sim\mathcal{N}(0,1)$. Then $S\mid A \sim \sigma_B\|C^\top A\|\,Z$ in distribution.
Hence for any $t>0$,
\begin{equation}\label{eq:gauss-smallball}
\mathbb{P}\big(|S|\le t\mid A\big)
=\mathbb{P}\Big(|Z|\le \frac{t}{\sigma_B\|C^\top A\|}\Big)
\le \sqrt{\frac{2}{\pi}}\ \frac{t}{\sigma_B\|C^\top A\|},
\end{equation}
where we used the standard bound for $Z\sim\mathcal{N}(0,1)$:
\(
\mathbb{P}(|Z|\le u)\le \sqrt{\frac{2}{\pi}}\,u
\) for $u\ge 0$.

\textbf{Step 2 (Lower bound $\|C^\top A\|$ via Hanson-Wright).}
Write $A=\sigma_A g$ with $g\sim\mathcal{N}(0,I_K)$, and set
\[
M:=CC^\top \succeq 0.
\]
Then
\begin{equation}\label{eq:CtA-quadratic}
\|C^\top A\|^2
=\sigma_A^2\,\|C^\top g\|^2
=\sigma_A^2\, g^\top (CC^\top)\,g
=\sigma_A^2\, g^\top M g.
\end{equation}
We apply the Hanson-Wright inequality in the Gaussian case (see \cite{vershynin_highdimensionalprobabilityintroduction_2018} Theorem~6.2.1).
In the notation of that theorem:
\[
X=g\in\mathbb{R}^K,\quad \text{(mean-zero, independent, sub-Gaussian coordinates with } \|X_i\|_{\psi_2}\lesssim 1\text{)},
\quad A_{\mathrm{HW}}=M.
\]
The theorem states that there exist absolute constants $c,C>0$ such that for all $t\ge 0$,
\begin{equation}\label{eq:HW}
\mathbb{P}\Big(\big|g^\top M g - \mathrm{tr}(M)\big|\ge t\Big)
\le
2\exp\Big(
-c\,\min\Big\{\frac{t^2}{\|M\|_F^2},\ \frac{t}{\|M\|_{\mathrm{op}}}\Big\}
\Big).
\end{equation}
Here
\[
\mathrm{tr}(M)=\mathrm{tr}(CC^\top)=\|C\|_F^2,\qquad
\|M\|_F=\|CC^\top\|_F,\qquad
\|M\|_{\mathrm{op}}=\|CC^\top\|_{\mathrm{op}}=\|C\|_{\mathrm{op}}^2.
\]
We take $t=\frac{1}{2}\mathrm{tr}(M)=\frac{1}{2}\|C\|_F^2$ in \eqref{eq:HW}. Then
\[
\mathbb{P}\Big(g^\top M g \le \tfrac{1}{2}\mathrm{tr}(M)\Big)
\le
\mathbb{P}\Big(|g^\top M g - \mathrm{tr}(M)|\ge \tfrac{1}{2}\mathrm{tr}(M)\Big)
\le
2\exp\Big(
-c\,\min\Big\{
\frac{\mathrm{tr}(M)^2}{4\|M\|_F^2},\ \frac{\mathrm{tr}(M)}{2\|M\|_{\mathrm{op}}}
\Big\}
\Big).
\]
Using $\|M\|_F^2\le \|M\|_{\mathrm{op}}\mathrm{tr}(M)$ (since $M\succeq 0$), we get
\[
    \frac{\mathrm{tr}(M)^2}{\|M\|_F^2}\ \ge\ \frac{\mathrm{tr}(M)}  {\|M\|_{\mathrm{op}}}.
\]
Hence the minimum is controlled by the second term, and we obtain
\begin{equation}\label{eq:HW-lower-tail}
\mathbb{P}\Big(g^\top M g \le \tfrac{1}{2}\|C\|_F^2\Big)
\le
2\exp\Big(-c'\,\frac{\|C\|_F^2}{\|C\|_{\mathrm{op}}^2}\Big)
=
2\exp\big(-c'\,r_{\mathrm{eff}}(C)\big),
\end{equation}
where $c' = c/2 >0$ is an absolute constant and $r_{\mathrm{eff}}(C):=\|C\|_F^2/\|C\|_{\mathrm{op}}^2$ is the effective rank.
Therefore, defining the event
\[
\mathcal{E}:=\Big\{\|C^\top A\|\ge \sigma_A\|C\|_F/\sqrt{2}\Big\},
\]
we have from \eqref{eq:CtA-quadratic} and \eqref{eq:HW-lower-tail} that
\begin{equation}\label{eq:event-E-HW}
\mathbb{P}(\mathcal{E}^c)\le 2\exp\big(-c'\,r_{\mathrm{eff}}(C)\big).
\end{equation}

\textbf{Step 3 (Standard conditional small-ball bound).}
Set $t:=\varepsilon\sigma_A\sigma_B\|C\|_F$.
On $\mathcal{E}$ we have $\|C^\top A\|\ge \sigma_A\|C\|_F/\sqrt{2}$, thus by \eqref{eq:gauss-smallball},
\[
\mathbb{P}\big(|S|\le t\mid A\big)
\le \sqrt{\frac{2}{\pi}}\ \frac{\varepsilon\sigma_A\sigma_B\|C\|_F}{\sigma_B(\sigma_A\|C\|_F/\sqrt{2})}
=\frac{2}{\sqrt{\pi}}\varepsilon.
\]
Therefore,
\[
\mathbb{P}(|S|\le t)
\le \mathbb{P}(|S|\le t,\mathcal{E})+\mathbb{P}(\mathcal{E}^c)
\le \frac{2}{\sqrt{\pi}}\varepsilon + 2\exp\big(-c\,r_{\mathrm{eff}}(C)\big).
\]
This is exactly \eqref{eq:smallball-main}.
\end{proof}

% ------------------------------------------------------------
% Lemma 3: Gaussian norm concentration (for denominator control)
% ------------------------------------------------------------
\begin{lemma}[Gaussian norm bound]\label{lem:gauss-norm}
If $G\sim\mathcal{N}(0,I_K)$, then for any $t\ge 0$,
\[
\mathbb{P}\big(\|G\|\ge \sqrt{K}+t\big)\le e^{-t^2/2}.
\]
In particular,
\[
\mathbb{P}\big(\|G\|\le 2\sqrt{K}\big)\ge 1-e^{-K/2}.
\]
\end{lemma}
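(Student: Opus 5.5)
The bound is a Gaussian concentration estimate for the Euclidean norm, so I would use the Gaussian concentration inequality for Lipschitz functions. The map $x\mapsto\|x\|$ is $1$-Lipschitz on $\mathbb{R}^K$. For $G\sim\mathcal{N}(0,I_K)$, Gaussian concentration (Borell--Tsirelson--Ibragimov--Sudakov; see \cite{vershynin_highdimensionalprobabilityintroduction_2018}) then gives, for every $t\ge 0$,
\[
\mathbb{P}\big(\|G\|\ge \mathbb{E}\|G\|+t\big)\le e^{-t^2/2}.
\]
This is the one-sided form with the sharp constant $1/2$ in the exponent. It follows, for example, from the Gaussian log-Sobolev inequality via the Herbst argument, or from the Maurey--Pisier interpolation argument (which gives a slightly weaker constant).

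\textbf{Step 1 (control of the mean).} I would show $\mathbb{E}\|G\|\le\sqrt K$. By Jensen's inequality,
\[
\mathbb{E}\|G\|\le\big(\mathbb{E}\|G\|^2\big)^{1/2}=\sqrt{K}.
\]
Hence $\{\|G\|\ge\sqrt K+t\}\subseteq\{\|G\|\ge\mathbb{E}\|G\|+t\}$, and the first claim follows from the concentration inequality above.

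\textbf{Step 2 (the particular case).} Take $t=\sqrt K$ in the first claim:
\[
\mathbb{P}\big(\|G\|\ge 2\sqrt K\big)\le e^{-K/2}.
\]
Passing to the complement gives $\mathbb{P}(\|G\|<2\sqrt K)\ge 1-e^{-K/2}$, and a fortiori $\mathbb{P}(\|G\|\le 2\sqrt K)\ge 1-e^{-K/2}$.

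The only nonroutine ingredient is the Lipschitz concentration inequality with constant exactly $1/2$, and I would simply cite it. If a self-contained argument were wanted, one option is a Chernoff bound on the $\chi^2_K$ variable $\|G\|^2$, using the moment generating function $(1-2\lambda)^{-K/2}$. This needs care because the resulting Laurent--Massart form $\mathbb{P}(\|G\|^2\ge K+2\sqrt{Kx}+2x)\le e^{-x}$ gives $\sqrt K+\sqrt{2x}$ only up to constants. For that reason I would prefer the citation route.
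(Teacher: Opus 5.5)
Your proof is correct and follows the paper's own argument: Gaussian concentration for the $1$-Lipschitz map $g\mapsto\|g\|$, then $\mathbb{E}\|G\|\le\sqrt K$ (which you justify via Jensen, whereas the paper only asserts it), then $t=\sqrt K$. As a side note, your concern about the Laurent--Massart route is unfounded: since $K+2\sqrt{Kx}+2x\le(\sqrt K+\sqrt{2x})^2$, taking $x=t^2/2$ yields $\mathbb{P}(\|G\|\ge\sqrt K+t)\le e^{-t^2/2}$ exactly, so that self-contained route works with no loss in constants.
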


\begin{proof}
This is a standard Gaussian concentration bound for the Lipschitz function $g\mapsto \|g\|$ with Lipschitz constant $1$.
For completeness: by Gaussian isoperimetry (or concentration of measure),
\(
\mathbb{P}(\|G\|\ge \mathbb{E}\|G\|+t)\le e^{-t^2/2}.
\)
Since $\mathbb{E}\|G\|\le \sqrt{K}$, we get the displayed inequality.
Setting $t=\sqrt{K}$ yields $\mathbb{P}(\|G\|\le 2\sqrt{K})\ge 1-e^{-K/2}$.
\end{proof}

% ------------------------------------------------------------
% Main Theorem
% ------------------------------------------------------------
\begin{theorem}[Chirality awareness, discrepancy]\label{thm:crossfeat-sep}
    Fix a sample $X$, node $i$ and channel $k$.
    Assume \eqref{eq:def-vtilde} and the parity rule \eqref{eq:parity-polar-axial}.
    Under Assumptions~\ref{ass:energy}--\ref{ass:correlation}, define $C$ by \eqref{eq:def-C}.
    Let $\Delta_{i,k}(X)=\big|\|\widetilde v_{i,k}(X)\|-\|\widetilde v_{i,k}(-X)\|\big|$.
    
    Then for any $\varepsilon\in(0,1)$, with probability at least
    \[
        1-\Big(\frac{2}{\sqrt{\pi}}\varepsilon + 2e^{-c r_0} + 2e^{-K/2}\Big),
    \]
    we have the explicit lower bound
    \begin{equation}\label{eq:main-lower}
    \Delta_{i,k}(X)
    \ \ge\
    \frac{c_0\,\varepsilon\,\sigma_A\sigma_B\,\tau}{\sigma_A S_v+\sigma_B S_a},
    \end{equation}
    where $c_0>0$ is an absolute constant (one may take $c_0=\tfrac{1}{2}$ after absorbing constant factors).
    In particular, $\Delta_{i,k}(X)$ is bounded away from $0$ with high probability.
    
    Moreover, in the \emph{absence of AFI} (i.e., $B_k\equiv 0$ so that $q=0$), we have $\Delta_{i,k}(X)=0$ deterministically.

    In the informal version Theorem~\ref{thm:crossfeat-sep-informal}, the two constants are 
    \begin{align*}
        \delta_W(\varepsilon) = \frac{2}{\sqrt{\pi}}\varepsilon + 2e^{-c r_0} + 2e^{-K/2}, \qquad c_W = \frac{c_0\,\sigma_A\sigma_B\,\tau}{\sigma_A S_v+\sigma_B S_a}.
    \end{align*}
\end{theorem}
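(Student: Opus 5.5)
We combine Lemma~\ref{lem:parity-mix}, Lemma~\ref{lem:smallball} and Lemma~\ref{lem:gauss-norm} on a good event, with a union bound. By Lemma~\ref{lem:parity-mix}, writing $p=A_k^\top v_{i,:}(X)$ and $q=B_k^\top a_{i,:}(X)$, we have
\[
\Delta_{i,k}(X)\ \ge\ \frac{2|\langle p,q\rangle|}{\|p\|+\|q\|},
\]
so it suffices to lower bound the numerator and upper bound the denominator separately. Expanding the numerator gives
\[
\langle p,q\rangle=\sum_{\ell,m}(A_k)_\ell (B_k)_m\langle v_{i,\ell}(X),a_{i,m}(X)\rangle = A_k^\top C B_k ,
\]
with $C$ as in \eqref{eq:def-C}. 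This is exactly the Gaussian bilinear form $S$ of Lemma~\ref{lem:smallball}, since $A_k,B_k$ are independent Gaussians by Assumption~\ref{ass:random}.

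\textbf{Numerator.} Lemma~\ref{lem:smallball}, combined with $\|C\|_F\ge\tau$ and $r_{\mathrm{eff}}(C)\ge r_0$ from Assumption~\ref{ass:correlation}, gives
\[
|\langle p,q\rangle|\ >\ \varepsilon\sigma_A\sigma_B\|C\|_F\ \ge\ \varepsilon\sigma_A\sigma_B\tau
\]
outside an event of probability at most $\frac{2}{\sqrt\pi}\varepsilon+2e^{-cr_0}$. This uses monotonicity of $e^{-c r}$ in $r$.

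\textbf{Denominator.} By Cauchy--Schwarz applied coordinatewise,
\[
\|p\|=\Big\|\sum_\ell (A_k)_\ell v_{i,\ell}\Big\|\le \|A_k\|\,\|v_{i,:}\|_F\le \|A_k\|S_v ,
\]
using Assumption~\ref{ass:energy}, and similarly $\|q\|\le\|B_k\|S_a$. Writing $A_k=\sigma_A G$ and $B_k=\sigma_B G'$ with $G,G'$ standard Gaussian, Lemma~\ref{lem:gauss-norm} gives $\|A_k\|\le2\sigma_A\sqrt K$ and $\|B_k\|\le 2\sigma_B\sqrt K$, each failing with probability at most $e^{-K/2}$. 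Hence
\[
\|p\|+\|q\|\le 2\sqrt K(\sigma_AS_v+\sigma_BS_a)
\]
outside probability $2e^{-K/2}$.

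\textbf{Combining.} A union bound over the three bad events gives the stated failure probability $\frac{2}{\sqrt{\pi}}\varepsilon + 2e^{-c r_0} + 2e^{-K/2}$. On the good event,
\[
\Delta_{i,k}(X)\ \ge\ \frac{\varepsilon\sigma_A\sigma_B\tau}{\sqrt K(\sigma_AS_v+\sigma_BS_a)} .
\]
\textbf{Main obstacle.} This bound carries a factor $1/\sqrt K$, whereas the statement has an absolute constant $c_0$. To remove it, I would normalize the small-ball threshold differently. Plan A is to check whether the boundedness in Assumption~\ref{ass:energy} already scales with $K$. Failing that, Plan B is to absorb $\sqrt K$ into $c_0$ by reading the constant as depending on $K$, so $c_0=\tfrac{1}{2\sqrt K}$. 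This is the step whose exact constant bookkeeping I am unsure of, and I would make the $K$-dependence explicit rather than hide it.

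\textbf{Absence of AFI.} If $B_k\equiv0$, then $q=0$, so $\widetilde v_{i,k}(-X)=-p$ and the two norms coincide, giving $\Delta_{i,k}(X)=0$ deterministically. Finally, the constants $\delta_W(\varepsilon)$ and $c_W$ of Theorem~\ref{thm:crossfeat-sep-informal} are read off from the failure probability and the lower bound above.
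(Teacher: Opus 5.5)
Your proposal follows the paper's proof step for step: the parity lemma reduces $\Delta_{i,k}(X)$ to $2|\langle p,q\rangle|/(\|p\|+\|q\|)$, the numerator is the bilinear form $A_k^\top C B_k$ controlled by Lemma~\ref{lem:smallball}, the denominator is bounded by Cauchy--Schwarz together with Lemma~\ref{lem:gauss-norm}, a union bound gives the stated failure probability, and the $B_k\equiv 0$ case is immediate. You also end with exactly the paper's final bound $\varepsilon\sigma_A\sigma_B\tau/\bigl(\sqrt K(\sigma_A S_v+\sigma_B S_a)\bigr)$, and the paper handles the leftover $\sqrt K$ the way your Plan B does (absorb it into $c_0$ or keep it explicit), so this is an imprecision in the statement, not in your argument; Plan A would not help, because the $\sqrt K$ comes from bounding $\|A_k\|,\|B_k\|$ at failure level $e^{-K/2}$, not from $S_v,S_a$.
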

\begin{proof}
\textbf{Step 1 (Reduce norm separation to an inner product).}
By Lemma~\ref{lem:parity-mix}, for $p=A_k^\top v_{i,:}(X)$ and $q=B_k^\top a_{i,:}(X)$,
\begin{equation}\label{eq:Delta-lower-1}
\Delta_{i,k}(X)\ \ge\ \frac{2|\langle p,q\rangle|}{\|p\|+\|q\|}.
\end{equation}

\textbf{Step 2 (Express $\langle p,q\rangle$ as a bilinear form).}
Using linearity and the definition of $C$ in \eqref{eq:def-C},
\[
\langle p,q\rangle
=
\Big\langle \sum_{\ell=1}^K (A_k)_\ell v_{i,\ell}(X),\ \sum_{m=1}^K (B_k)_m a_{i,m}(X)\Big\rangle
=
\sum_{\ell=1}^K\sum_{m=1}^K (A_k)_\ell (B_k)_m \langle v_{i,\ell}(X),a_{i,m}(X)\rangle
=
A_k^\top C B_k.
\]
Therefore, if we set $S:=A_k^\top C B_k$,
\begin{equation}\label{eq:inner-as-S}
|\langle p,q\rangle|=|S|.
\end{equation}

\textbf{Step 3 (Lower bound $|S|$ with high probability).}
Apply Lemma~\ref{lem:smallball} to $S=A_k^\top C B_k$.
Since $r_{\mathrm{eff}}(C)\ge r_0$ by Assumption~\ref{ass:correlation},
for any $\varepsilon\in(0,1)$, with probability at least
\begin{equation}\label{eq:prob-S-lower}
1-\Big(\frac{2}{\sqrt{\pi}}\varepsilon + 2e^{-c r_0}\Big),
\end{equation}
we have
\begin{equation}\label{eq:S-lower}
|S|
\ \ge\
\varepsilon\,\sigma_A\sigma_B\,\|C\|_F
\ \ge\
\varepsilon\,\sigma_A\sigma_B\,\tau,
\end{equation}
where the second inequality uses $\|C\|_F\ge\tau$.

\textbf{Step 4 (Upper bound the denominator $\|p\|+\|q\|$ with high probability).}
Write $A_k=\sigma_A G_A$ and $B_k=\sigma_B G_B$ with $G_A,G_B\sim\mathcal{N}(0,I_K)$ independent.
First, by Cauchy--Schwarz,
\[
\|p\|
=
\Big\|\sum_{\ell=1}^K (A_k)_\ell v_{i,\ell}(X)\Big\|
\le
\sum_{\ell=1}^K |(A_k)_\ell|\,\|v_{i,\ell}(X)\|
\le
\Big(\sum_{\ell=1}^K (A_k)_\ell^2\Big)^{1/2}
\Big(\sum_{\ell=1}^K \|v_{i,\ell}(X)\|^2\Big)^{1/2}
=
\|A_k\|\,\|v_{i,:}(X)\|_F.
\]
Similarly,
\[
\|q\|\le \|B_k\|\,\|a_{i,:}(X)\|_F.
\]
Hence,
\begin{equation}\label{eq:denom-upper-raw}
\|p\|+\|q\|\ \le\ \|A_k\|\,\|v_{i,:}(X)\|_F+\|B_k\|\,\|a_{i,:}(X)\|_F.
\end{equation}
Now use Assumption~\ref{ass:energy} to bound $\|v_{i,:}(X)\|_F\le S_v$ and $\|a_{i,:}(X)\|_F\le S_a$:
\[
\|p\|+\|q\|\le \|A_k\|\,S_v+\|B_k\|\,S_a.
\]
Next, apply Lemma~\ref{lem:gauss-norm} to $G_A$ and $G_B$:
with probability at least $1-2e^{-K/2}$,
\[
\|G_A\|\le 2\sqrt{K}\quad\text{and}\quad \|G_B\|\le 2\sqrt{K}.
\]
On this event,
\[
\|A_k\|=\sigma_A\|G_A\|\le 2\sigma_A\sqrt{K},
\qquad
\|B_k\|=\sigma_B\|G_B\|\le 2\sigma_B\sqrt{K},
\]
and thus
\begin{equation}\label{eq:denom-upper}
\|p\|+\|q\|
\le
2\sqrt{K}\,(\sigma_A S_v+\sigma_B S_a).
\end{equation}
(If one prefers to remove the factor $\sqrt{K}$, one may instead normalize $A_k,B_k$ at initialization; we keep the explicit dependence here.)

\textbf{Step 5 (Combine bounds).}
Intersect the event \eqref{eq:S-lower} and the event \eqref{eq:denom-upper}.
By a union bound, this intersection holds with probability at least
\[
    1-\Big(\frac{2}{\sqrt{\pi}}\varepsilon + 2e^{-c r_0} + 2e^{-K/2}\Big).
\]
On this intersection, substitute \eqref{eq:S-lower} into \eqref{eq:Delta-lower-1} and then use \eqref{eq:denom-upper}:
\[
    \Delta_{i,k}(X)
    \ge \frac{2|S|}{\|p\|+\|q\|}
    \ge
    \frac{2\cdot \varepsilon\sigma_A\sigma_B\tau}{2\sqrt{K}(\sigma_A S_v+\sigma_B S_a)}
    =
    \frac{\varepsilon\sigma_A\sigma_B\tau}{\sqrt{K}(\sigma_A S_v+\sigma_B S_a)}.
\]
This proves \eqref{eq:main-lower} up to an absolute constant $c_0$ (absorbing $\sqrt{K}$ if one uses the normalized convention for $A_k,B_k$; otherwise keep the explicit $\sqrt{K}$ factor as above).

\textbf{Step 6 (No AFI implies no separation).}
If $B_k\equiv 0$, then $q\equiv 0$ and $\widetilde v_{i,k}(X)=p$, $\widetilde v_{i,k}(-X)=-p$.
Thus $\|\widetilde v_{i,k}(X)\|=\|\widetilde v_{i,k}(-X)\|$ and $\Delta_{i,k}(X)=0$ deterministically.
\end{proof}

% ------------------------------------------------------------
% link to scalar MLP input/output in the encoder
% ------------------------------------------------------------

\begin{corollary}[Chirality awareness, formal]\label{cor:phi-sep}
Assume the latter 2-layer MLP is $\varphi$ that takes the concatenation
\[
    \mathrm{scaler}(X) = [\,H(X),\ \|\widetilde v_{i,k}(X)\|\,],   
\]
as input where $H(X)$ is reflection-invariant (depends only on types, distances, dot products, etc.).
Assume $\varphi$ is $\mu$-coercivity in its second argument, i.e.
\[
    \|\varphi\big (\mathrm{scaler}(X) \big)-\varphi \big(\mathrm{scaler}(-X) \big)\|
    \ \ge\
    \mu\cdot \Delta_{i,k}(X),
\]
and in particular Theorem~\ref{thm:crossfeat-sep} yields a high-probability lower bound on the scalar-output discrepancy.
The two constants in \ref{thm:crossfeat-sep-informal} are
\begin{align*}
    \delta_W = \Big(\frac{2}{\sqrt{\pi}}\varepsilon + 2e^{-c r_0} + 2e^{-K/2}\Big), \qquad c_W = \frac{c_0\,\mu\,\sigma_A\sigma_B\,\tau}{\sigma_A S_v+\sigma_B S_a}.
\end{align*}
\end{corollary}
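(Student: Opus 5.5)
The corollary follows by composing the coercivity hypothesis on $\varphi$ with the high-probability bound of Theorem~\ref{thm:crossfeat-sep}. First we record how the MLP input behaves under inversion. Since $H(X)$ is built only from reflection-invariant quantities (types, distances, dot products), we have $H(-X)=H(X)$. Hence the two inputs
\[
\mathrm{scaler}(X)=[\,H(X),\ \|\widetilde v_{i,k}(X)\|\,],\qquad \mathrm{scaler}(-X)=[\,H(X),\ \|\widetilde v_{i,k}(-X)\|\,]
\]
differ only in the second argument. The size of that difference is exactly $\Delta_{i,k}(X)$. This is the setting in which the $\mu$-coercivity assumption applies, and it gives deterministically
\[
\|c(X)-c(-X)\| = \|\varphi(\mathrm{scaler}(X))-\varphi(\mathrm{scaler}(-X))\| \ \ge\ \mu\,\Delta_{i,k}(X).
\]

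Next we invoke Theorem~\ref{thm:crossfeat-sep} on the same randomness of the mixing coefficients $(A_k,B_k)$ from Assumption~\ref{ass:random}. For any $\varepsilon\in(0,1)$, outside an event of probability at most $\frac{2}{\sqrt{\pi}}\varepsilon+2e^{-cr_0}+2e^{-K/2}=\delta_W(\varepsilon)$, we have
\[
\Delta_{i,k}(X)\ge \frac{c_0\,\varepsilon\,\sigma_A\sigma_B\tau}{\sigma_A S_v+\sigma_B S_a}.
\]
On the complementary event we multiply by $\mu$ and obtain $\|c(X)-c(-X)\|\ge c_W\varepsilon$ with $c_W=\mu c_0\sigma_A\sigma_B\tau/(\sigma_AS_v+\sigma_BS_a)$. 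This is exactly the informal Theorem~\ref{thm:crossfeat-sep-informal} with the stated constants. No union bound beyond the one inside the theorem is needed, because the coercivity step is deterministic.

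The only delicate point is bookkeeping, not analysis. The constant $c_0$ in Theorem~\ref{thm:crossfeat-sep} absorbs a $1/\sqrt{K}$ factor unless $A_k,B_k$ are normalized, so we carry the same convention through. We also check that the vector norm entering $\varphi$ is the norm for the same fixed $(i,k)$ to which the theorem is applied. If $\varphi$ instead takes all channels at once, we pick any single $(i,k)$ and note that coercivity in that coordinate suffices, since the remaining coordinates only contribute nonnegatively to the input difference under the assumed coercivity.
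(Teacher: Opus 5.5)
Your proposal is correct and follows the paper's argument: $H(-X)=H(X)$, so $\mathrm{scaler}(X)$ and $\mathrm{scaler}(-X)$ differ only in the norm coordinate, the coercivity hypothesis gives $\|c(X)-c(-X)\|\ge\mu\,\Delta_{i,k}(X)$, and Theorem~\ref{thm:crossfeat-sep} (with the same $\sqrt{K}$/$c_0$ caveat you note) then yields the bound with $c_W$ scaled by $\mu$. Your closing remark about a $\varphi$ that reads all channels is not needed for the stated corollary. As justified it is also loose: if several norm coordinates change at once, coercivity in one coordinate alone does not rule out cancellations, so you would need the hypothesis stated directly on the full inputs, as the corollary does.
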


\begin{proof}
Since $H(-X)=H(X)$ by parity invariance, the only change in the input to $\varphi$ comes from the norm term.
Applying the coercivity property in that coordinate gives the corollary.
\end{proof}

\begin{remark}
    The coercivity assumption is a non-degenerate assumption on $\varphi$. It holds for the general Network parameters. One can analyze its derivative and form concentration arguments as above to show that this non-degeneracy is generic. 
\end{remark}

%%%%%%%%%%%%%%%%%%%%%%%%%%%%%%%%%%%%%%%%%%%%%%%%%%%%%%%%%%%%%%%%%%%%%%%%%

\subsection{Discussion on inversion and other orthogonal transforms}\label{app:discussion-on-inversion}
\begin{lemma}
\label{lem:reflection-reduction}
Let \(F\in O(3)\setminus SO(3)\) be any improper orthogonal transform, including a mirror reflection. Set \(P=-I_3\) and \(Q=FP\in SO(3)\), so that \(F=QP\). Assume the polar and axial channels satisfy \(SO(3)\)-equivariance and the inversion parity rule used in Appendix~\ref{app:discrepancy}. Then the discrepancy induced by \(F\) is identical to the discrepancy induced by spatial inversion \(P\). Consequently, Theorem~B.8 extends verbatim from \(P=-I_3\) to any \(F\in O(3)\setminus SO(3)\).
\end{lemma}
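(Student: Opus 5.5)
The plan is to factor the improper transform as $F=QP$ with $Q=FP\in SO(3)$. Since $\det F=-1$ and $\det P=-1$, we get $\det Q=1$. Because $P=-I_3$ commutes with every matrix, we also have $F=PQ$. We then track how each type of channel transforms under $F$ by first applying $Q$ and then $P$. For polar channels, $SO(3)$-equivariance gives $v_{i,k}(QX)=Qv_{i,k}(X)$, and the parity rule \eqref{eq:parity-polar-axial} then gives
\[
v_{i,k}(FX)=v_{i,k}(P(QX))=-Qv_{i,k}(X).
\]
For axial channels, the defining law $a(Rx)=\det(R)Ra(x)$ gives $a_{i,k}(FX)=\det(F)Fa_{i,k}(X)=-QPa_{i,k}(X)=Qa_{i,k}(X)$. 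Equivalently, $a_{i,k}(FX)=a_{i,k}(PQX)=a_{i,k}(QX)=Qa_{i,k}(X)$. The scalar features are $O(3)$-invariant, so $H(FX)=H(X)$.

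Next we substitute these into the mixed feature \eqref{eq:def-vtilde}. Because the mixing coefficients $A_k,B_k$ act only on the channel index and commute with the spatial action, we obtain
\[
\widetilde v_{i,k}(FX)=-Q\,A_k^\top v_{i,:}(X)+Q\,B_k^\top a_{i,:}(X)=Q\,\widetilde v_{i,k}(-X).
\]
Here we use the notation $p,q$ of Lemma~\ref{lem:parity-mix}, so that $\widetilde v_{i,k}(-X)=-p+q$. Since $Q$ is orthogonal, $\|\widetilde v_{i,k}(FX)\|=\|\widetilde v_{i,k}(-X)\|$. Hence $\big|\|\widetilde v_{i,k}(X)\|-\|\widetilde v_{i,k}(FX)\|\big|=\Delta_{i,k}(X)$ exactly. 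The scalar input $[H,\|\widetilde v_{i,k}\|]$ to $\varphi$ is therefore the same for $FX$ and $-X$, so $c(FX)=c(-X)$. It follows that every quantity appearing in Theorem~\ref{thm:crossfeat-sep} and Corollary~\ref{cor:phi-sep} is unchanged. These quantities are the correlation matrix $C$, the bounds $S_v,S_a$, the event structure, and the final lower bound. The theorem therefore transfers verbatim.

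The main point needing care is justifying the "apply $Q$, then $P$" step at the level of samples. The parity rule is stated for a fixed sample $X$, so we apply it to the rotated sample $QX$, not to $X$ itself. We also need to check that Assumptions~\ref{ass:energy}--\ref{ass:correlation} still hold for $QX$. This follows because $C_{k\ell}(QX)=\langle Qv_{i,k},Qa_{i,\ell}\rangle=C_{k\ell}(X)$, and the Frobenius norms are likewise rotation-invariant. A second, minor point is translations. If $F$ is taken as an affine improper isometry, we would note as in Appendix~\ref{app:equivariance-proof} that the features are built from coordinate differences, so only the linear part matters.
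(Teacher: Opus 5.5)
Your proposal is correct and follows essentially the same argument as the paper: factor $F=QP$, derive $v(FX)=-Qv(X)$ and $a(FX)=Qa(X)$, so that $\widetilde v_{i,k}(FX)=Q(-p+q)$ has the same norm as $\widetilde v_{i,k}(-X)$ and the $F$-discrepancy equals $\Delta_{i,k}(X)$. The paper applies equivariance at the sample $PX$ rather than parity at $QX$, which is immaterial since $P$ commutes with $Q$; your check of the assumptions at $QX$ is harmless but unnecessary, because the theorem is only ever invoked at $X$ itself.
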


\begin{proof}
Since \(\det(F)=-1\) and \(\det(P)=-1\), we have \(Q=FP\in SO(3)\) and \(F=QP\). By \(SO(3)\)-equivariance and the inversion parity rule,
\begin{align}
    v(FX)=v(QPX)=Qv(PX)=-Qv(X),
    \qquad
    a(FX)=a(QPX) =Qa(PX)=Qa(X).
\end{align}
Thus, for the AFI mixed feature \(\widetilde v=A^\top v+B^\top a\), writing \(p=A^\top v(X)\) and \(q=B^\top a(X)\), we get
\begin{align}
    \widetilde v(FX)=
    Q(-p+q),\qquad
    \|\widetilde v(FX)\|=\|-p+q\|,
\end{align}
because \(Q\) is orthogonal. On the other hand, spatial inversion gives \(\widetilde v(PX)=-p+q\). Therefore,
\begin{align}
    \bigl|\|\widetilde v(X)\|-\|\widetilde v(FX)\|\bigr|
    =
    \bigl|\|p+q\|-\|-p+q\|\bigr|
    =
    \bigl|\|\widetilde v(X)\|-\|\widetilde v(PX)\|\bigr|.
\end{align}
Hence the improper-transform discrepancy is exactly the same as the inversion discrepancy, so the lower bound in Theorem~\ref{thm:crossfeat-sep} applies unchanged.
\end{proof}

%%%%%%%%%%%%%%%%%%%%%%%%%%%%%%%%%%%%%%%%%%%%%%%%%%%%%%%%%%%%%%%%%%%%%%%%%

\subsection{Proof of diffusion stability}
\label{app:proof_diffusion}

\begin{proof}[Proof of Theorem~\ref{thm:diffusion_stability}]
Let $Z_t,Z'_t$ solve \eqref{eq:sde} with the same Brownian motion and the same initial point $Z_T = Z'_T$ with different conditions $c,c'$.
\begin{align}
    dZ_t &= b_\theta(Z_t,t,c)\,dt + \sigma(t)\,dW_t, \\
    dZ'_t &= b_\theta(Z'_t,t,c')\,dt + \sigma(t)\,dW_t.
\end{align}
Set $\Delta_t:=Z_t-Z'_t$. Subtraction cancels the noise under coupling, giving
\[
d\Delta_t = \big(b_\theta(Z_t,t,c)-b_\theta(Z'_t,t,c')\big)\,dt.
\]
Note that for the Euclidean norm, we have $\frac{d}{dt}\|x\| \le \|\frac{d}{dt}x\|$ by Cauchy Schwarz inequality. Using Assumption~\ref{assump:Lip_drift},
\begin{align}
    \frac{d}{dt}\|\Delta_t\| \le \left\| \frac{d}{dt} \Delta_t  \right\| \le \| b_\theta(Z_t,t,c)-b_\theta(Z'_t,t,c')  \| \le L_z\|\Delta_t\|+L_c\|c-c'\|.
\end{align}

Since $Z_T=Z'_T$, we have $\Delta_T=0$.
Applying Gr\"onwall inequality yields
\[
    \|\Delta_0\|
    \le
    L_c\|c-c'\|\int_0^T e^{L_z s}\,ds
    =
    \frac{L_c}{L_z}(e^{L_zT}-1)\|c-c'\| := K_{\mathrm{diff}}\|c-c'\|.
\]
This provides an explicit coupling $(Z_0,Z'_0)$ of $(\mu_c,\mu_{c'})$ such that $\|Z_0-Z'_0\|=\|\Delta_0\|\le K_{\mathrm{diff}}\|c-c'\|$ holds almost surely. By the definition of the Wasserstein metric,
\begin{align}
    W_2(\mu_c,\mu_{c'}) &= \inf_{\Gamma(Z_0,Z'_0)} \sqrt{\mathbb{E} \|Z_0 - Z'_0\|^2}  \\
    &= \inf_{\Gamma(Z_0,Z'_0)} \sqrt{\mathbb{E} \|\Delta_0\|^2}  \le \frac{L_c}{L_z}(e^{L_zT}-1)\|c-c'\|
\end{align}
where we use a pointwise estimation of the above expectation. We can take $K_{\mathrm{diff}} = \frac{L_c}{L_z}(e^{L_zT}-1) > 0$.
\end{proof}

%%%%%%%%%%%%%%%%%%%%%%%%%%%%%%%%%%%%%%%%%%%%%%%%%%%%%%%%%%%%%%%%%%%%%%%%%

\subsection{Initialization: graph embedding}\label{app:graph-embedding-of-vec-feat}
We expand the graph embedding layer of vector features as \cite{jiao2024ept, kong2025unimomo}.
For each node $i$ and channel $k\in\{1,\dots,K\}$, define the edge vector feature
\begin{equation}
    Y_{ij,k}(X) := s_{ij,k}(X)\,(x_i-x_j)\in\mathbb{R}^3,
    \label{eq:Ydef}
\end{equation}
where $s_{ij,k}(X)\in\mathbb{R}$ is a scalar weight depending on the invariant feature (such as atom types) and the RBF of the distance information, and $x_i$ are the 3D coordinates of atom $i$. We assume the $m$ nearest neighbors of atom $i$ is $\mathcal{N}(i)$.
Then the initial vector feature after embedding of node $i$ and channel $k$ is a \emph{polar vector feature} 
\begin{equation}\label{eq:vdef}
    v_{i,k}(X) := \frac{1}{m}\sum_{j \in \mathcal{N}(i)} Y_{ij,k}(X)\in\mathbb{R}^3.
\end{equation}

%%%%%%%%%%%%%%%%%%%%%%%%%%%%%
%%%%%%%%%%%%%%%%%%%%%%%%%%%%%

\section{Experiment Details}

\subsection{Shape similarity between residue types}
To quantify shape similarity between amino acid pairs, we first generate 1,000 conformations per residue type using RDKit and minimize them with the MMFF94s force field. The resulting conformers are clustered with an RMSD threshold of 0.25~\AA, and one representative from each cluster is retained to remove redundancy. For residue types $i$ and $j$, we compute the pairwise shape Tanimoto similarity between every conformer of $i$ and every conformer of $j$, and apply max pooling to obtain a single similarity score for the pair. Concretely, each conformer pair is first aligned by RDKit O3A with shape-based scoring, after which we compute $\mathrm{ShapeTanimotoDist}$; the similarity is defined as $1-\mathrm{ShapeTanimotoDist}$.

\subsection{Test-Set Preprocessing and Pocket Definition}
Because the LNR dataset is curated from the PDB, some entries contain artifacts that can confound parsing and ligand-length determination, including terminal modifications (e.g., N-terminal acetylation and C-terminal amidation), alternative conformations encoded as multiple occupancies (e.g., AGLU/BGLU), and non-protein components such as small molecules, salts, and solvent. To standardize the inputs, we cleaned all structures using Rosetta’s \texttt{clean\_pdb.py} and performed manual inspection to ensure structural completeness and consistency. The resulting curated set, \texttt{LNR\_clean}, is used for all downstream evaluations.

Binding pockets within the receptors were defined using a CB distance threshold of \textless 10 Å. For Glycine residues, which lack a natural CB atom, a virtual CB was constructed following previously reported protocols. Since these virtual CB atoms are configured for L-amino acids, we synchronized the pocket residue IDs with those identified in the mirror-image complexes to ensure consistency. To ensure a fair comparison, the same pocket inputs were provided to all evaluated models, unless stated otherwise.

\subsection{Training}
We largely follow the training protocol of UniMoMo. Specifically, we use the same datasets for linear peptide design (PepBench and ProtFrag) and adopt an identical train/validation split as in UniMoMo. We apply the same two-stage pipeline—training a variational autoencoder (VAE) followed by a latent diffusion model (LDM)—and select the checkpoint with the lowest validation loss for downstream inference and evaluation. 
All experiments with different axial-vector choices share the same data split and training setup. Detailed hyperparameters are reported in Table~\ref{table:hyperparam-for-training}.

\begin{table}[t]
  \caption{Hyperparameters and settings for training PepMirror and its variants.}
  \label{table:hyperparam-for-training}
  \begin{center}
    \begin{small}
        \begin{tabular}{ccccc}
          \toprule
          Models & UniMoMo(pep.) & PepMirror(cross) & PepMirror(triple) & PepMirror(commutator) \\
          \midrule
          Optimizer & AdamW & AdamW & AdamW & AdamW \\
          $\beta_1$ & 0.9 & 0.9 & 0.9 & 0.9 \\
          $\beta_2$ & 0.999 & 0.999 & 0.999 & 0.999 \\
          LR & 1e-4 & 1e-4 & 1e-4 & 1e-4 \\
          %LR decay factor & 0.8 & 0.8 & 0.8 & 0.8 \\
          %LR scheduler patience /epochs & 5 & 5 & 5 & 5 \\
          GPU type & A800 & A800 & A800 & A800 \\
          Number of GPUs & 8 & 8 & 8 & 8 \\
          Days to train (VAE+LDM) & 2 & 2 & 2 & 2 \\
          Training epochs (VAE) & 199 & 169 & 179 & 169 \\
          Training epochs (LDM) & 375 & 307 & 483 & 448 \\
        \bottomrule
        \end{tabular}
    \end{small}
  \end{center}
  \vskip -0.1in
\end{table}

\subsection{Inference}
Generally, we generated 100 samples for each targets in the LNR test set with the same length as native peptide binders in cleaned complexes. The random seed was set to 12 for reproducing results. After generation, we reconstructed complexes with full-length receptors for downstream analysis. Some model-specific settings are listed below:

\textbf{RFDiffusion}. Among published checkpoints, we employed the \texttt{Complex\_base\_ckpt.pt} for binder design. We found that providing only pocket information leads to pronounced clashes during reconstruction. Therefore, we used full-length structures as input and specify binding sites via hotspots. Following the rule in the training process, we randomly selected 20\% pocket residues as hotspot residues for binding site specification, and the same hotspots were used for L and D design. After generating peptide backbones, we sample 1 sequence using ProteinMPNN with \texttt{v\_48\_020.pt} (48 edges with 0.20 Å noise). Following the procedure described in BindCraft, we set the sampling temperature to 0.0001 to make sampling nearly deterministic, yielding the top-ranked sequence according to the model. After integrating the designed sequence into the ligand, we manually added C$\beta$ atoms for all non--Gly residues to enforce the intended initial chirality. We then kept the receptor side-chain conformations fixed to those of the native receptor, and used PDBFixer and OpenMM to sample and optimize the ligand side-chain conformations. The resulting complexes were subsequently subjected to downstream evaluation.

\textbf{DiffPepBuilder}. Because this model relies on ESM embeddings as sequence representations, we provide full-length structures as input to ensure that the ESM encoding is well-defined and contextually consistent. 

\textbf{PPFlow}. We noticed that many designed ligands show severe backbone clashes with receptors, and there is a redock legacy in the released pipeline that heuristically aligns the generated ligand to the native ligand by matching their centroids. However, neither enabling this redocking step nor providing full-length receptor structures could reduce the clash rate. We therefore followed the default setting in the original implementation, using pocket-only inputs without redocking.

\textbf{PocketXMol}. The model generates structures containing non-standard amino acids since it does not group atoms into course-grained tokens. For fair comparison, we only evaluated peptides composed of only canonical residues out of 9,300 generated molecules.

\subsection{Implementation of Metrics}
\textbf{Chirality}. We calculate the chirality of residues based on the scalar triple product: $T = (\vec{N}-\vec{C}_A)\times(\vec{C}-\vec{C}_A)\cdot(\vec{C}_B-\vec{C}_A)$, where a positive $T$ indicates L and a negative $T$ indicates D. For each tested model, we report the proportion of residues that show desired chirality out of all residues, where glycines that are achiral are excluded. As mentioned below, generated structures were minimized before evaluation, and the desired chirality proportion before and after the minimization are both reported.

\textbf{Minimization}. We minimize the generated complexes using the Amber ff14SB force field. To preserve the generated geometry, we apply harmonic positional restraints to both the receptor and the ligand during minimization, preventing large deviations from the initial model and ensuring that the minimized structures remain representative for evaluation.

\textbf{Interface Energy}. We compute interface energies using AutoDock Vina in score-only mode. Following the UniMoMo definition of interface energy improvement (IMP) \cite{kong2025unimomo}, we report the proportion of targets for which at least one generated ligand achieves a lower vina score than the reference ligand.

\textbf{Binding Surface Recovery}. We define the binding surface as the set of receptor residues whose C$\alpha$ atoms are within $10,\text{\AA}$ of any ligand C$\alpha$ atom. Using the ground-truth complex as reference, we compute for each generated complex the recovery ratio as the fraction of native binding-surface residues that are also present in the generated binding surface.

\textbf{Diversity}. Generally, we define diversity as the number of clusters normalized by the total number of samples. For \textit{sequence diversity}, we cluster the 100 generated sequences for each target independently using complete-linkage hierarchical clustering. For \textit{structure diversity}, we first align complexes by the receptor, then compute pairwise ligand $\mathrm{C}_\alpha$ RMSDs, and perform complete-linkage clustering with a $2,\text{\AA}$ cutoff, again separately over the 100 designs generated for each target.

\subsection{Design of D-peptide binders against CD38}
\label{subsection:SI:design-cd38}

We used PDB 7DHA as the reference structure, where chain A is the receptor and chain B and C is the binder that define the binding surface. We sampled 5,000 peptides with the length range of [10,12] under a random seed of 12. These generated complexes were minimized under the Amber ff14sb forcefield, and were filtered and ranked based on the following metrics:
\textbf{Complementary}. We employed Rosetta to calculate the CavityVolume, ShapeComplemantary, ElectrostaticComplemantary (based on APBS), BuriedHBonds and ExposedHydrophobic, where the latter two are for evaluate the HydrophobicComplemantary.
\textbf{Interactions}. We employed AutoDock Vina to roughly check the binding energy of each interface, and used PLIP to analyze the interactions between targets and binders. In detail, we counted the total number of identified interactions, the number of hydrogen bonds, and the number of mainchain hydrogen bonds. Finally, we utilized FreeSASA to calculate absolute binding surface area (absBSA), and the ratio of absBSA in the SASA of ligands, termed relative BSA (relBSA).
\textbf{Conservation}. By the PLIP analysis mentioned above, we identified top10 residues of the receptor that participate the most interactions, as well as the proportions of every interaction type for each residue. These residues are termed "hotspots". Then, for each ligand, we checked how many hotspots it covers, and whether the interaction is the most seen type. For comparison, we reported the weighted coverage, where coverage $C = \sum_i{w_i\cdot{x_i}}$, where $x_i$ is a hotspot residue, and the weight $w_i=\frac{N(T_{ij})}{\sum_j{N(T_{ij})}}$, where $N(T_{ij})$ is the number of interactions of type j for hotspot i.

These metrics were devided into two classes. The first class is for filtering out structures that are not reasonable, where we applied thresholds as follows: absBSA $>$ 400, 0.20 $<$ relBSA $<$ 0.85, vina score $<$ -4.0, BuriedHBonds $<$ 10, ShapeComplemantary $>$ 0.65, ElectrostaticComplemantary $>$ 0.65, TotalInteraction $>$ 8, TotalHBond $>$ 3. In addition, we require CavityVolume and ExposedHydrophobic to fall within the lower 80\% of the distribution to exclude interfaces with large voids or excessive exposed hydrophobic patches. The second class is for enriching candidates with high possiblity of binding in the top, where we calculate the Zscore of the number of mainchain hydrogen bonds and the weighted hotspot coverage, and ranked structrues based on their sum. Finally, the top 6 candidates were subjected to downstream synthesis and analysis.

All peptides are chemically synthesized using the routine Fmoc solid phase peptide synthesis (SPPS) protocol, purified with high-performance liquid chromatography (HPLC), and lyophilized. The target protein CD38 is recombinantly expressed in HEK293F and is purified by affinity chromatography based on Ni-NTA and His-tag.

We then used biolayer interferometry (BLI) to assess binding. Briefly, peptides and target proteins were prepared in BLI buffer (50\,mM HEPES, 150\,mM NaCl, 0.5\% Tween-20, 0.05\,mg/mL BSA). Peptides were serially diluted from 200\,$\mu$M using a 3-fold scheme to obtain six concentrations, and the target protein was used at 20\,$\mu$g/mL. Binding kinetics were quantified by fitting association and dissociation traces to estimate $k_{\mathrm{on}}$ and $k_{\mathrm{off}}$, and $K_D$ was computed as $k_{\mathrm{off}}/k_{\mathrm{on}}$. Besides this kinetic estimation, we also performed steady-state analysis by fitting the equilibrium response at each concentration to a 1:1 binding isotherm,
\begin{equation}
\mathrm{Response} = R_{\max}\cdot \frac{\mathrm{conc}}{K_D+\mathrm{conc}} \, ,
\end{equation}
where $\mathrm{conc}$ denotes the peptide concentration and $R_{\max}$ is the maximal binding response.

For D-1412, the kinetic fitting yields $K_D = 9.9 \pm 0.2~\mu\mathrm{M}$, while the steady-state analysis gives $K_D = 10.5 \pm 0.6~\mu\mathrm{M}$, demonstrating good agreement between the two estimation procedures. For the other 11 candidates, 3 show weak responses but lack enough confidence to claim binding.

\end{document}